\documentclass[preprint,11pt,authoryear]{elsarticle}
\usepackage{enumitem} %control over list
\usepackage{ulem} % to underline
\usepackage{hyperref}
\makeatletter
\let\c@author\relax
\makeatother

\usepackage[a4paper, top=2cm, bottom=2.5cm, left=2.5cm, right=2.5cm]{geometry}

\usepackage{float}
\usepackage{adjustbox} 

\usepackage[onehalfspacing]{setspace}
\graphicspath{{figures/}}
\usepackage{amsthm}
\usepackage{xcolor}

\newtheorem{example}{Example}

\newtheorem{definition}{Definition}
\newtheorem{remark}{Remark}

\newtheorem{theorem}{Theorem}
\newtheorem{corollary}{Corollary}
\newtheorem{proposition}{Proposition}
\newtheorem{lemma}{Lemma}

\usepackage{amsmath,amssymb,amsfonts}
\usepackage[linesnumbered]{algorithm2e}

\usepackage{appendix}

\newcommand{\R}{\mathbb{R}}

\usepackage{url}

\newcommand{\quasistar}{cash-quasiconcavity} %at $k \in \mathbb{R}$

\newcommand{\viola}[1]{\textcolor{violet}{#1}}

\usepackage{cancel}
\usepackage[final]{changes} %
\usepackage[normalem]{ulem}

\usepackage{booktabs}

\begin{document}
\begin{frontmatter}
\title{{Ranking Metrics: Extending Acceptability and Performance Indices}}
%Metrics for Ranking Financial and Insurance Performance

\author[1]{Asmerilda Hitaj}
\ead{asmerilda.hitaj@uninsubria.it}

\author[2]{Elisa Mastrogiacomo}
\ead{elisa.mastrogiacomo@uninsubria.it}

\author[3]{Ilaria Peri}
\ead{I.Peri@bbk.ac.uk}

\author[4]{Marcelo Righi}
\ead{marcelo.righi@ufrgs.br}

%\affiliation[1]{organization={Birkbeck Business School, Birkbeck University of London},%Department and Organization
%            addressline={Malet St, Bloomsbury}, 
%            city={London},
%            postcode={WC1E 7HX}, 
 %           country={United Kingdom}}

 %\author[2]{xxx\corref{cor1}}
% \cortext[cor1]{Corresponding author}
%\ead{author}
% \affiliation[2]{organization={organization},
% addressline = {addressline},
%city = {city},
%postcode={postcode},
%country = {country}} 
            
\begin{abstract}
This paper develops an axiomatic framework for ranking metrics, a general class of functionals for evaluating and ordering financial or insurance positions. Unlike traditional risk-adjusted performance measures, such as the Sharpe ratio, RAROC, or Omega, that express reward per unit of risk, ranking metrics assign each position a performance level rather than a normalized return. Relying on monotonicity and a new property called cash-quasiconcavity, we derive representation results linking ranking metrics to families of acceptance sets and risk measures, extending the theory of acceptability indices. Classical ratios arise as special cases, while new examples -- based on expected-loss, $\Lambda$-quantile, and bibliometric indices -- illustrate the framework's flexibility. Empirical applications to portfolio ranking and climate-risk insurance demonstrate its practical relevance.
\end{abstract}

\begin{keyword}
acceptability indices; cash-quasiconcavity; $\Lambda$-VaR; portfolio performance
\end{keyword}

\end{frontmatter}

\section{Introduction}

Comparing the performance of alternative positions or institutions is a fundamental task in finance, insurance, and related fields.
Classical risk-adjusted performance measures, including the Risk-Adjusted Return on Capital (RAROC) \citep{zaik1996raroc}, the Omega ratio, \cite{keating2002introduction}, 
and the Gain–Loss Ratio (GLR), \cite{Bernardo2000}, 
provide alternative ways of assessing performance by expressing returns relative to a specific notion of risk, thereby enabling consistent comparison of investment alternatives. Despite their practical appeal, these indices remain conceptually fragmented, having arisen as heuristic tools reflecting specific interpretations of risk and reward.

A significant step toward theoretical unification was achieved by \cite{ChernyMadan2009} 
through the notion of acceptability indices, which assign each position an acceptability level corresponding to the smallest performance threshold under which the position is considered acceptable. This axiomatic framework embeds classical ratios within a broader and unified structure and shows how properties, such as monotonicity, follow naturally from a reward-to-risk interpretation. This approach was further extended by \cite{RosazzaSgarra2013}, who relaxed  the scale invariance property and developed a dynamic framework based on $g$-expectations. Subsequently, \cite{PeriThesis} and \cite{Frittelli2014SRM} introduced a minimal definition for acceptability indices as monotone and quasi-concave maps, providing a dual representation analogous to the risk measure framework in \cite{DrapeauKupper2013}.  Recent extensions 
relax the quasiconcavity property in favour of the more general star-shapedness \citep{Righi2024}. 

Although the acceptability index framework represents a major theoretical advance, its broader potential remains largely underexplored. Most studies continue to interpret performance as a normalized measure of reward or a monetary value, which limits applicability in settings such as regulatory stress testing, climate-risk assessment, or ESG benchmarking -- where evaluation focuses on meeting specific thresholds or ranking levels rather than monetary or risk-adjusted performance.

The main contribution of this paper is the introduction of a broader concept which we refer to as \textit{ranking metric}. A ranking metric is a functional that assigns to each random outcome a positive real number representing its rank, score, or degree of acceptability, without requiring a reward-to-risk interpretation. It is characterized by two properties: monotonicity, ensuring that better outcomes are never penalized, and cash–quasiconcavity (or quasiconcavity at constants), which requires that mixing a random outcome with a certain one cannot lower its rank below the minimum of its components -- an idea that resembles the concave counterpart of quasi-star-shapedness by \cite{Han2024}. This minimal and economically meaningful mixing condition generalizes the traditional quasiconcavity and star-shapedness assumptions, allowing for a broader class of ranking rules that extend beyond classical performance measures and acceptability indices. While mathematically these metrics correspond to \viola{quasi-concave-star-shaped} acceptability indices, we propose this shift in terminology to enhance interpretability. Under this framework, performance can be assessed through ordinal rankings or qualitative levels rather than being restricted to normalized rewards or monetary values, thereby accommodating policy-driven objectives that extend beyond purely financial magnitudes.
 %\footnote{Mathematically, this class of metrics generalizes star-shaped acceptability indices. We intentionally avoid the term quasi–star-shaped acceptability indices to highlight a conceptual shift — separating ranking from the traditional reward-to-risk interpretation. The ranking metric framework accommodates a diverse range of normalized reword, such as RAROC and GLR, monetary measures, such as certainty equivalent and quantiles, but including also  bibliometric-inspired metrics.}
%This normative framework allows performance rankings to be based on indicators that serve as ordinal measures reflecting also qualitative or policy-driven objectives rather than financial magnitudes.

The remainder of the paper is structured as follows. In Section \ref{Sec: Rank-metric}, we introduce the definition of ranking metric, study the property of cash-quasiconcavity, its decision theoretic justification  and explore its relationship with other established properties in the performance measurement literature. Our aim is also to clarify the interpretation of quasi-star-shapedness and star-shapedness, properties recently introduced into the theory of risk measures and acceptability indices \citep{Castagnoli2022,Han2024,Righi2024} and their link to cash-quasiconcavity. Then, in Section \ref{sec:rank-metric-charact}, we provide a unified axiomatic framework for ranking metrics. We show that any monotone, cash-quasiconcave ranking metric admits representations in terms of decreasing families of cash-convex acceptance sets and increasing families of cash-quasiconvex risk measures. This enables the construction of new performance measures that allow for deeper comparisons across portfolios, institutions, and contexts. In Section \ref{sec:Rank_Metr_exam}, we present several examples of ranking metrics, encompassing both classical reward-to-risk performance ratios (such as RAROC, GLR, and Omega) and newly constructed metrics generated by families of risk measures and representing monetary values -- including expected-loss and certainty-equivalent functionals and $\Lambda$-quantiles. 
%and eligible-asset risk measures. 
We also identify the conditions under which these ranking metrics satisfy cash-(sub or sup)-additivity properties. Particular attention is devoted to a novel class of ranking metrics inspired by bibliometric indices (e.g., the $h$-index and related measures), discussed in Subsection \ref{sub_biblio_index}. Finally, Section \ref{sec: emp_app} illustrates the framework through empirical analyses based on real data. Although the concept of a ranking metric is general and applicable across various domains, our focus is on financial performance and climate-risk resilience. The first application compares portfolio rankings obtained from different metrics, examining their interpretation and implications for practitioners. We show that identical datasets can yield markedly different orderings depending on the chosen metric, highlighting the economic relevance of selecting a ranking rule consistent with specific performance objectives or constraints. A complementary analysis compares European regions’ climate-risk-related loss, while a final study explores optimal portfolio allocation under alternative ranking metrics.

\section{Ranking metrics: basic definitions and related properties} \label{Sec: Rank-metric}
%\textcolor{blue}{Decide: assume if we want to take the monotonicity increasing or decreasing because from this depends how we consider. In the empirical part we need to consider two different $\alpha$. We need to highlight that $\lambda$VaR is very selective for low $\alpha$.}

Let $(\Omega, \mathcal{F}, \mathbb{P})$ be a probability space, and let $L^{\infty} := L^{\infty}(\Omega, \mathcal{F}, \mathbb{P})$
denote the space of $\mathcal{F}$-measurable random variables that are $\mathbb{P}$-almost surely ($\mathbb{P}$-a.s.) bounded. Each $X \in L^{\infty}$ represents the profit-and-loss of a risky financial position or the (negative) loss of an insurance contract. Unless stated otherwise, all equalities and inequalities are understood $\mathbb{P}$-almost surely. Throughout this paper,
%we will denote by $L^\infty$ a subset of $L^\infty$ and by $r$ a functional from $L^\infty$ with values in $[0,+\infty]$. We 
we will indicate  by $\text{supp}(r)$ the support of the functional $r: L^\infty \to [0, +\infty]$, which, given the non-negativity of $r$, is defined by
$\text{supp}(r) := \{ X \in L^\infty : r(X) > 0 \}.$ We will also adopt the convention $\sup \emptyset = 0$ and $\sup  \mathbb{R}_+=+\infty$. 

We propose a definition of a \textit{performance ranking metric} or simply \textit{ranking metric} that meets essential criteria for effectively ranking financial and insurance performances.
\begin{definition} \label{def:RM}
A map $r:  L^{\infty} \to [0,+\infty]$ is called a ranking metric if it satisfies:
\begin{itemize}
    \item[i)] (increasing) monotonicity: $X \leq Y  \implies r(X) \leq r(Y)$, $\, X,Y \in L^\infty$, 
    \item[ii)] cash-quasiconcavity: $ r(\lambda X + (1-\lambda) k) \geq \min \{ r(X),  r(k) \}, \ X \in L^\infty ,\ k\in \R, \ \lambda \in [0,1]$.
\end{itemize}
\end{definition}
Unlike traditional performance measures or acceptability indices, ranking metrics assign to a risky financial profit-and-loss or an insurance loss a value that does not necessarily correspond to a monetary amount; instead, it may represent a relative level of performance.
%Substantially, ranking metric assigns a performance rank to a risky financial profit-and-loss or an insurance loss. Unlike a traditional performance measure or acceptability index, the ranks need not correspond to monetary values and may instead represent relative levels of performance. 
The choice of ranking metric may reflect investor preferences, but it must satisfy monotonicity and cash-quasiconcavity to ensure meaningful performance comparisons between different financial positions. Monotonicity is the most natural property as it guarantees that higher ranks reflect more attractive portfolios. 
On the other hand, cash-quasiconcavity captures the structural preference of asset managers for integrating risk-free positions within a portfolio. This property generalizes standard quasiconcavity stating that any linear convex combination of a risky asset and a deterministic payoff must yield a rank at least as high as the minimum rank of the two.   %to the space of risky and risk-free assets or more generally, assets with deterministic payoff $k \in \R$ with performance $r(k)$ - for instance fixed-income instruments. It implies that any linear convex combination of a risky asset and a deterministic payoff must yield a rank at least as high as the minimum rank of the two.
 %This condition captures the intuition that diversification with a risk-free component should not reduce
%\blue{steady convexity, weighted stability???}
%\hyperlink{item:(QCV)}{\customlabel{(QCV)}}.

\begin{remark}
Cash-quasiconcavity is a weaker requirement than its classical counterpart, as standard quasiconcavity (see Definition \ref{def:RaM_other_prop}) directly implies cash-quasiconcavity. Consequently, ranking metrics generalize the definition of acceptability indices in \cite{PeriThesis} and \citep{frittelli2016scientific}, which are only required to be monotone and quasiconcave. Hence, they also generalize the original framework of \cite{ChernyMadan2009}  which further imposes scale invariance, i.e., $r(\lambda X) = r(X)$ for all $\lambda > 0$, and the Fatou property for dual representation. 
\end{remark}

From a behavioral perspective, we proved in the following proposition that cash-quasiconcavity has also decision-theoretic justification. This property characterizes managers who exhibit a preference for diversification across risky and deterministic positions. % or risk-free positions.
Specifically, when indifferent between a certain and an uncertain outcome with respect to the ranking metric, % —anticipating equivalent performance from both — 
managers tend to prefer a convex mixture of the two rather than a full allocation to the uncertain position.   
%\viola{[è simile a Prop 3.1 in Wang]}
Formally, we assume that asset managers evaluate their investments using a ranking metric $r: L^\infty \to [0,+\infty]$. % such that an investment $X$ is preferred to $Y$ if and only if $r(X) \ge r(Y)$. Mathematically, this ranking metric $r$ induces a 
The preference relation $\succeq$ induced by $r$ is then defined as: for any  $X,Y \in L^\infty$
\[
X \succeq Y \quad \Longleftrightarrow \quad r(X) \ge r(Y).
\]
Consequently, $\succeq$ represents the ordering of an agent who evaluates positions according to $r$ and prioritizes those with higher ranks. Under this framework, we demonstrate below that requiring the ranking metric to satisfy cash-quasiconcavity is equivalent to assuming that agents prefer combining risky investments with certain ones, even if they are indifferent to them in isolation. %Mathematically, if a random payoff $X$ is indifferent to a deterministic one $k > 0$ (denoted $X \simeq k$), then any linear convex mixture $\lambda X + (1-\lambda)k$, for $\lambda \in [0,1]$, is weakly preferred to $X$.
In other words, integrating a risky position (for instance, a stock) with a certain payoff (such as a zero coupon bond) results in a rank at least as high as that of the risky asset in isolation.

Throughout this proposition we restrict attention to finite-valued ranking 
metrics, i.e.\ maps $r\colon L^\infty\to[0,+\infty)$; the extension to the case 
$r\colon L^\infty\to[0,+\infty]$ is discussed in Remark~\ref{rem:extended-codomain}.

\begin{proposition} \label{prop:cash-quasiconc-decision-therory}
Let $r: L^\infty \to [0,+\infty)$ be a monotone increasing, $L^\infty$-continuous 
map satisfying the normalization condition
\begin{equation}\label{eq:norm-cond}
\lim_{m\to-\infty} r(X+m) = 0 \qquad \text{for all } X\in L^\infty.
\end{equation}
Then $r$ satisfies cash-quasiconcavity if and only if
\[
X\simeq k \implies \lambda X+(1-\lambda)k\succeq X
\qquad \text{for all } k\in\mathbb{R},\ \lambda\in[0,1].
\]
\end{proposition}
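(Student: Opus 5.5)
The forward implication is immediate; the converse rests on an intermediate-value argument along translations by constants. The normalization \eqref{eq:norm-cond} and the $L^\infty$-continuity of $r$ are used only there.

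\emph{Necessity.} Assume $r$ is cash-quasiconcave and $X\simeq k$, i.e.\ $r(X)=r(k)$. Then for every $\lambda\in[0,1]$,
\[
r(\lambda X+(1-\lambda)k)\ge \min\{r(X),r(k)\}=r(X),
\]
which is precisely $\lambda X+(1-\lambda)k\succeq X$.

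\emph{Sufficiency.} Fix $X\in L^\infty$, $k\in\R$ and $\lambda\in[0,1]$, and set $c:=\min\{r(X),r(k)\}$. If $c=0$ there is nothing to prove, because $r\ge 0$; so assume $c>0$. The strategy is to replace either $k$ or $X$ by a smaller position lying exactly on the level set $\{r=c\}$. We then apply the indifference hypothesis to that smaller position and transfer the inequality back by monotonicity. The tool for this is the following observation. For any $Y\in L^\infty$, the map $m\mapsto r(Y+m)$ on $(-\infty,0]$ has three properties:
\begin{itemize}
\item it is continuous, since $\|(Y+m)-(Y+m')\|_\infty=|m-m'|$ and $r$ is $L^\infty$-continuous;
\item it is nondecreasing, by monotonicity of $r$;
\item it tends to $0$ as $m\to-\infty$, by \eqref{eq:norm-cond}.
\end{itemize}
Hence, by the intermediate value theorem, every value in $(0,r(Y)]$ is attained at some $m\le 0$.

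\emph{Case $r(X)\le r(k)$.} Here $c=r(X)$. Apply the observation to $Y=k$ (a constant) to find $m\le 0$ with $r(k')=r(X)$, where $k':=k+m\le k$. Then $X\simeq k'$, so the hypothesis gives $r(\lambda X+(1-\lambda)k')\ge r(X)$. Since $\lambda X+(1-\lambda)k\ge \lambda X+(1-\lambda)k'$, monotonicity yields
\[
r(\lambda X+(1-\lambda)k)\ge r(X)=c .
\]

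\emph{Case $r(X)>r(k)$.} Here $c=r(k)\in(0,r(X))$. Apply the observation to $Y=X$ to obtain $m\le 0$ with $r(X+m)=r(k)$, so $X+m\simeq k$. The hypothesis gives
\[
r(\lambda(X+m)+(1-\lambda)k)\ge r(X+m)=r(k).
\]
Since $\lambda X+(1-\lambda)k\ge \lambda(X+m)+(1-\lambda)k$, monotonicity again gives $r(\lambda X+(1-\lambda)k)\ge r(k)=c$. Together the two cases establish cash-quasiconcavity.

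The only delicate point is the existence of the matching level, $k'$ or $X+m$. The indifference hypothesis only speaks about pairs with exactly equal ranks, so we need every intermediate level to be attained. This is exactly where finite-valuedness, continuity along constants and the normalization \eqref{eq:norm-cond} are required. Without \eqref{eq:norm-cond}, levels below $\inf_m r(Y+m)$ could not be reached. The possibility $r=+\infty$ is what is deferred to Remark~\ref{rem:extended-codomain}.
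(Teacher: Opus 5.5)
Your proof is correct and follows the paper's argument. Necessity is trivial, and for sufficiency you use an intermediate-value shift along constants to move whichever of $X$ or $k$ has the larger rank down to the common level, then conclude by monotonicity; you assign the tie $r(X)=r(k)$ to the other case, which makes no difference. Disposing of $c=0$ at the outset is slightly cleaner than the paper: its case $r(X)<r(k)$ includes $r(X)=0$, where the level $0$ may only be approached as $m\to-\infty$ and never attained, although the conclusion there is trivial anyway.
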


\begin{proof}
$(\Rightarrow)$ Suppose $r$ is cash-quasiconcave and $X\simeq k$, i.e. $r(X)=r(k)$. 
Then by cash-quasiconcavity, for every $\lambda\in[0,1]$,
\[
r(\lambda X+(1-\lambda)k)\geq \min\{r(X),r(k)\}=r(X),
\]
that is $\lambda X+(1-\lambda)k\succeq X$.

$(\Leftarrow)$ Fix $X\in L^\infty$, $k\in\mathbb{R}$ and $\lambda\in[0,1]$. We must show 
$r(\lambda X+(1-\lambda)k)\geq \min\{r(X),r(k)\}$.
{Case $r(k)=0$.} In this case cash-quasiconcavity is trivial since $\min\{r(X),r(k)\}=0$.

  {Case $r(X)\geq r(k) >0$.} Consider $g(m):=r(X+m)$ for $m\leq 0$. By monotonicity 
$g$ is nondecreasing, by continuity it is continuous, and by \eqref{eq:norm-cond} 
$\lim_{m\to-\infty} g(m)=0\leq r(k)$. Since $g(0)=r(X)\geq r(k)$, the intermediate 
value theorem yields some $m\leq 0$ with $r(X+m)=r(k)$, i.e. $X+m\simeq k$. 
Applying the hypothesis to $X+m$ and $k$, together with monotonicity (note 
$\lambda X+(1-\lambda)k\geq \lambda(X+m)+(1-\lambda)k$ since $m\leq 0$), we obtain
\[
r(\lambda X+(1-\lambda)k)\geq r(\lambda(X+m)+(1-\lambda)k)\geq r(X+m)=r(k)
=\min\{r(X),r(k)\}.
\]

  {Case $r(X)<r(k)$.} Consider $h(m):=r(k+m)$ for $m\leq 0$. As above, $h$ is 
continuous, nondecreasing, with $\lim_{m\to-\infty} h(m)=0\leq r(X)$ and 
$h(0)=r(k)>r(X)$. By the intermediate value theorem there is $m\leq 0$ with 
$r(k+m)=r(X)$, i.e. $X\simeq k+m$. Applying the hypothesis to $X$ and $k+m$, 
together with monotonicity (note $\lambda X+(1-\lambda)k\geq \lambda X+(1-\lambda)(k+m)$ 
since $m\leq 0$), we obtain
\[
r(\lambda X+(1-\lambda)k)\geq r(\lambda X+(1-\lambda)(k+m))\geq r(X)
=\min\{r(X),r(k)\}.
\]

In both cases the cash-quasiconcavity inequality holds, which completes the proof.
\end{proof}

\begin{remark}\label{rem:extended-codomain}
The result above can be
extended to maps $r:L^\infty\to[0,+\infty]$ provided the continuity of 
$m\mapsto r(X+m)$ is understood with values in the extended half-line $[0,+\infty]$ 
equipped with its order topology, under which $[0,+\infty]$ is compact and connected. 
The only point requiring care is the existence of the shift. When the target level is 
finite, the argument above is unchanged. When the target level equals $+\infty$, say $r(k)=+\infty$ in the case 
$r(X)\geq r(k)$, then $r(X)=r(k)=+\infty$ forces $X\simeq k$ directly, so the 
hypothesis applies with no shift and the conclusion follows immediately.
\end{remark}

\begin{remark}
Mathematically, cash-quasiconcavity can be viewed as \textit{quasiconcavity at constants}. This means $r$ satisfies quasiconcavity at $k$ for all $k \in \mathbb{R}$, where:
$$\text{quasiconcavity at }k \in \mathbb{R}:
 r(\lambda X + (1-\lambda) k) \geq \min\left\{r(X), r(k)\right\}, \; X \in L^\infty, \ \lambda \in [0,1].$$ 
 Alternatively, this property can be referred to as quasi-star-concavity or concave-quasi-star-shapedness, being the concave analog of quasi-star-convexity, simply called quasi-star-shapedness by \cite{Han2024}. Here, we suggest the term cash-quasiconcavity to enhance its interpretation within a financial context.
\end{remark}

Investors may adopt different ranking metrics to evaluate the performance of their positions according to specific criteria, which can be formalized through the following properties.

\begin{definition}\label{def:RaM_other_prop}
Let $r: L^\infty  \to [0,+\infty]$ be a map.    
The map $r$ may satisfy:
\begin{itemize}
%\item[i)] weak expectation consistency: $k \in \mathbb{R}_{-}\implies\:r(k)=0$; 
\item[i)] concavity at $k\in \mathbb{R}$: %if $r$ satisfies concavity at $k, \; \; \forall k \in \mathbb{R}$, where: $\text{concavity at }k \in \mathbb{R}:$
$ r(\lambda X + (1-\lambda) k) \geq \lambda r(X) + (1-\lambda) r(k) , \; \; X \in L^\infty , \; \lambda \in [0,1];$
%\item[ii)] cash-concavity: %if $r$ satisfies concavity at $k, \; \; \forall k \in \mathbb{R}$, where: $\text{concavity at }k \in \mathbb{R}:$
%$ r(\lambda X + (1-\lambda) k) \geq \lambda r(X) + (1-\lambda) r(k) , \; \; X \in L^\infty , \; k \in \mathbb{R}, \; \lambda \in [0,1];$
\item[ii)] concavity:
$ r(\lambda X + (1-\lambda) Y) \geq \lambda r(X) + (1-\lambda) r(Y) , \; \; X, Y \in L^\infty , \; \lambda \in [0,1]$;
\item[iii)] quasiconcavity at $k\in \mathbb{R}$:
$ r(\lambda X + (1-\lambda) k) \geq \min\{ r(X), r(k)\} , \; \; X \in L^\infty , \; \lambda \in [0,1]$;
\item[iv)] quasiconcavity:
$ r(\lambda X + (1-\lambda) Y) \geq \min\left\{r(X), r(Y)\right\}, \; X,Y \in L^\infty, \ \lambda \in [0,1]$;
\item[v)] $[0,1]$-super-linearity: 
   $r(\lambda X) \geq \lambda r(X)$, $X \in L^\infty$, $\lambda \in [0,1]$.

%\item \deleted{positive homogeneity:  
 %  $r(\lambda X) = \lambda r(X)$, $X \in L^\infty$, $\lambda \in \mathbb{R}_+$;}
   \end{itemize} 
    If $r$ satisfies concavity at $k$ for all $k\in \R$, $r$ is called cash-concave. It could also happen that $r$ is concave at $k$ only on a subset of $L^\infty$. For this reason, we will also introduce the following notion: $r$ is concave  at $k>0$ on $A \subset L^\infty$ if for any $X\in A$ 
    $$
     r(\lambda X + (1-\lambda )k) \geq \lambda r(X) + (1-\lambda)r(k), \quad \text{for all }\lambda \in [0,1].
    $$
    If $r$ satisfies quasiconcavity at $k$ for all $k \in \mathbb{R}$, $r$ satisfies cash-quasiconcavity. 
%\textit{cash-superadditive} for any $X \in L^\infty$ and $k \in \mathbb{R}_+$ such that $X+k \in \mathrm{supp}(r)$, it holds that $r(X+k) \ge r(X) + k$. 
\end{definition}

\begin{remark}
We suggest the term \textit{cash-concavity} to enhance interpretability in the financial context. From a mathematical standpoint, cash-concavity, or \textit{concavity at constants}, corresponds to the notion of \textit{concave-star-shapedness}. This property is the concave counterpart to the star-shapedness (or convex-star-shapedness) defined by \cite{Han2024}, which generalizes the approach in \cite{Castagnoli2022}, where star-shapedness is defined as
convexity at $0$ with $r (0) = 0$. 
This specific term, concave-star-shapedness, is largely absent from the existing literature related to acceptability indices. Historically, star-shapedness originated in geometry as a property of sets or polygons \citep{hansen2020starshaped}, a context in which a concave counterpart lacks a meaningful definition. However, such a distinction becomes essential when the property characterizes functionals; here, the direction of the inequality is no longer purely geometric, but a fundamental characterization of the agent's behavioral properties.
\end{remark} 

 \begin{remark} 
 What we refer to as $[0,1]$-super-linearity is related to the concept of positive-subhomogeneity in the risk measures theory literature. Specifically, for $\lambda \geq 1$, the property can be rewritten as $r(\lambda X) \leq \lambda r(X)$, for all $X \in L^\infty$  (see \citealt{frittelli2002putting}). 
\end{remark} 
%While what we call 1-super-homogeneity is known in literature simply as positive-super-homogeneity and it can be equivalently rewritten as $r(\lambda X) \leq \lambda r(X)$, $X \in L^\infty$, $\lambda \in [0,1] $  (see \cite{Han2024}).]

\begin{remark}
Cash-concavity is a weaker requirement than the standard concavity, as the latter directly implies cash-concavity. Naturally, cash-concavity itself implies cash-quasiconcavity.
\end{remark}

In addition to the above properties ranking metric can be chosen to satisfy cash additivity or weaker related properties.

\begin{definition}\label{def:cash-sub-sup-add}
Let $r: L^\infty  \to [0,+\infty]$ be a map. 
The map $r$ may satisfy:
\begin{itemize}
\item[i)] sub-additivity at $k \in \mathbb{R}_+$: $r(X+k) \leq r(X)+k$, $X \in L^\infty$; 
\item[ii)] super-additivity at $k \in \mathbb{R}_+$: there exists $A \subseteq L^\infty$ such that $r(X+k) \geq r(X)+k$, $X \in A$ %{\rm supp}(r)$.   
\item[iii)] additivity at $k \in \mathbb{R}_+$: $r$ is both sub-additive at $k$ and super-additive at $k$;%\eli{controllare sa la usiamo ancora}
\end{itemize}
$r$ is \textit{cash-superadditive} (respectively, cash-subadditive) if it is super-additive (respectively, sub-additive) at $k$ for all $k \in \mathbb{R}_+$. $r$ is cash-additive if it is both cash-subadditive and cash-superadditive.  In particular, there exists $A\subseteq L^\infty$ such that 
$r(X+k) = r(X) + k$
for all $k\in \mathbb{R_+}$ and $X \in A$.%{\rm supp}(r)$.
\end{definition}

\begin{remark}\label{rem:supp-restriction}
Cash-superadditivity is required only on a subset of $L^\infty$, rather than on 
the whole space, because outside a suitable set the inequality may fail for 
trivial reasons. Indeed, suppose there exist $X\in L^\infty$ and 
$k\in\mathbb{R}_+\setminus\{0\}$ with $r(X+k)=0$. Since $r$ is nonnegative,
\[
r(X+k)=0<r(X)+k,
\]
so the cash-superadditivity inequality $r(X+k)\geq r(X)+k$ cannot hold. By 
contrast, if $X\in\operatorname{supp}(r)$, then $r(X)>0$ and, by monotonicity, 
$r(X+k)\geq r(X)>0$, so the situation $r(X+k)=0$ described above cannot occur, 
and the inequality is not ruled out a priori. This motivates requiring 
cash-superadditivity only on a subset excluding such points, such as 
$\operatorname{supp}(r)$.
\end{remark}

%If $r$ satisfies AD$_k$ (respectively SupAD$_k$) at any $k\in \R_+$, then $r$  satisfies the well known cash-additivity (respectively cash-superadditivity).

%Building on these definitions, we introduce the notions of cash-coherent (or simply coherent) and cash-concave ranking metrics. \blue{A ranking metric is called \textit{coherent} if it satisfies cash-additivity, positive homogeneity, and cash-super-additivity, in direct analogy with the classical definition of coherent risk measures \citep{Artzner1999}. [I do not think we use the expression coherent ranking metric, also I think this definition would not be correct]}.
%A ranking metric is called \textit{cash-concave} if it is cash-concave, paralleling the classical definition of concave risk measures \citep{FollmerSchied2002,frittelli2002putting}. 

%If a ranking metric $r$ satisfies cash-additivity, positive homogeneity and super-additivity, it is called a coherent ranking metric. 
%Furthermore, if $r$ is cash-superadditive and positive homogeneous, then $r$ is cash-concave, and we say that $r$ is a concave ranking metric. 

\color{black}
The following propositions  clarify the relationship among the properties discussed above and shed light on the lesser-known concave-star-shapedness.

% --- Convenzione sul dominio della cash-additività ---
% Da inserire vicino alla Definizione \ref{def:cash-sub-sup-add} o appena prima della proposizione.

Throughout, when $r$ is cash-additive on a set $A\subseteq L^\infty$, we assume 
$0\in A$, so that $A=\operatorname{supp}(r)\cup\{0\}$. This is consistent with 
the rationale of Remark~\ref{rem:supp-restriction}: the restriction to a subset 
of $L^\infty$ serves to exclude points $X$ at which $r(X+k)=0$ would force the 
additivity inequality to fail. The point $0$ is not of this kind, since under 
monotonicity $r(0+k)=r(k)\geq r(0)=0$ and, as shown in 
Lemma~\ref{lem:normalization}, in fact $r(k)=k>0$ for $k>0$. Including $0$ in 
$A$ therefore introduces no inconsistency.

% --- Lemma di normalizzazione ---

\begin{lemma}\label{lem:normalization}
Let $r\colon L^\infty\to[0,+\infty]$ satisfy $r(0)=0$ and be cash-additive on a 
set $A\subseteq L^\infty$ with $0\in A$. Then $r$ is normalized on the 
nonnegative constants, i.e.
\[
r(k)=k \qquad \text{for all } k\geq 0.
\]
\end{lemma}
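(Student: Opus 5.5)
The plan is to apply the definition of cash-additivity from Definition~\ref{def:cash-sub-sup-add} directly at the point $X=0$, which is allowed because $0\in A$ by assumption. Recall that cash-additivity on $A$ means two things: $r$ is sub-additive at every $k\in\mathbb{R}_+$ on all of $L^\infty$, and super-additive at every $k\in\mathbb{R}_+$ on the set $A$. Combined, this gives the identity $r(X+k)=r(X)+k$ for every $X\in A$ and every $k\geq 0$.

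First I take $X=0$ in the sub-additivity inequality $r(X+k)\leq r(X)+k$. Using $r(0)=0$, this gives $r(k)\leq k$ for every $k\geq 0$. Next, since $0\in A$, I take $X=0$ in the super-additivity inequality $r(X+k)\geq r(X)+k$, which gives $r(k)\geq k$. Putting the two inequalities together yields $r(k)=k$ for all $k\geq0$. The case $k=0$ is just the hypothesis $r(0)=0$.

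The only point that needs care is the value $+\infty$ in the codomain. The upper bound $r(k)\leq k<\infty$ forces $r(k)$ to be finite, so all the arithmetic above takes place in $\mathbb{R}$. Apart from this, there is no real obstacle. The whole content of the lemma is that the convention $0\in A$ lets super-additivity be applied at the origin. Without that convention we would only have $r(k)\leq k$; this matters precisely because $0\notin\operatorname{supp}(r)$ when $r(0)=0$.
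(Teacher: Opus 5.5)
Your proof is correct and is essentially the paper's argument. The paper applies the cash-additivity identity $r(X+k)=r(X)+k$ directly at $X=0\in A$ to get $r(k)=r(0)+k=k$, whereas you split that identity into its sub- and super-additive halves before evaluating at $X=0$; your remarks on finiteness and on why $0\in A$ is needed (since $0\notin\operatorname{supp}(r)$) are accurate.
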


\begin{proof}
For $k=0$ the claim is $r(0)=0$, which holds by assumption. For $k>0$, since 
$0\in A$, cash-additivity at $k$ applied to $X=0$ gives
\[
r(k)=r(0+k)=r(0)+k=0+k=k. \qedhere
\]
\end{proof}

% --- Proposizione (parte ii) riscritta ---
% Nota: ho rinominato le label duplicate in prop:rel-i e prop:rel-ii.

\begin{proposition} \label{prop: prop_relation_RM}
Let $r\colon L^\infty\to[0,+\infty]$.
\begin{itemize}
\item[i)] \label{prop:rel-i} If $r(0)=0$, then concavity at $0$ is equivalent 
to $[0,1]$-super-linearity.

\item[ii)] \label{prop:rel-ii} Suppose $r(0)=0$, $r$ is concave at $0$, and $r$ 
is cash-additive on $A=\operatorname{supp}(r)\cup\{0\}$. Then $r$ is concave at 
$k$ on $\operatorname{supp}(r)$ for every $k\in\mathbb{R}_+$, and therefore 
quasiconcave at $k$ on $\operatorname{supp}(r)$ for every $k\in\mathbb{R}_+$.
\end{itemize}
\end{proposition}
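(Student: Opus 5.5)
For part i), the plan is to unwind the definition of concavity at $0$ when $r(0)=0$. Concavity at $0$ reads $r(\lambda X+(1-\lambda)\cdot 0)\geq \lambda r(X)+(1-\lambda)r(0)$. Since $r(0)=0$, this is exactly $r(\lambda X)\geq \lambda r(X)$ for all $X\in L^\infty$ and $\lambda\in[0,1]$. That is $[0,1]$-super-linearity. The two inequalities coincide term by term, so both implications follow at once. The only check needed is the convention for $+\infty$ values: $\lambda\cdot(+\infty)=+\infty$ for $\lambda>0$ and $0\cdot(+\infty)=0$. With it, the identity holds in $[0,+\infty]$.

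For part ii), fix $k\in\mathbb{R}_+$, $X\in\operatorname{supp}(r)$ and $\lambda\in[0,1]$. The idea is to split the constant off the mixture,
\[
\lambda X+(1-\lambda)k=\lambda X+c,\qquad c:=(1-\lambda)k\geq 0.
\]
Then use cash-additivity to pull out $c$, use concavity at $0$ (equivalently part i)) on $\lambda X$, and use Lemma~\ref{lem:normalization} to rewrite $k=r(k)$. This gives the chain
\[
r(\lambda X+(1-\lambda)k)=r(\lambda X)+(1-\lambda)k\geq \lambda r(X)+(1-\lambda)k=\lambda r(X)+(1-\lambda)r(k).
\]
Lemma~\ref{lem:normalization} applies because $r(0)=0$ and $0\in A$.

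The main obstacle is the first equality. Cash-additivity is only granted on $A=\operatorname{supp}(r)\cup\{0\}$, so I must show $\lambda X\in A$. If $\lambda=0$, then $\lambda X=0\in A$. If $\lambda>0$, $[0,1]$-super-linearity gives $r(\lambda X)\geq\lambda r(X)>0$, so $\lambda X\in\operatorname{supp}(r)$. In both cases $r(\lambda X+c)=r(\lambda X)+c$. For $\lambda=0$ the chain degenerates to $r(k)=r(k)$, which is consistent. If $r(X)=+\infty$, both sides are $+\infty$ and the inequality is trivial.

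Finally, quasiconcavity at $k$ on $\operatorname{supp}(r)$ follows because a convex combination is bounded below by the minimum:
\[
\lambda r(X)+(1-\lambda)r(k)\geq\min\{r(X),r(k)\}.
\]
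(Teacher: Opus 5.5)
Your proof is correct and follows the paper's argument essentially step for step. Part i) substitutes $r(0)=0$ into the concavity-at-$0$ inequality. Part ii) shows $\lambda X\in\operatorname{supp}(r)$ via $r(\lambda X)\ge\lambda r(X)>0$, applies cash-additivity with increment $(1-\lambda)k$, and uses $r(k)=k$ from Lemma~\ref{lem:normalization}. Your edge cases ($\lambda=0$ via $0\in A$, $k=0$, and $r(X)=+\infty$) are handled within that same chain rather than split off separately as in the paper; this is only a cosmetic difference.
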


\begin{proof}
\textbf{i)} Using $[0,1]$-super-linearity, for any $\lambda\in[0,1]$,
\[
r(\lambda X+(1-\lambda)\cdot 0)=r(\lambda X)\geq \lambda r(X)
=\lambda r(X)+(1-\lambda)r(0),
\]
which is concavity at $0$. Conversely, if $r$ is concave at $0$, then
\[
r(\lambda X)=r(\lambda X+(1-\lambda)\cdot 0)\geq \lambda r(X)+(1-\lambda)r(0)
=\lambda r(X),
\]
which is $[0,1]$-super-linearity. In both directions we used $r(0)=0$.

\medskip
\textbf{ii)} By Lemma~\ref{lem:normalization}, $r(k)=k$ for every $k\geq 0$.

Fix $k>0$ and $X\in\operatorname{supp}(r)$, i.e. $r(X)>0$. The cases 
$\lambda\in\{0,1\}$ are trivial: for $\lambda=1$ the inequality reads 
$r(X)\geq r(X)$, and for $\lambda=0$ it reads $r(k)\geq r(k)$. Take 
$\lambda\in(0,1)$.

First, $\lambda X\in\operatorname{supp}(r)$: by concavity at $0$ and $r(0)=0$,
\[
r(\lambda X)\geq \lambda r(X)>0,
\]
since $\lambda>0$ and $r(X)>0$. Hence $\lambda X\in\operatorname{supp}(r)\subseteq A$, 
and cash-additivity applies to $\lambda X$ with increment $(1-\lambda)k>0$:
\[
r(\lambda X+(1-\lambda)k)=r(\lambda X)+(1-\lambda)k.
\]
Combining with $r(\lambda X)\geq \lambda r(X)$ and with $r(k)=k$ from 
Lemma~\ref{lem:normalization},
\[
r(\lambda X+(1-\lambda)k)\geq \lambda r(X)+(1-\lambda)k
=\lambda r(X)+(1-\lambda)r(k).
\]
This is concavity at $k$ on $\operatorname{supp}(r)$. Since concavity at $k$ 
implies quasiconcavity at $k$, the latter also holds on $\operatorname{supp}(r)$.

Finally, quasiconcavity at $k$ extends to the case $r(X)=0$ as well, since then
\[
r(\lambda X+(1-\lambda)k)\geq 0=\min\{r(X),r(k)\}. \qedhere
\]
\end{proof}

\subsection{Normalization properties}

% \begin{proposition}
%     If $r(0) \neq +\infty$ then positive super-homogeneity  (i.e. $r(\lambda X) \geq \lambda r(X)$, $X \in L^\infty$, $\lambda \geq 1)$) implies $r(0)=0$.\label{lem:pos-sup-norm}
%     \eli{mi sembra valga anche il viceversa: if $r(0)=0$ and $r$ concave at $0$ then $r$ super-homogeneous}
% \end{proposition} 
% \begin{proof}
%  If $r(0) \neq +\infty$ and $r$ satisfies positive super-homogeneity, then for any $\lambda \geq 1$ we have:
%     $$
%     r(\lambda \cdot 0)= r(0) \geq \lambda r(0)
%     \qquad \Rightarrow  r(0)(1-\lambda) \geq 0.
%     $$
%    In particular, for $\lambda >1$ we obtain $r(0)  \leq 0$. Recalling that $r: L^\infty \to [0,+\infty]$
%      we have
%     $r(0) = 0 $. 
% \end{proof}
Let us discuss the implications of different normalization choices.

One could require normalization at $k$ choosing $r(k)=k$, for all $k\in\R_+$, as 
originally suggested by \cite{ChernyMadan2009}. A sufficient condition for this 
normalization has already been established in Lemma~\ref{lem:normalization}: if 
$r(0)=0$ and $r$ is cash-additive on a set containing $0$, then $r(k)=k$ for all 
$k\geq 0$.
% --- Convenzione terminologica (se non già fissata altrove) ---
% Recall: r is cash-subadditive if r(X+h) <= r(X)+h for all X in L^infty, h>=0;
% r is cash-superadditive on a set B if r(X+h) >= r(X)+h for all X in B, h>=0;
% r is cash-additive (on B) if it is cash-subadditive on L^infty and 
% cash-superadditive on B, so that r(X+h)=r(X)+h for all X in B, h>=0.

\begin{proposition}\label{prop:qconc0+add-qconck}
Let $r\colon L^\infty\to[0,+\infty]$ be monotone increasing with $r(0)=0$. Then:
\begin{itemize}
    \item[i)] $r$ is quasiconcave at every level $k\leq 0$;
    \item[ii)] if $r$ is cash-subadditive on $L^\infty$ and cash-superadditive on 
    $\operatorname{supp}(r)$, then, for every $k>0$, $r$ is quasiconcave at $k$ on 
    the set
    \begin{equation}\label{eq:A_k}
        A_k:=\bigl\{X\in L^\infty:\ \lambda(X-k)\in\operatorname{supp}(r)
        \ \text{for all }\lambda\in(0,1]\bigr\}.
    \end{equation}
\end{itemize}
\end{proposition}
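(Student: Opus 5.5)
For part i), I would argue by monotonicity alone. If $k\leq 0$, then $k\leq 0$ as constants in $L^\infty$, so $0\leq r(k)\leq r(0)=0$, i.e.\ $r(k)=0$. Hence $\min\{r(X),r(k)\}=0$ for every $X\in L^\infty$. Since $r$ takes values in $[0,+\infty]$, the inequality $r(\lambda X+(1-\lambda)k)\geq 0=\min\{r(X),r(k)\}$ is automatic for all $\lambda\in[0,1]$. This part needs no additivity assumption.

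For part ii), fix $k>0$, $X\in A_k$ and $\lambda\in[0,1]$. The case $\lambda=0$ is trivial, since both sides equal $r(k)$. For $\lambda\in(0,1]$, the key step is to rewrite the mixture as a shift of a scaled position:
\[
\lambda X+(1-\lambda)k=\lambda(X-k)+k .
\]
I would then bound the two sides of the target inequality separately against the level $k$.

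\emph{Upper bound on $r(k)$.} Cash-subadditivity on $L^\infty$ applied at $X=0$ gives $r(k)=r(0+k)\leq r(0)+k=k$. Note that Lemma~\ref{lem:normalization} cannot be invoked here, because $0\notin\operatorname{supp}(r)$ when $r(0)=0$, so only the inequality is available.

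\emph{Lower bound on the mixture.} By definition of $A_k$, we have $\lambda(X-k)\in\operatorname{supp}(r)$. Cash-superadditivity on $\operatorname{supp}(r)$ then yields
\[
r(\lambda X+(1-\lambda)k)=r(\lambda(X-k)+k)\geq r(\lambda(X-k))+k\geq k .
\]
This also holds when $r(\lambda(X-k))=+\infty$, with the usual convention in $[0,+\infty]$.

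Combining the two bounds gives $r(\lambda X+(1-\lambda)k)\geq k\geq r(k)\geq\min\{r(X),r(k)\}$, which is quasiconcavity at $k$ on $A_k$. The only genuinely delicate point is that superadditivity is available only on the support. This is exactly why the set $A_k$ is defined by requiring $\lambda(X-k)\in\operatorname{supp}(r)$ for all $\lambda\in(0,1]$, rather than requiring $X$ itself to lie in the support. Once the mixture is rewritten as $\lambda(X-k)+k$, the argument is a direct two-line comparison through the level $k$.
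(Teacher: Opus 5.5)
Your proof is correct and matches the paper's argument step for step: part i) by monotonicity alone; in part ii) the rewriting $\lambda X+(1-\lambda)k=\lambda(X-k)+k$, superadditivity on $\operatorname{supp}(r)$ for the lower bound $k$, and subadditivity at $0$ for $r(k)\le k$. The differences are cosmetic: the paper also uses subadditivity to obtain an equality where you correctly use only the superadditive inequality, and for $\lambda=0$ the right-hand side is $\min\{r(X),r(k)\}\le r(k)$ rather than equal to $r(k)$, which still makes that case trivial.
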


\begin{proof}
\textbf{i)} Let $k\leq 0$. By monotonicity, $k\leq 0$ gives $r(k)\leq r(0)=0$, and 
since $r\geq 0$ we conclude $r(k)=0$. Hence, for every $X\in L^\infty$ and every 
$\lambda\in[0,1]$,
\[
r(\lambda X+(1-\lambda)k)\geq 0=\min\{r(X),r(k)\},
\]
the inequality holding because $r$ is nonnegative and $r(k)=0$. Therefore $r$ is 
quasiconcave at every level $k\leq 0$.

\medskip
\textbf{ii)} Fix $k>0$ and $X\in A_k$. We show 
$r(\lambda X+(1-\lambda)k)\geq\min\{r(X),r(k)\}$ for all $\lambda\in[0,1]$.

  {Case $\lambda=0$.} The left-hand side is $r(k)$, so
\[
r(k)\geq\min\{r(X),r(k)\}
\]
holds trivially.

  {Case $\lambda\in(0,1]$.} First note the algebraic identity
\[
\lambda X+(1-\lambda)k=\lambda(X-k)+k.
\]
By definition of $A_k$, $\lambda(X-k)\in\operatorname{supp}(r)$. Applying 
cash-superadditivity on $\operatorname{supp}(r)$ and cash-subadditivity on 
$L^\infty$ at the point $\lambda(X-k)$ with increment $k>0$ yields the two 
inequalities
\[
r(\lambda(X-k))+k\leq r\bigl(\lambda(X-k)+k\bigr)\leq r(\lambda(X-k))+k,
\]
hence the equality
\[
r(\lambda X+(1-\lambda)k)=r\bigl(\lambda(X-k)+k\bigr)=r(\lambda(X-k))+k.
\]
Since $r\geq 0$, this gives
\[
r(\lambda X+(1-\lambda)k)=r(\lambda(X-k))+k\geq k.
\]
On the other hand, cash-subadditivity applied at $0$ with increment $k$ gives
\[
r(k)=r(0+k)\leq r(0)+k=k.
\]
Combining the two displays,
\[
r(\lambda X+(1-\lambda)k)\geq k\geq r(k)\geq\min\{r(X),r(k)\},
\]
where the last inequality is the definition of the minimum. Therefore $r$ is 
quasiconcave at the level $k$ on $A_k$.
\end{proof}

\begin{remark}
As an alternative to the normalization property $r(k)=k$, one may consider the 
case in which $r(0)=+\infty$. By monotonicity, this implies
\[
r(k)=+\infty \qquad \forall\, k\in\mathbb{R}_+.
\]
From a financial perspective, this assumption characterizes investors for whom 
risk-free allocations are always fully acceptable and constitute a preferred 
component of the portfolio.

Under this assumption, quasiconcavity at a non-negative cash level $k\in\mathbb{R}_+$,
\[
r(\lambda X+(1-\lambda)k)\geq\min\{r(X),r(k)\},
\]
reduces to
\begin{equation}\label{def:cash-quasiconc-r(k)infty}
r(\lambda X+(1-\lambda)k)\geq r(X),
\qquad
\forall\,\lambda\in[0,1],\; k\in\mathbb{R}_+,
\end{equation}
since $\min\{r(X),r(k)\}=r(X)$ whenever $r(k)=+\infty$.

Condition~\eqref{def:cash-quasiconc-r(k)infty} can be viewed as a generalized 
version of the star-shapedness property introduced in \cite{Righi2024}, which 
requires both $r(0)=+\infty$ and $r(\lambda X)\geq r(X)$ for all 
$\lambda\in[0,1]$. Indeed, under the assumption $r(0)=+\infty$, 
setting $k=0$ in~\eqref{def:cash-quasiconc-r(k)infty} gives
\[
r(\lambda X)\geq r(X) \qquad \forall\,\lambda\in[0,1],
\]
which, together with $r(0)=+\infty$, is precisely the star-shapedness of 
\cite{Righi2024}. Moreover, under $r(0)=+\infty$, star-shapedness is equivalent 
to quasiconcavity at $0$: indeed,
\[
r(\lambda X)=r(\lambda X+(1-\lambda)\cdot 0)\geq\min\{r(X),r(0)\}=r(X),
\]
and conversely.

The next proposition shows that, for monotone increasing maps, 
condition~\eqref{def:cash-quasiconc-r(k)infty} is sufficient for 
cash-quasiconcavity. The converse does not hold in general, as 
cash-quasiconcavity does not require $r(k)=+\infty$ for $k\in\mathbb{R}_+$.
\end{remark}

\begin{proposition}
Let $r\colon L^\infty\to[0,+\infty]$ be a monotone increasing map.
\begin{itemize}
    \item[i)] If $r(\lambda X+(1-\lambda)k)\geq r(X)$ for all $X\in L^\infty$, 
    $\lambda\in[0,1]$, $k\in\mathbb{R}$, then $r$ is cash-quasiconcave.

    \item[ii)] If $r$ is quasiconcave at $0$ and $r(0)=+\infty$, then 
    $r(\lambda X+(1-\lambda)k)\geq r(X)$ for all $\lambda\in[0,1]$ and 
    $k\in\mathbb{R}_+$.
\end{itemize}
\end{proposition}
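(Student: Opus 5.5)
Both parts follow directly from the definitions; the only extra ingredient is monotonicity, used in part ii) to move from the mixture at level $0$ to the mixture at level $k\ge 0$.

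\textbf{Part i).} Fix $X\in L^\infty$, $k\in\mathbb{R}$ and $\lambda\in[0,1]$. The hypothesis gives $r(\lambda X+(1-\lambda)k)\geq r(X)$. Since $r(X)\geq\min\{r(X),r(k)\}$ trivially, we get
\[
r(\lambda X+(1-\lambda)k)\geq \min\{r(X),r(k)\},
\]
which is quasiconcavity at every $k\in\mathbb{R}$, i.e.\ cash-quasiconcavity. Monotonicity is not needed here. The one point to check is that the hypothesis is assumed for all real $k$, not only nonnegative ones, since cash-quasiconcavity quantifies over $k\in\mathbb{R}$. Part i) as stated does assume this.

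\textbf{Part ii).} Fix $X\in L^\infty$, $\lambda\in[0,1]$ and $k\geq 0$. The argument runs in two steps.

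\emph{Step 1.} Quasiconcavity at $0$ together with $r(0)=+\infty$ gives
\[
r(\lambda X)=r(\lambda X+(1-\lambda)\cdot 0)\geq \min\{r(X),r(0)\}=\min\{r(X),+\infty\}=r(X).
\]

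\emph{Step 2.} Since $(1-\lambda)k\geq 0$, we have $\lambda X+(1-\lambda)k\geq \lambda X$ pointwise. Monotonicity then yields
\[
r(\lambda X+(1-\lambda)k)\geq r(\lambda X)\geq r(X).
\]

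Step 2 is the only step requiring care, and it is where $k\in\mathbb{R}_+$ is essential. For $k<0$ the inequality $\lambda X+(1-\lambda)k\geq\lambda X$ fails, and the comparison with $r(\lambda X)$ breaks down. This is consistent with the statement restricting to nonnegative $k$.

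We can also note, via monotonicity, that $r(k)=+\infty$ for all $k\geq 0$, because $r(k)\geq r(0)=+\infty$. Hence the derived inequality coincides with quasiconcavity at each $k\geq 0$, matching the preceding remark.
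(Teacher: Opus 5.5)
Your proof is correct and follows essentially the same route as the paper. Part i) is the same one-line bound $r(X)\geq\min\{r(X),r(k)\}$; in part ii) you combine quasiconcavity at $0$ with $r(0)=+\infty$ and monotonicity from $\lambda X+(1-\lambda)k\geq\lambda X$, as the paper does, only stating the two inequalities in the opposite order.
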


\begin{proof}
\textbf{i)} For any $k\in\mathbb{R}$, using the hypothesis,
\[
r(\lambda X+(1-\lambda)k)\geq r(X)\geq\min\{r(X),r(k)\},
\]
which is cash-quasiconcavity.

\medskip
\textbf{ii)} Fix $k\in\mathbb{R}_+$ and $\lambda\in[0,1]$. Since $k\geq 0$, 
we have $\lambda X+(1-\lambda)k\geq\lambda X$ pointwise, so by monotonicity
\[
r(\lambda X+(1-\lambda)k)\geq r(\lambda X).
\]
By quasiconcavity at $0$ and $r(0)=+\infty$,
\[
r(\lambda X)=r(\lambda X+(1-\lambda)\cdot 0)\geq\min\{r(X),r(0)\}
=\min\{r(X),+\infty\}=r(X).
\]
Combining the two inequalities,
\[
r(\lambda X+(1-\lambda)k)\geq r(X). \qedhere
\]
\end{proof}

%  \iffalse
%  ***

% Let $r$ satisfy \quasistar. If $\lambda=0$, the claim holds trivially. 
% In fact we should prove that $r(0 \cdot X + k) =r(k) \geq r(X)$. 
% Taking into account quasiconcavity of $r$: 
% $r(0 \cdot X + k) \geq \min\left\{ r(X), r(k)\right\} = r(X)$ if $k>0$. Take $\lambda\in (0,1]$. Then we have 
% \[%\frac{1}{\lambda}
% r(\lambda X+(1-\lambda )k)\geq r(\lambda X)\geq \min \{r(X),r(0)\}=r(X).\] 

%  For the converse, if $k<0$, then the claim follows trivially since $\min\{r(X),r(k)\}=r(k)=0$.
%  In fact:
%  $$
%  r(\lambda X+(1-\lambda )k)  \geq \min\{r(X),r(k)\}=r(k)=0.
%  $$
%  If $k\geq 0$, then by monotonicity we have $r(k)\geq r(0)=\infty$. Hence, we obtain, \[r(\lambda X+(1-\lambda)k)\geq r(\lambda X)\geq r(X)=\min\{r(X),r(k)\}.\] 

%  \fi
%In particular, under the condition $r(0) = \infty$, we obtain the following characterizations of cash-quasiconcavity.

\begin{remark}
Note that part~ii) establishes condition~\eqref{def:cash-quasiconc-r(k)infty} 
only for $k\in\mathbb{R}_+$, whereas part~i) requires it for all $k\in\mathbb{R}$. 
Consequently, under the sole assumptions of quasiconcavity at $0$ and 
$r(0)=+\infty$, part~i) and part~ii) together imply cash-quasiconcavity only 
at $k\in\mathbb{R}_+$. Cash-quasiconcavity at $k<0$ requires additional 
assumptions on $r$, such as those in 
Proposition~\ref{prop:qconc0+add-qconck}.
\end{remark}
Another particular normalization choice is taking $r(0) = + \infty$ as in  \cite{RosazzaSgarra2013} and \cite{Righi2024}. %, meaning basically that a null position is always acceptable. However, this assumption would not incentivize building risky portoflios as doing nothing would be always preferable.
If this is assumed, the following remark further justifies relaxing the properties to cash-quasiconcavity alone to avoid the collapse to trivial cases.

\begin{remark}
Assume that $r(0)=+\infty$.

\begin{itemize}

\item[i)]
If $r$ is concave at $0$, then $r$ is trivial, namely
\[
r(X)=+\infty
\qquad\forall X\in L^\infty .
\]

Indeed, concavity at $0$ yields
\[
r(\lambda X+(1-\lambda)0)
   \ge \lambda r(X)+(1-\lambda)r(0)
\]
for every $X\in L^\infty$ and every $\lambda\in[0,1]$.
Since $r(0)=+\infty$, for every $\lambda\in[0,1)$ we obtain
\[
r(\lambda X)\ge
\lambda r(X)+(1-\lambda)(+\infty)
=+\infty,
\]
hence
\[
r(\lambda X)=+\infty .
\]
Given any $Y\in L^\infty$, choosing $\lambda\in(0,1)$ and
$X=Y/\lambda$, we obtain
\[
r(Y)=r(\lambda X)=+\infty.
\]
Therefore $r$ is identically equal to $+\infty$.

\item[ii)]
If $r$ is monotone increasing, then every almost surely nonnegative
position is assigned the value $+\infty$. More precisely, for every
$X\in L^\infty$ such that $X\ge 0$ $\mathbb P$-a.s.,
\[
r(X)\ge r(0)=+\infty,
\]
and therefore
\[
r(X)=+\infty.
\]

Hence every nonnegative payoff is automatically acceptable.

\end{itemize}
\end{remark}

% \begin{remark}
% We observe that in case $r(0)=+\infty$:
% \begin{itemize}
%     \item[i)] Concavity at $0$ implies that $r$ is trivial, i.e., $r(X)=+\infty$ for all $X\in L^\infty$. In fact, we would have  
%     $$
%       r(\lambda X + (1-\lambda)0) \geq \lambda r(X) + (1-\lambda) r(0) = +\infty,
%     $$
%     that is $r(\lambda X)=+\infty$ for every $X \in L^\infty$, for every $\lambda \in [0,1]$. 
    
%     % \viola{This is not correct! $r$ cannot be trivial if $r(0)=+\infty$ and $r$ is concave at $0$} In addition, if $r$ is not trivial, i.e. $r(X) \neq +\infty$ for some $X \in L^\infty$, and $r(0)=+\infty$ then super-homogeneity is ``not compatible'' with concavity at $0$, since 
%     % for any $X$ and $\lambda \geq 1$ such that
%     % $r(\lambda X) \in \mathbb{R}$ we would have 
%     % $$
%     % r(\lambda X) < r(0)= +\infty = r(\lambda X) +(1-\lambda ) r(0), $$
%     % which contradicts concavity at $0$.
% \item[ii)] If $r$ is quasiconcave at $0$, then for every $X \geq 0$, $\mathbb{P}$-a.s., we have $r(X)=+\infty$, meaning that any random payoff almost surely positive is always acceptable.
% \end{itemize}
% \end{remark}
From a financial perspective, the normalization $r(0) = +\infty$ characterizes passive investors and accumulators, such as those who leave capital idle or follow a static ``buy-and-hold'' strategy. In this framework, the status quo serves as an infinitely acceptable benchmark, implying that a position with a zero payoff is always inherently acceptable. Consequently, these agents consistently prioritize their current holdings over any active reallocation, as ``doing nothing'' is viewed as an optimal baseline.

\section{Characterization of ranking metrics} \label{sec:rank-metric-charact}
In this section, we characterize ranking metrics with respect to a family of acceptance sets and a family of risk measures. 
Acceptance sets $\mathcal{A}$ are widely used in the theory of risk measures and acceptability indices to characterize acceptable positions (in terms of risk or performance). When acceptability is indexed by a level $x \in \mathbb{R}_+$, this leads to a family of acceptance sets at different levels $\{\mathcal{A}_x\}_{x \in \mathbb{R}_+}$ where $\mathcal{A}_x$ consists of all positions acceptable at level $x$. A position is said to be acceptable at level $x$ if and only belongs to $\mathcal{A}_x$. Let us provide some preliminary definitions.
\begin{definition}
A subset $\mathcal{A}\subseteq L^\infty$ is:
\begin{itemize}
    \item[i)] monotone: if $X \leq Y, X\in\mathcal{A} \implies Y \in\mathcal{A}$; 
    \item [ii)] cash-convex: if $ X\in\mathcal{A} , \, k\in\mathcal{A}\cap \R \implies\lambda X+(1-\lambda)k\in\mathcal{A}$ for any $\lambda\in[0,1]$.
    %\item [iii)] monetary: if $\inf\{c\in\mathbb{R}\colon c\in\mathcal{A}\}=0$.\blue{Do we use this?}
\end{itemize}
A family of sets $\{\mathcal{A}_x\}_{x\in \R_+}$ is said  
\begin{itemize}
    \item[i)] increasing (decreasing, resp.) monotone: if $\mathcal{A}_x\subseteq\mathcal{A}_y$ when $x\leq y $ ($x\geq y$, resp); \item [ii)] cash-convex: if $\mathcal{A}_x$ is cash-convex for any $x\in \R_+$.
\end{itemize}
\end{definition}

Given a map $r: L^\infty  \to [0,+\infty]$, for instance a ranking metric, a natural choice of acceptance sets is the family of upper level sets
\[
\mathcal{A}^r_x := \{ X \in L^\infty : r(X) \ge x \}, \quad x\in \R_+,
\]
thus, $\mathcal{A}^r_x$ represents the set of positions whose performance is at least $x$. In the following result, we characterize \quasistar\ in terms of this family.

\begin{lemma}\label{lem:r-acceptance}
Consider $r\colon L^\infty\to[0,+\infty]$.
\begin{itemize}
    \item[i)] $r$ is cash-quasiconcave if and only if $\{\mathcal{A}^r_x\}_{x\in\R_+}$ 
    is cash-convex.
    \item[ii)] $r$ is (increasing) monotone if and only if $\mathcal{A}^r_x$ is 
    monotone for every $x>0$. Moreover, the family $\{\mathcal{A}^r_x\}_{x\in\R_+}$ 
    is decreasing monotone for every map $r$.
\end{itemize}
\end{lemma}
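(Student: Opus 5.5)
The plan is to work directly from the definition $\mathcal{A}^r_x=\{X: r(X)\ge x\}$ and translate each property of $r$ into a statement about its upper level sets and back. The only delicate point throughout is that $r$ may take the value $+\infty$ while the index $x$ ranges only over $\mathbb{R}_+$. So for the converse directions I will never use the level $x=r(X)$ directly. Instead I will use all finite levels below the relevant value and then take a supremum.

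\textbf{Part i), forward direction.} Assume $r$ is cash-quasiconcave. Fix $x\in\mathbb{R}_+$, take $X\in\mathcal{A}^r_x$ and $k\in\mathcal{A}^r_x\cap\mathbb{R}$, and let $\lambda\in[0,1]$. Then $r(X)\ge x$ and $r(k)\ge x$. Cash-quasiconcavity gives $r(\lambda X+(1-\lambda)k)\ge\min\{r(X),r(k)\}\ge x$, so the mixture lies in $\mathcal{A}^r_x$.

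\textbf{Part i), converse direction.} Assume every $\mathcal{A}^r_x$ is cash-convex. Fix $X$, $k$, $\lambda$ and set $m:=\min\{r(X),r(k)\}\in[0,+\infty]$. If $m=0$ there is nothing to prove, since $r\ge 0$. Otherwise, for every finite $x\in(0,m]$, both $X$ and $k$ belong to $\mathcal{A}^r_x$. Cash-convexity then gives $r(\lambda X+(1-\lambda)k)\ge x$. Taking the supremum over such $x$ yields the inequality $\ge m$. This covers $m=+\infty$ as well, with no separate treatment.

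\textbf{Part ii).} If $r$ is monotone and $X\le Y$ with $X\in\mathcal{A}^r_x$, then $r(Y)\ge r(X)\ge x$, so $\mathcal{A}^r_x$ is monotone. Conversely, suppose $\mathcal{A}^r_x$ is monotone for every $x>0$, and let $X\le Y$. If $r(X)=0$, then monotonicity of $r$ at this pair is trivial. Otherwise, for each finite $x\in(0,r(X)]$ we have $X\in\mathcal{A}^r_x$, hence $Y\in\mathcal{A}^r_x$, i.e. $r(Y)\ge x$. The supremum over such $x$ gives $r(Y)\ge r(X)$.

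The restriction to $x>0$ costs nothing, since $\mathcal{A}^r_0=L^\infty$ is trivially monotone. Finally, the family is decreasing for any map $r$: if $x\ge y$ and $r(X)\ge x$, then $r(X)\ge y$, so $\mathcal{A}^r_x\subseteq\mathcal{A}^r_y$.

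I expect no real obstacle here. The one step needing care is the passage through finite levels in the converse directions when $r(X)$ or $r(k)$ equals $+\infty$, and the supremum argument above handles it.
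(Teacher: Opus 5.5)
Your proof is correct and follows the paper's argument essentially step by step. The only cosmetic difference is in the converse of part i): you take a supremum over finite levels $x\in(0,m]$, which covers $m=+\infty$ without a separate case. The paper instead uses the level $\bar x=\min\{r(X),r(k)\}$ directly when it is finite and treats $\bar x=+\infty$ separately; it reserves your supremum device for part ii).
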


\begin{proof}
\textbf{i)} $(\Leftarrow)$ Let $X\in L^\infty$, $k\in\R$, and $\lambda\in[0,1]$. Set 
$\bar x:=\min\{r(X),r(k)\}$.

If $\bar x<+\infty$, then $X\in\mathcal{A}^r_{\bar x}$ and 
$k\in\mathcal{A}^r_{\bar x}\cap\R$, with $\bar x\in\R_+$; by cash-convexity of 
$\mathcal{A}^r_{\bar x}$,
\[
\lambda X+(1-\lambda)k\in\mathcal{A}^r_{\bar x},
\quad\text{i.e.}\quad
r(\lambda X+(1-\lambda)k)\geq\bar x=\min\{r(X),r(k)\}.
\]
If $\bar x=+\infty$, then $r(X)=r(k)=+\infty$, so $X,k\in\mathcal{A}^r_x$ for every 
$x\in\R_+$; by cash-convexity of each $\mathcal{A}^r_x$, 
$\lambda X+(1-\lambda)k\in\mathcal{A}^r_x$ for every $x\in\R_+$, whence 
$r(\lambda X+(1-\lambda)k)=+\infty=\min\{r(X),r(k)\}$.

In both cases $r$ is cash-quasiconcave.

$(\Rightarrow)$ Assume $r$ is cash-quasiconcave and fix $x\in\R_+$. Let 
$X\in\mathcal{A}^r_x$ and $k\in\mathcal{A}^r_x\cap\R$, i.e. $r(X)\geq x$ and 
$r(k)\geq x$. For every $\lambda\in[0,1]$,
\[
r(\lambda X+(1-\lambda)k)\geq\min\{r(X),r(k)\}\geq x,
\]
so $\lambda X+(1-\lambda)k\in\mathcal{A}^r_x$. Therefore $\mathcal{A}^r_x$ is 
cash-convex for every $x\in\R_+$.

\medskip
\textbf{ii)} $(\Rightarrow)$ Suppose $r$ is increasing monotone and fix $x>0$. If 
$X\in\mathcal{A}^r_x$ and $Y\geq X$, then $r(Y)\geq r(X)\geq x$, so 
$Y\in\mathcal{A}^r_x$; thus $\mathcal{A}^r_x$ is monotone.

$(\Leftarrow)$ Suppose $\mathcal{A}^r_x$ is monotone for every $x>0$, and let 
$X\leq Y$. If $r(X)=0$, then $r(Y)\geq 0=r(X)$ trivially. If $r(X)>0$, then for 
every $x\in\R_+$ with $0<x\leq r(X)$ we have $X\in\mathcal{A}^r_x$, hence by 
monotonicity $Y\in\mathcal{A}^r_x$, i.e. $r(Y)\geq x$. Taking the supremum over 
all such $x$ gives $r(Y)\geq r(X)$ (with $r(Y)=+\infty$ when $r(X)=+\infty$). 
Therefore $r$ is increasing monotone.

Finally, the family $\{\mathcal{A}^r_x\}_{x\in\R_+}$ is decreasing monotone for 
every map $r$: if $x\geq y$ and $X\in\mathcal{A}^r_x$, then $r(X)\geq x\geq y$, so 
$X\in\mathcal{A}^r_y$, i.e. $\mathcal{A}^r_x\subseteq\mathcal{A}^r_y$.
\end{proof}

% \textcolor{blue}{Shall we put a remark or proposition on the relation between properties of $r$ and of acceptance sets? }

% \textbf{MR: It is interesting to put the relationships in the blue remark below into the Lemma above, even if we do not prove them.}

% \blue{\begin{remark}TO DO MAYBE
% We recall here some well-known results. Take $r: L^\infty\subseteq L^{\infty} \to [0,+\infty]$    \begin{itemize}
%         \item $r$ is cash-concave if and only if $\{\mathcal{A}^r_x\}_{x \in \R_+}$ is xxx;
%         \item $r$ is cash-additive if and only if $\{\mathcal{A}^r_x\}_{x \in \R_+}$ is a xxx    
%     \end{itemize}
% \end{remark}}

%\textbf{MR: I believe these properties on risk measures must to be put into a Definition environment.}

\begin{definition} 
 A map $\rho:  L^{\infty} \to \R \cup \left\{ \pm\infty\right\}$ is a risk measure if it satisfies:
\begin{itemize}
    \item[i)] (decreasing) monotonicity: $X \leq Y, \quad X,Y \in L^\infty \implies \rho(X) \geq \rho(Y)$, 
    \item[ii)] cash-quasiconvexity: $ \rho(\lambda X + (1-\lambda) k) \leq \max \{ \rho(X),  \rho(k) \}, \; X \in L^\infty , k \in \R$, $\lambda \in [0,1]$.
\end{itemize}   
\end{definition}

Monotonicity is the most universally accepted property of a risk measure: if a position is expected to incur greater losses, its risk must be higher.  Throughout, when not specified, monotonicity is understood as decreasing in the case of risk measures and increasing in the case of ranking metrics. Cash-quasiconvexity naturally generalizes classical quasiconvexity by requiring that the risk of a mixture between a risky position and a deterministic payoff never exceeds the maximum risk of the two components evaluated in isolation. This property offers a minimal and intuitive criterion for risk measure selection while extending the \cite{DrapeauKupper2013} framework. 

\begin{definition}
 A map $\rho: L^{\infty} \to \mathbb{R}\cup \left\{ \pm\infty\right\}$ can also satisfy:
\begin{itemize}
     \item[i)]convexity at $k\in \mathbb{R}$: $\rho(\lambda X+(1-\lambda)k)\leq \lambda \rho(X)+(1-\lambda)\rho(k)$, $X \in L^\infty,\lambda\in[0,1]$;
         \item[ii)]convexity: $\rho(\lambda X+(1-\lambda)Y)\leq \lambda \rho(X)+(1-\lambda)\rho(Y)$, $X,Y \in L^\infty,\lambda\in[0,1]$;
     \item[iii)] quasiconvexity at $k\in \mathbb{R}$: $\rho(\lambda X+(1-\lambda)k)\leq \max\left\{ \rho(X),\rho(k)\right\}$, $X \in L^\infty,\lambda\in[0,1]$;
     \item[iv)] quasiconvexity: $\rho(\lambda X+(1-\lambda)Y)\leq \max\left\{ \rho(X),\rho(Y)\right\}$, $X,Y \in L^\infty,\lambda\in[0,1]$;
     \item[v)] $[0,1]$-sublinearity:  
   $\rho(\lambda X) \leq \lambda \rho(X)$, $X \in L^\infty$, $\lambda \in [0,1]$; 
    \item[vi)] additivity at $k\in \mathbb{R}_+$: $\rho(X-k) = \rho(X)+k$, $X \in L^\infty$; equivalently, $\rho(X+k) = \rho(X)-k$, $X \in L^\infty$;
    \item[vii)] sub-additivity at $k\in \mathbb{R}_+$: $\rho(X-k) \leq \rho(X)+k$, $X \in L^\infty$; equivalently, $\rho(X+k) \geq \rho(X)-k$, $X \in L^\infty$;
    \item[viii)] super-additivity at $k\in \mathbb{R}_+$: $\rho(X-k) \geq \rho(X)+k$, $X \in L^\infty$; equivalently, $\rho(X+k) \leq \rho(X)-k$, $X \in L^\infty$;
    %\blue{$\rho(X+k)+k$ is increasing in $k$}.
    
\end{itemize}
 If $\rho$ satisfies convexity at $k$ for all $k\in \R$, $\rho$ is called cash-convex. If $\rho$ satisfies quasiconvexity at $k$ for all $k \in \mathbb{R}$, $\rho$ satisfies cash-quasiconvexity. If $\rho$ satisfies additivity at $k$ (respectively sub/super-additivity at $k$) for all $k\in \R_+$, then $\rho$ satisfies the well known cash-additivity (respectively cash-sub/super-additivity). 

\end{definition}

\begin{remark}
Notice that convexity at $0$ 
corresponds to the star-shapedness property in \cite{Han2024} and, in case $\rho$ is normalized (i.e., $\rho(0)=0)$ it also corresponds to star-shapedness in \cite{Castagnoli2022}; cash-quasiconvexity corresponds to the quasi-star-shaped property in \cite{Han2024}.
\end{remark}

  We recall that a risk measure is called \textit{coherent} \citep{Artzner1999}  if it is monotone, subadditive (i.e. $\rho(X+Y) \leq \rho(X)+\rho(Y), $ for all $ X, Y \in L^\infty$), cash-additive, and normalized at zero. It is called \textit{convex} if it is monotone, cash-additive and convex \citep{FollmerSchied2002, frittelli2002R}.

As in the case of ranking metrics, the following proposition establishes the relationships among these properties.

%\textbf{MR: the proposition below is more a Lemma. 

\begin{proposition}
Let $\rho:  L^{\infty} \to \R\cup \left\{ \pm \infty\right\}$.
\begin{itemize}
%\item[i)] \ If $\rho$ is $[0,1]$-sublinear, then $\rho(0)=0$.
\item[i)] \ If $\rho$ is $[0,1]$-sublinear and $\rho(0)=0$, then $\rho$ is convex (therefore quasiconvex) at 0.    

\item[ii)] \ If $\rho$ cash-additive and convex at 0. then $\rho$ is cash-convex, therefore cash-quasiconvex.

\end{itemize}   
\end{proposition}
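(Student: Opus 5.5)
The plan mirrors Proposition~\ref{prop: prop_relation_RM} for ranking metrics, with the inequalities reversed and the sign convention of cash-additivity for risk measures, $\rho(X+k)=\rho(X)-k$.

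\textbf{Part i).} We unwind convexity at $0$. For $X\in L^\infty$ and $\lambda\in[0,1]$, $[0,1]$-sublinearity and $\rho(0)=0$ give
\[
\rho(\lambda X+(1-\lambda)\cdot 0)=\rho(\lambda X)\leq \lambda\rho(X)=\lambda\rho(X)+(1-\lambda)\rho(0).
\]
Quasiconvexity at $0$ then follows from $\lambda a+(1-\lambda)b\leq\max\{a,b\}$. The only care needed concerns extended values. If $\rho(X)=\pm\infty$, we interpret $\lambda\rho(X)$ in the obvious way, and the convex combination is well defined because $\rho(0)=0$ is finite.

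\textbf{Part ii).} First, cash-additivity yields $\rho(k)=\rho(0)-k$ for every $k\in\R$. For $k\geq 0$ this is additivity at $k$. For $k<0$ we apply additivity at $-k$ to the point $k$: $\rho(0)=\rho(k+(-k))=\rho(k)+k$. Next, fix $k\in\R$ and $\lambda\in[0,1]$, and write $\lambda X+(1-\lambda)k=\lambda X+(1-\lambda)k$ with the constant $(1-\lambda)k$ acting as a cash shift. Cash-additivity (applied in either sign as above) gives
\[
\rho(\lambda X+(1-\lambda)k)=\rho(\lambda X)-(1-\lambda)k.
\]
Convexity at $0$ gives $\rho(\lambda X)\leq\lambda\rho(X)+(1-\lambda)\rho(0)$. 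Combining,
\[
\rho(\lambda X+(1-\lambda)k)\leq \lambda\rho(X)+(1-\lambda)\bigl(\rho(0)-k\bigr)=\lambda\rho(X)+(1-\lambda)\rho(k),
\]
which is convexity at $k$. Since this holds for every $k$, $\rho$ is cash-convex, and cash-quasiconvexity follows as in part i).

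\textbf{Main obstacle.} The algebra is routine. The delicate points are, first, extending cash-additivity from $k\in\R_+$ to negative shifts, handled by the substitution above. Second, $\rho(0)$ may be infinite: if $\rho(0)=\pm\infty$ then $\rho(k)=\rho(0)$ for all $k$, and the inequalities must be checked with the convention $\infty-k=\infty$. If needed, I would add the standing assumption that $\rho(0)$ is finite, as implicitly done in Lemma~\ref{lem:normalization}.
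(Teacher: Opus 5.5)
Your argument is correct and follows the paper's approach: the paper's proof only says that both parts mirror Proposition~\ref{prop: prop_relation_RM} with the inequalities reversed, and your write-up supplies exactly those details. You also handle part ii) correctly where a literal transcription of that template would not. Part ii) does not assume $\rho(0)=0$ and asks for convexity at every $k\in\R$, not only $k\geq 0$, so your use of $\rho(k)=\rho(0)-k$ together with cash shifts of both signs is exactly what is needed.
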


\begin{proof}
i) The reasoning is analogous to that in Proposition~\ref{prop: prop_relation_RM}.i).  
 ii) The proof parallels that of Proposition~\ref{prop: prop_relation_RM}.ii). 
 %with straightforward adaptations for any $k \in \mathbb{R}$.    
\end{proof}

The following result provides a characterization of a ranking metric in terms of its acceptance sets and an associated family of risk measures. Let $\{\rho_x\}_{x\in \mathbb{R}}$ be a family of risk measures. We say that this family is increasing if $\rho_x\geq\rho_y$ when $x\geq y$.

\begin{theorem}\label{thm:main}
The following statements are equivalent:
\begin{enumerate}
    \item $r$ is a ranking metric.
    \item There is a decreasing (in $x$) family $\{\mathcal{A}_x\}_{x\in \mathbb{R}_+}$ 
    of monotone and cash-convex acceptance sets such that
    \begin{equation}\label{eq:rhoTh1.3-Ax}
        r(X) = \sup\{ x > 0 \colon X \in \mathcal{A}_x\}.
    \end{equation}
    \item There is an increasing family of risk measures $\{\rho_x\}_{x \geq 0}$ such that
    \begin{equation}\label{eq:rhoTh1.3}
        r(X) = \sup \{x > 0 \colon \rho_x(X) \leq 0\}.
    \end{equation}
\end{enumerate}
\end{theorem}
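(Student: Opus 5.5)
I will prove the cycle $1\Rightarrow 2\Rightarrow 3\Rightarrow 1$, using Lemma~\ref{lem:r-acceptance} as the main tool.

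\textbf{Step $1\Rightarrow2$.} Given a ranking metric $r$, I take $\mathcal{A}_x:=\mathcal{A}^r_x=\{X: r(X)\ge x\}$. By Lemma~\ref{lem:r-acceptance}, this family is decreasing and each set is monotone and cash-convex. For $x=0$ we have $\mathcal{A}^r_0=L^\infty$, which is trivially monotone. The representation holds because $\{x>0: r(X)\ge x\}=(0,r(X)]$, or $(0,\infty)$ when $r(X)=+\infty$. Its supremum is $r(X)$, and the convention $\sup\emptyset=0$ covers the case $r(X)=0$.

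\textbf{Step $2\Rightarrow3$.} Given the family $\{\mathcal{A}_x\}$, I define $\rho_x(X):=\inf\{m\in\R: X+m\in\mathcal{A}_x\}$, with $\inf\emptyset=+\infty$.
\begin{itemize}
\item \emph{Monotonicity.} Since $\mathcal{A}_x$ is monotone, $\rho_x$ is decreasing.
\item \emph{Increasing in $x$.} Since $\mathcal{A}_x\subseteq\mathcal{A}_y$ for $x\ge y$, we get $\rho_x\ge\rho_y$.
\item \emph{Cash-quasiconvexity.} Take $m>\max\{\rho_x(X),\rho_x(k)\}$, so that $X+m$ and $k+m$ lie in $\mathcal{A}_x$. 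The point $k+m$ is a constant in $\mathcal{A}_x$, so cash-convexity gives $\lambda(X+m)+(1-\lambda)(k+m)=\lambda X+(1-\lambda)k+m\in\mathcal{A}_x$. Hence $\rho_x(\lambda X+(1-\lambda)k)\le m$.
\end{itemize}
The delicate point is the representation, because $\{\rho_x\le 0\}$ may be strictly larger than $\mathcal{A}_x$ when $\mathcal{A}_x$ is not closed under constant shifts. Monotonicity gives
\[
\mathcal{A}_x\subseteq\{\rho_x\le0\}\subseteq\bigcap_{\epsilon>0}(\mathcal{A}_x-\epsilon).
\]
To handle this I plan to avoid the issue entirely: I define $r$ from the sets and compose. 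Set $r(X):=\sup\{x>0: X\in\mathcal{A}_x\}$ and then choose $\rho_x(X):=\inf\{m: r(X+m)\ge x\}$? That reintroduces the same gap. The cleaner fix is to use the indicator-type risk measure
\[
\rho_x(X):=\begin{cases}-1 & \text{if } X\in\mathcal{A}_x,\\ +1 & \text{otherwise}.\end{cases}
\]
Here monotonicity comes from monotonicity of $\mathcal{A}_x$, and cash-quasiconvexity from cash-convexity: if the max is $-1$, then both $X$ and $k$ lie in $\mathcal{A}_x$, hence so does the mixture. The family is increasing because the sets are decreasing. Moreover $\{\rho_x\le0\}=\mathcal{A}_x$ exactly, so the representation transfers verbatim. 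This is the step I expect to be the main obstacle, and I will use the indicator construction to resolve it.

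\textbf{Step $3\Rightarrow1$.} Given $r(X)=\sup\{x>0:\rho_x(X)\le0\}$, I set $\mathcal{B}_x:=\{\rho_x\le0\}$.
\begin{itemize}
\item \emph{Monotonicity.} If $X\le Y$, then $\rho_x(Y)\le\rho_x(X)$, so $\{x:\rho_x(X)\le0\}\subseteq\{x:\rho_x(Y)\le0\}$. Thus $r(X)\le r(Y)$.
\item \emph{Cash-quasiconcavity.} Let $\bar x=\min\{r(X),r(k)\}$; the case $\bar x=0$ is trivial. For any $0<x<\bar x$, the definition of the supremum gives $x'>x$ with $\rho_{x'}(X)\le0$, and $x''>x$ with $\rho_{x''}(k)\le0$. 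Because the family is increasing, $\rho_x\le\rho_{x'}$ and $\rho_x\le\rho_{x''}$, so $\rho_x(X)\le0$ and $\rho_x(k)\le0$.
\end{itemize}
Cash-quasiconvexity of $\rho_x$ then gives $\rho_x(\lambda X+(1-\lambda)k)\le0$, so $r(\lambda X+(1-\lambda)k)\ge x$. Letting $x\uparrow\bar x$ yields the claim, and the case $\bar x=+\infty$ is covered in the same way. Nonnegativity of $r$ follows from the convention $\sup\emptyset=0$.
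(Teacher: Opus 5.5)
Your proof is correct, but it is organized differently from the paper's. The paper proves $1\Leftrightarrow2$ and $1\Leftrightarrow3$ as two separate equivalences. For $1\Rightarrow3$ it builds the risk measures from $r$ itself, $\rho_x(X):=x-r(X)$, so that $\rho_x(X)\le0$ exactly when $r(X)\ge x$.

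You instead close the cycle $1\Rightarrow2\Rightarrow3\Rightarrow1$. In $2\Rightarrow3$ you build $\rho_x$ from an arbitrary admissible family of sets through the $\{-1,+1\}$-valued indicator, whose zero-sublevel set is exactly $\mathcal{A}_x$. This saves one implication, since you never need $2\Rightarrow1$ directly. It also proves slightly more: every decreasing family of monotone cash-convex sets, not only the upper level sets of $r$, is the zero-sublevel family of an increasing family of risk measures.

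The paper's choice $\rho_x=x-r$ is less economical but keeps a monetary reading. It is a shift family of the kind in Corollary~\ref{cor:-rho} and is reused later for the cash-quasiconcave envelope, while your indicators carry only ordinal information.

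You were right to abandon the translation hull $\inf\{m\in\R: X+m\in\mathcal{A}_x\}$. For example, with $\mathcal{A}_x=\{X:\operatorname{ess\,inf}X>0\}$ for all $x$, it turns $r(0)=0$ into $+\infty$. Delete that detour from the final write-up.

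Finally, in $3\Rightarrow1$ you explicitly use that the family is increasing to get $\rho_x(X)\le0$ for every $x<\bar x$. The paper asserts this step without comment, and likewise uses the decreasing property of $\{\mathcal{A}_x\}$ only implicitly in $2\Rightarrow1$, so your version is the more careful one there.
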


\begin{proof}
We prove the equivalences by showing $1.\Rightarrow 2.$, $2.\Rightarrow 1.$, 
$1.\Rightarrow 3.$, and $3.\Rightarrow 1.$.

\medskip

\textbf{$1.\Rightarrow 2.$} Suppose $r$ is a ranking metric. Define the upper level sets
\[
    \mathcal{A}_x := \{X \in L^\infty \colon r(X) \geq x\}, \qquad x \in \mathbb{R}_+.
\]
We verify that the family $\{\mathcal{A}_x\}_{x \in \mathbb{R}_+}$ satisfies all the
required properties.

  {Monotonicity of each $\mathcal{A}_x$.} If $X \in \mathcal{A}_x$ and $Y \geq X$,
then $r(Y) \geq r(X) \geq x$ by monotonicity of $r$, so $Y \in \mathcal{A}_x$.

  {Cash-convexity of each $\mathcal{A}_x$.} If $X \in \mathcal{A}_x$ and 
$k \in \mathcal{A}_x \cap \mathbb{R}$, i.e.\ $r(X) \geq x$ and $r(k) \geq x$, then
by cash-quasiconcavity of $r$,
\[
    r\bigl(\lambda X + (1-\lambda)k\bigr) \geq \min\{r(X), r(k)\} \geq x,
\]
so $\lambda X + (1-\lambda)k \in \mathcal{A}_x$ for every $\lambda \in [0,1]$.

  {Decreasing monotonicity of the family.} If $x \leq y$ and $X \in \mathcal{A}_y$,
then $r(X) \geq y \geq x$, so $X \in \mathcal{A}_x$. Hence $\mathcal{A}_y \subseteq
\mathcal{A}_x$.

  {Representation.} By definition of $\mathcal{A}_x$,
\[
    \sup\{x > 0 \colon X \in \mathcal{A}_x\} 
    = \sup\{x > 0 \colon r(X) \geq x\} 
    = r(X),
\]
where the last equality follows from the convention $\sup \emptyset = 0$.

\medskip

\textbf{$2.\Rightarrow 1.$} Suppose $\{\mathcal{A}_x\}_{x \in \mathbb{R}_+}$ is a
decreasing family of monotone and cash-convex sets and $r$ is defined by
\eqref{eq:rhoTh1.3-Ax}.

  {Monotonicity of $r$.} If $X \leq Y$ and $X \in \mathcal{A}_x$, then
$Y \in \mathcal{A}_x$ by monotonicity of $\mathcal{A}_x$. Hence
$\{x > 0 \colon X \in \mathcal{A}_x\} \subseteq \{x > 0 \colon Y \in \mathcal{A}_x\}$,
which gives $r(X) \leq r(Y)$.

  {Cash-quasiconcavity of $r$.} Fix $X \in L^\infty$, $k \in \mathbb{R}$,
$\lambda \in [0,1]$, and set $\bar{x} := \min\{r(X), r(k)\}$. For every
$x < \bar{x}$ we have $r(X) \geq x$ and $r(k) \geq x$, i.e.\
$X \in \mathcal{A}_x$ and $k \in \mathcal{A}_x \cap \mathbb{R}$. By
cash-convexity of $\mathcal{A}_x$,
\[
    \lambda X + (1-\lambda)k \in \mathcal{A}_x.
\]
Since this holds for all $x < \bar{x}$, taking the supremum gives
$r(\lambda X + (1-\lambda)k) \geq \bar{x} = \min\{r(X), r(k)\}$.

\medskip

\textbf{$1.\Rightarrow 3.$} Suppose $r$ is a ranking metric. Define
\[
    \rho_x(X) := x - r(X), \qquad x \geq 0.
\]

  {Each $\rho_x$ is a risk measure.} Monotonicity: if $X \leq Y$, then
$r(X) \leq r(Y)$, so $\rho_x(X) = x - r(X) \geq x - r(Y) = \rho_x(Y)$, i.e.\
$\rho_x$ is decreasing monotone. Cash-quasiconvexity: for every $X \in L^\infty$,
$k \in \mathbb{R}$, $\lambda \in [0,1]$,
\[
    \rho_x\bigl(\lambda X + (1-\lambda)k\bigr) 
    = x - r\bigl(\lambda X + (1-\lambda)k\bigr)
    \leq x - \min\{r(X), r(k)\}
    = \max\{x - r(X),\, x - r(k)\}
    = \max\{\rho_x(X), \rho_x(k)\},
\]
where we used cash-quasiconcavity of $r$ in the inequality
and the identity
$
-\min\{a,b\} = \max\{-a,-b\}$ in the last step.

  {The family $\{\rho_x\}_{x \geq 0}$ is increasing.} For $x \leq y$,
$\rho_y(X) - \rho_x(X) = y - x \geq 0$, so $\rho_y \geq \rho_x$.

  {Representation.} We have $\rho_x(X) \leq 0$ if and only if $x \leq r(X)$.
Hence
\[
    \sup\{x > 0 \colon \rho_x(X) \leq 0\} 
    = \sup\{x > 0 \colon x \leq r(X)\} 
    = r(X).
\]

\medskip

\textbf{$3.\Rightarrow 1.$} Suppose $\{\rho_x\}_{x \geq 0}$ is an increasing family
of risk measures and $r$ is defined by \eqref{eq:rhoTh1.3}.

  {Monotonicity of $r$.} If $X \leq Y$, then by decreasing monotonicity of each
$\rho_x$, we have $\rho_x(X) \geq \rho_x(Y)$. Therefore $\rho_x(X) \leq 0$ implies
$\rho_x(Y) \leq \rho_x(X) \leq 0$, so
\[
    \{x > 0 \colon \rho_x(X) \leq 0\} \subseteq \{x > 0 \colon \rho_x(Y) \leq 0\},
\]
which gives $r(X) \leq r(Y)$.

  {Cash-quasiconcavity of $r$.} Fix $X \in L^\infty$, $k \in \mathbb{R}$,
$\lambda \in [0,1]$, and set $\bar{x} := \min\{r(X), r(k)\}$. For every $x < \bar{x}$,
we have $\rho_x(X) \leq 0$ and $\rho_x(k) \leq 0$, i.e.\
$\max\{\rho_x(X), \rho_x(k)\} \leq 0$. By cash-quasiconvexity of $\rho_x$,
\[
    \rho_x\bigl(\lambda X + (1-\lambda)k\bigr) 
    \leq \max\{\rho_x(X), \rho_x(k)\} \leq 0.
\]
Since this holds for all $x < \bar{x}$, taking the supremum gives
$r(\lambda X + (1-\lambda)k) \geq \bar{x} = \min\{r(X), r(k)\}$.
\end{proof}

% \begin{definition}
%     A ranking metric $r\colon L^\infty \to [0,+\infty]$ is 
%     quasi-normalized if $r(k)=k$ for every $k \in \mathrm{Im}(r) \cap \mathbb{R}$,
%     where $\mathrm{Im}(r) = \{r(X) : X \in L^\infty\}$.
% \end{definition}

\begin{proposition}\label{prop:4iff1}
Let $\mathcal{I}$ be a family of monotone increasing, cash-quasiconcave maps
$\alpha\colon L^\infty\to[0,+\infty]$ satisfying:
\begin{itemize}
    \item[\rm i)] $\alpha \leq r$ for every $\alpha \in \mathcal{I}$;
    \item[\rm ii)] $\alpha(k) = k$ for every $k \in [d_\alpha, u_\alpha] \cap \mathbb{R}$
    and every $\alpha \in \mathcal{I}$, where $d_\alpha = \inf_{X \in L^\infty}\alpha(X)$
    and $u_\alpha = \sup_{X \in L^\infty}\alpha(X)$, with the convention
    $\alpha(k) = k$ for every $k \geq d_\alpha$ when $u_\alpha = +\infty$;
    \item[\rm iii)] $r(X) = \sup\{\alpha(X) : \alpha \in \mathcal{I}\}$ is attained
    for every $X \in L^\infty$;
    \item[\rm iv)] $r(0) = 0$.
\end{itemize}
Then $r$ is a ranking metric.

% Conversely, if $r\colon L^\infty \to [0,+\infty]$ is a ranking metric satisfying
% $r(0)=0$ and $r(k)=k$ for every $k \in [0,u] \cap \mathbb{R}$, where
% $u = \sup_{X \in L^\infty} r(X)$, then the family
% \[
%     \mathcal{I} = \bigl\{\alpha \colon L^\infty \to [0,+\infty] :
%     \alpha \text{ monotone, cash-quasiconcave},\
%     \alpha(k) = k\ \forall\, k \in [d_\alpha, u_\alpha]\cap\mathbb{R},\
%     \alpha \leq r\bigr\}
% \]
% satisfies $r(X) = \max\{\alpha(X) : \alpha \in \mathcal{I}\}$ for every
% $X \in L^\infty$, with the maximum attained by $r$ itself.
\end{proposition}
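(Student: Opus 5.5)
We must show that $r$ is monotone and cash-quasiconcave. By iii), $r=\sup_{\alpha\in\mathcal{I}}\alpha$ pointwise, so $r$ takes values in $[0,+\infty]$. Monotonicity is inherited immediately. If $X\le Y$, then $\alpha(X)\le\alpha(Y)\le r(Y)$ for every $\alpha\in\mathcal{I}$, and taking the supremum over $\alpha$ gives $r(X)\le r(Y)$.

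For cash-quasiconcavity, fix $X\in L^\infty$, $k\in\R$ and $\lambda\in[0,1]$. By iii) there is $\alpha^*\in\mathcal{I}$ with $\alpha^*(X)=r(X)$. Cash-quasiconcavity of $\alpha^*$ together with i) gives
\[
r(\lambda X+(1-\lambda)k)\ \ge\ \alpha^*(\lambda X+(1-\lambda)k)\ \ge\ \min\{\alpha^*(X),\alpha^*(k)\}=\min\{r(X),\alpha^*(k)\}.
\]
It therefore suffices to show $\min\{r(X),\alpha^*(k)\}\ge\min\{r(X),r(k)\}$. If $\alpha^*(k)\ge r(k)$ we are done. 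The issue is that the maximizer at $X$ need not be a maximizer at $k$, and this comparison is the main obstacle. The normalization ii) is what should resolve it.

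First I determine $r(k)$. Suppose $k\le 0$. Monotonicity of $r$ and iv) give $r(k)\le r(0)=0$, so $r(k)=0$ and the inequality is trivial. Suppose instead $k>0$. For every $\alpha\in\mathcal{I}$, iv) and i) give $0\le\alpha(0)\le r(0)=0$, so $d_\alpha=0$. I then split according to whether $k$ lies in $[0,u_\alpha]$.
\begin{itemize}
\item If $k\le u_\alpha$, or $u_\alpha=+\infty$, then ii) gives $\alpha(k)=k$.
\item If $k>u_\alpha$, then $\alpha(k)\le u_\alpha<k$.
\end{itemize}
In both cases $\alpha(k)\le k$, so $r(k)\le k$ after taking the supremum.

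Now I compare with $\alpha^*(k)$, splitting on the position of $k$ relative to $u_{\alpha^*}$.
\begin{itemize}
\item If $k\le u_{\alpha^*}$, then $\alpha^*(k)=k\ge r(k)$ and we conclude.
\item If $k>u_{\alpha^*}$, then $r(X)=\alpha^*(X)\le u_{\alpha^*}<k$. I would like to argue $\alpha^*(k)\ge r(X)$, using that $\alpha^*(k)$ should equal $u_{\alpha^*}$ by monotonicity.
\end{itemize}

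The second case is the delicate point. For this I use that $\alpha^*$ is monotone and that $\alpha^*(c)=c$ for constants $c<u_{\alpha^*}$ by ii). Then $\alpha^*(k)\ge\alpha^*(c)=c$ for every $c<u_{\alpha^*}$, so $\alpha^*(k)\ge u_{\alpha^*}\ge\alpha^*(X)=r(X)$. Hence $\min\{r(X),\alpha^*(k)\}=r(X)\ge\min\{r(X),r(k)\}$, which completes this case. The case $\lambda\in\{0,1\}$ is trivial throughout. The case $r(X)=+\infty$ forces $u_{\alpha^*}=+\infty$, so $\alpha^*(k)=k\ge r(k)$ by the convention in ii).
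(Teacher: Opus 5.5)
Your proof is correct and follows essentially the same route as the paper. Both arguments take the maximizer $\alpha^*$ at $X$ and use $d_\alpha=0$ together with ii) to get $\alpha(k)=k\wedge u_\alpha$ for $k\ge 0$, hence $r(k)\le k$; both then dispose of the case $k>u_{\alpha^*}$ via $\alpha^*(X)\le u_{\alpha^*}$. The only cosmetic difference is that the paper writes $\max_{\alpha}\min\{\alpha(X),k\wedge u_\alpha\}=\min\{r(X),k\}$ over the whole family, whereas you work directly with the single maximizer $\alpha^*$.
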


\begin{proof}
We use the following notation:
\[
    d = \inf_{X \in L^\infty} r(X), \qquad
    u = \sup_{X \in L^\infty} r(X), \qquad
    d_\alpha = \inf_{X \in L^\infty} \alpha(X), \qquad
    u_\alpha = \sup_{X \in L^\infty} \alpha(X).
\]
Since $r(0) = 0$ and $r \geq 0$, monotonicity gives $r(k) = 0$ for all $k \leq 0$,
so $d = 0$. Since $\alpha \leq r$, we have $u_\alpha \leq u$ and $d_\alpha \leq d = 0$
for every $\alpha \in \mathcal{I}$; since $\alpha \geq 0$, we also have
$d_\alpha = 0$. We restrict to the subfamily
\[
    \mathcal{I}' = \bigl\{\alpha \in \mathcal{I} : d \leq u_\alpha\bigr\}
    = \bigl\{\alpha \in \mathcal{I} : u_\alpha > 0\bigr\},
\]
since any $\alpha$ with $u_\alpha = 0$ satisfies $\alpha \equiv 0 \leq r(X)$ for
all $X$, contributing nothing to the supremum.

\medskip

\textit{Monotonicity.} $r = \max_{\alpha \in \mathcal{I}'}\alpha$ is monotone
increasing as a pointwise maximum of monotone increasing maps.

\medskip

\textit{Cash-quasiconcavity.} Fix $X \in L^\infty$, $k \in \mathbb{R}$,
$\lambda \in [0,1]$  and let $\alpha^*_X \in \mathcal{I}'$
denote a maximiser, so that $\alpha^*_X(X) = r(X)$.

\textit{Case $k < 0$.}
Since $r(0) = 0$ and $r \geq 0$, monotonicity gives $r(k) = 0$. Hence
\[
    r\bigl(\lambda X + (1-\lambda)k\bigr) \geq 0
    = \min\{r(X), 0\} = \min\{r(X), r(k)\}. 
\]

\textit{Case $k \geq 0$.}
Since $d_\alpha = 0 \leq k$ for every $\alpha \in \mathcal{I}'$, hypothesis ii)
gives
\[
    \alpha(k) = k \wedge u_\alpha \qquad \forall\,\alpha \in \mathcal{I}'.
\]
Indeed, for $k \leq u_\alpha$: $\alpha(k) = k$ by ii). For $k > u_\alpha$:
$\alpha(k) = u_\alpha = k \wedge u_\alpha$ by monotonicity and the definition
of $u_\alpha$.

Using cash-quasiconcavity of each $\alpha \in \mathcal{I}'$:
\begin{equation*}
\begin{aligned}
    r\bigl(\lambda X + (1-\lambda)k\bigr)
    &= \max_{\alpha \in \mathcal{I}'} \alpha\bigl(\lambda X + (1-\lambda)k\bigr) \\
    &\geq \max_{\alpha \in \mathcal{I}'} \min\bigl\{\alpha(X),\, \alpha(k)\bigr\} \\
    &= \max_{\alpha \in \mathcal{I}'} \min\bigl\{\alpha(X),\, k \wedge u_\alpha\bigr\} \\
    &= \max_{\alpha \in \mathcal{I}'} \min\bigl\{\alpha(X),\, k\bigr\},
\end{aligned}
\end{equation*}
where the last equality uses $\alpha(X) \leq u_\alpha$, making $u_\alpha$
redundant in the minimum. Since $t \mapsto \min\{t, k\}$ is increasing and the
maximum is attained by $\alpha^*_X$:
\[
    \max_{\alpha \in \mathcal{I}'} \min\bigl\{\alpha(X),\, k\bigr\}
    = \min\bigl\{\alpha^*_X(X),\, k\bigr\}
    = \min\bigl\{r(X),\, k\bigr\}.
\]
It remains to show $\min\{r(X), k\} \geq \min\{r(X), r(k)\}$,
i.e.\ $r(k) \leq k$. Let $\alpha^*_k \in \mathcal{I}'$ be a maximiser on $k$,
so $\alpha^*_k(k) = r(k)$. By hypothesis ii):
\[
    r(k) = \alpha^*_k(k) = k \wedge u_{\alpha^*_k} \leq k. 
\]
Therefore
\[
    r\bigl(\lambda X + (1-\lambda)k\bigr)
    \geq \min\{r(X), k\}
    \geq \min\{r(X), r(k)\}.
\]
This completes the proof that $r$ is cash-quasiconcave, hence a ranking metric.

\medskip

% \textit{Converse.} Suppose $r$ is a ranking metric with $r(0)=0$ and
% $r(k)=k$ for every $k \in [0,u]\cap\mathbb{R}$. We verify $r \in \mathcal{I}$:
% $r$ is monotone and cash-quasiconcave by assumption; $r(k)=k$ for
% $k \in [0,u]\cap\mathbb{R} = [d_r, u_r]\cap\mathbb{R}$ by hypothesis; and
% $r \leq r$ trivially. Hence $r(X) \leq \max\{\alpha(X):\alpha \in \mathcal{I}\}
% \leq r(X)$ for every $X$, where the second inequality uses $\alpha \leq r$.
% Therefore the maximum equals $r(X)$ and is attained by $r$ itself.
\end{proof}

\begin{corollary}\label{cor:-rho}
Let $\rho\colon L^\infty \to \mathbb{R} \cup \{\pm\infty\}$ be a risk measure, and let $\{\rho_x\}_{x>0}$ be the family of maps defined by 
\begin{equation*}
    \rho_x(X) := \rho(X) + x, \quad \text{for any } x>0.
\end{equation*}
Then the representation of the ranking metric $r$ given in equation~\eqref{eq:rhoTh1.3} of Theorem~\ref{thm:main} reduces to
\begin{equation}
    r(X) = \max \{0, -\rho(X)\}.
\end{equation}
\end{corollary}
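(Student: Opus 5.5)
The plan is to substitute the family $\rho_x=\rho+x$ into the representation \eqref{eq:rhoTh1.3} of Theorem~\ref{thm:main} and compute the supremum explicitly.

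First I will check that the family is admissible, so that Theorem~\ref{thm:main} applies and the resulting $r$ is a ranking metric. Each $\rho_x$ is decreasingly monotone because $\rho$ is. For cash-quasiconvexity, adding the constant $x$ commutes with the maximum:
\[
\rho_x(\lambda X+(1-\lambda)k)=\rho(\lambda X+(1-\lambda)k)+x\le \max\{\rho(X),\rho(k)\}+x=\max\{\rho_x(X),\rho_x(k)\}.
\]
For $\rho(\cdot)=\pm\infty$ I use the conventions $\pm\infty+x=\pm\infty$. The family is increasing in $x$ since $\rho_y-\rho_x=y-x\ge 0$ wherever $\rho$ is finite, with equality of infinite values otherwise. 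The theorem is stated for $x\ge 0$; whether $x=0$ is included is irrelevant, because the supremum in \eqref{eq:rhoTh1.3} only runs over $x>0$.

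Next, fix $X$ and solve the constraint. The condition $\rho_x(X)\le 0$ is equivalent to $x\le -\rho(X)$, so
\[
r(X)=\sup\{x>0: x\le -\rho(X)\}.
\]
I will then split into cases.
\begin{itemize}
\item If $-\rho(X)\le 0$, the set is empty. The convention $\sup\emptyset=0$ gives $r(X)=0=\max\{0,-\rho(X)\}$.
\item If $0<-\rho(X)<\infty$, the set is $(0,-\rho(X)]$, whose supremum $-\rho(X)$ is attained.
\item If $\rho(X)=-\infty$, every $x>0$ qualifies. The convention $\sup\mathbb{R}_+=+\infty$ gives $r(X)=+\infty=\max\{0,-\rho(X)\}$.
\end{itemize}
If $\rho(X)=+\infty$, the set is empty and we get $0$, which again agrees with the formula.

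The main obstacle is the bookkeeping of extended values rather than any real difficulty. I will state the arithmetic conventions for $\pm\infty+x$ explicitly and make sure the boundary case $\rho(X)=0$ falls under the empty-set convention. In that case the set $\{x>0:x\le 0\}$ is empty, so $r(X)=0$, consistent with the formula. Since the supremum is attained whenever it is positive and finite, the maximum notation in the statement is justified.
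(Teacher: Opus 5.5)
Your proposal is correct and follows essentially the same route as the paper. Both check that the shifted family $\rho_x=\rho+x$ is an increasing family of monotone, cash-quasiconvex maps, rewrite $\rho_x(X)\le 0$ as $x\le -\rho(X)$, and split on the sign of $-\rho(X)$ using $\sup\emptyset=0$. Your explicit treatment of $\rho(X)=\pm\infty$ spells out what the paper's two cases cover only implicitly. The closing remark about attainment is unnecessary, since $\max\{0,-\rho(X)\}$ is simply the larger of two extended reals.
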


\begin{proof}
First, we must verify that the family $\{\rho_x\}_{x>0}$ satisfies the conditions of Theorem~\ref{thm:main}. Since $\rho$ is a risk measure, it is decreasing monotone and cash-quasiconvex. Adding a constant $x$ preserves both properties, meaning $\rho_x$ is a risk measure for each $x > 0$. Furthermore, for any $y \geq x > 0$, we have $\rho_y(X) - \rho_x(X) = y - x \geq 0$, ensuring the family $\{\rho_x\}_{x>0}$ is increasing.

Since the premises of Theorem~\ref{thm:main} hold, we can apply the representation formula:
\begin{equation*}
\begin{aligned}
r(X)& = \sup\left\{ x > 0 : \rho_x(X) \leq 0 \right\}\\
& = \sup\left\{ x > 0 : \rho(X) + x \leq 0 \right\}\\
& = \sup\left\{ x > 0 : x \leq -\rho(X) \right\}.
\end{aligned}
\end{equation*}
To evaluate this supremum, we consider two mutually exclusive cases. If $-\rho(X) \leq 0$, the set $\{ x > 0 : x \leq -\rho(X) \}$ is empty. Applying the standard convention $\sup \emptyset = 0$, we obtain $r(X) = 0$. If $-\rho(X) > 0$, the set is the interval $(0, -\rho(X)]$, and its supremum is exactly $-\rho(X)$.
Combining these two cases yields the compact form $r(X) = \max\{0, -\rho(X)\}$, completing the proof.
\end{proof}

\begin{corollary}\label{corr:general_RAROC}
Let $\rho\colon L^\infty \to \mathbb{R}$ be a cash-convex, decreasing monotone
risk measure, strictly decreasing on constants and such that $\rho(0)=0$. Let $\ell\colon L^\infty \to \mathbb{R}$ be a concave, increasing
map, strictly increasing on constants and such that $\ell(0)=0$.
%Assume that
% \begin{equation}\label{eq:compatibility}
%     \rho(X) \leq 0 \implies \ell(X) \geq 0 \qquad \forall\, X \in L^\infty.
%\end{equation}
Define \begin{equation}\label{eq:r-RAROC}
    r(X) =
    \begin{cases}
        \dfrac{\ell(X)}{\rho(X)},
        & \ell(X) > 0 \text{ and } \rho(X) > 0, \\[8pt]
        0,
        & \ell(X) \leq 0 \text{ and } \rho(X) > 0, \\[4pt]
        +\infty,
        & \rho(X) \leq 0.
    \end{cases}
\end{equation}
Then $r$ is a ranking metric and the corresponding family of risk measures is $\{\rho_x\}_{x>0}$ defined by
\begin{equation}\label{eq:rho_x-RAROC}
    \rho_x(X) =
    \begin{cases}
        x\rho(X) - \ell(X), & \rho(X) > 0, \\
        -\infty, & \rho(X) \leq 0, 
    \end{cases}
\end{equation}
with acceptance set representation given by
\begin{equation}\label{eq:A_x-RAROC}
    \mathcal{A}_x =
    \bigl\{X \in L^\infty \colon \rho(X) \leq 0\bigr\}
    \cup
    \left\{X \in L^\infty \colon \rho(X) > 0,\ \ell(X) > 0,\
    \frac{\ell(X)}{\rho(X)} \geq x\right\},
    \qquad x > 0.
\end{equation}
\end{corollary}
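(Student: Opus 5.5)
The plan is to deduce the corollary from Theorem~\ref{thm:main}, checking the acceptance-set characterization (statement 2), because it only requires sign information. I would also check, as far as possible, that $\{\rho_x\}$ is an increasing family of risk measures representing $r$ as in \eqref{eq:rhoTh1.3}. Two elementary facts are used throughout. Since $\rho$ is strictly decreasing on constants with $\rho(0)=0$, we have $\rho(k)\le 0$ if and only if $k\ge 0$. Since $\ell$ is strictly increasing on constants with $\ell(0)=0$, we have $\ell(k)\ge 0$ for $k\ge0$ and $\ell(k)<0$ for $k<0$. Consequently $r(k)=+\infty$ for $k\ge 0$ and $r(k)=0$ for $k<0$.

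\textbf{Representation.} Fix $x>0$ and compute $\{x>0:\rho_x(X)\le 0\}$ case by case. If $\rho(X)\le 0$, then $\rho_x(X)=-\infty$ for every $x$, so the supremum is $+\infty=r(X)$. If $\rho(X)>0$, then $\rho_x(X)\le 0$ if and only if $x\le \ell(X)/\rho(X)$. This set of $x>0$ is empty when $\ell(X)\le0$, which gives $0$ by the convention $\sup\emptyset=0$, and it is $(0,\ell(X)/\rho(X)]$ otherwise. Hence \eqref{eq:rhoTh1.3} reproduces \eqref{eq:r-RAROC}. The same computation shows that $\mathcal{A}_x:=\{X:\rho_x(X)\le0\}$ coincides with \eqref{eq:A_x-RAROC} and that $r(X)=\sup\{x>0: X\in\mathcal{A}_x\}$.

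\textbf{Monotonicity and ordering.} The family is increasing in $x$ because $x\rho(X)-\ell(X)$ increases in $x$ when $\rho(X)>0$, while the value $-\infty$ does not depend on $x$. For monotonicity of each $\rho_x$, let $X\le Y$. If $\rho(Y)\le0$, then $\rho_x(Y)=-\infty$ and there is nothing to prove. Otherwise $\rho(X)\ge\rho(Y)>0$ and $\ell(X)\le\ell(Y)$, so $x\rho(X)-\ell(X)\ge x\rho(Y)-\ell(Y)$. Consequently each $\mathcal{A}_x$ is monotone, and the family $\{\mathcal{A}_x\}$ is decreasing in $x$.

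\textbf{Cash-convexity of $\mathcal{A}_x$ (the key step).} Take $X\in\mathcal{A}_x$ and $k\in\mathcal{A}_x\cap\R$, and set $Z=\lambda X+(1-\lambda)k$. Since $r(k)=0<x$ for $k<0$, membership of $k$ forces $k\ge0$, so $\rho(k)\le0$ and $\ell(k)\ge0$. If $\rho(Z)\le 0$, then $Z\in\mathcal{A}_x$ and we are done. Otherwise, convexity of $\rho$ at $k$ and concavity of $\ell$ give
\[
x\rho(Z)-\ell(Z)\le \lambda\bigl(x\rho(X)-\ell(X)\bigr)+(1-\lambda)\bigl(x\rho(k)-\ell(k)\bigr).
\]
The second bracket is $\le0$. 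The first bracket is $\le 0$ when $\rho(X)>0$, because $X\in\mathcal{A}_x$. When $\rho(X)\le0$ but $\rho(Z)>0$, I would instead bound $x\rho(Z)\le x\lambda\rho(X)+x(1-\lambda)\rho(k)\le0$, which contradicts $\rho(Z)>0$; so that case cannot occur. Hence $Z\in\mathcal{A}_x$, and Theorem~\ref{thm:main} ($2.\Rightarrow1.$) shows that $r$ is a ranking metric.

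\textbf{Expected obstacle.} The main difficulty is full cash-quasiconvexity of each $\rho_x$ as a risk measure, not just at the zero level. The concavity/convexity estimate only yields $\rho_x(Z)\le\lambda\rho_x(X)$ when $k\ge0$, and this can exceed $\rho_x(X)$ if $\rho_x(X)<0$. Mixed cases with $\rho(X)\le0$ and $k<0$ also need a sign of $\ell(X)$ that is not assumed. I would therefore rest the corollary on the acceptance-set route above, which needs only the zero-level inequality, and use $\rho_x$ only for the representation \eqref{eq:rhoTh1.3}. I would note that the step $3.\Rightarrow1.$ of Theorem~\ref{thm:main} likewise uses only the implication $\max\{\rho_x(X),\rho_x(k)\}\le0\Rightarrow\rho_x(\lambda X+(1-\lambda)k)\le0$, which is exactly what the displayed estimate proves.
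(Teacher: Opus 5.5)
Your proof is correct, and it takes a different route from the paper's.

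\textbf{Comparison of routes.} The paper verifies the two axioms on $r$ directly. Monotonicity comes from a case analysis on the signs of $\ell$ and $\rho$. For cash-quasiconcavity, the case $k>0$ is reduced to $k=0$ via $\lambda X+(1-\lambda)k\ge\lambda X$. The case $k<0$ is trivial because $r(k)=0$. The remaining case rests on the ratio comparison $\ell(\lambda X)/\rho(\lambda X)\ge\lambda\ell(X)/(\lambda\rho(X))$, with sign subcases. Only afterwards does the paper check that $\sup\{x>0:\rho_x(X)\le0\}$ reproduces $r$.

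You work instead with the zero-sublevel sets $\{\rho_x\le0\}$, where the ratio condition $\ell/\rho\ge x$ becomes the linear condition $x\rho-\ell\le0$. Cash-convexity then follows in one line from convexity of $x\rho-\ell$ between $X$ and $k$. Theorem~\ref{thm:main} ($2.\Rightarrow1.$) then gives the ranking-metric property and both representations at once.

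The paper's route uses weaker hypotheses: only convexity of $\rho$ and concavity of $\ell$ at $0$, rather than at every $k\ge0$. Your argument adapts easily to those hypotheses: monotonicity of $\mathcal{A}_x$ reduces the task to showing $\lambda X\in\mathcal{A}_x$, and this follows from $\rho_x(\lambda X)\le\lambda\rho_x(X)$ when $\rho(\lambda X)>0$. Your route avoids the ratio manipulations and their sign subcases.

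\textbf{Your ``expected obstacle'' is real.} Take $\rho(X)=-\operatorname{ess\,inf}X$ and $\ell=E$; both satisfy all the hypotheses. Let $X$ take the values $-1$ and $21$ with probability $1/2$ each. Then:
\begin{itemize}
\item $\rho(X)=1$ and $E[X]=10$, so $\rho_1(X)=-9$;
\item $\rho_1(0)=-\infty$;
\item $\rho_1(X/2)=\tfrac12-5=-\tfrac92>-9=\max\{\rho_1(X),\rho_1(0)\}$.
\end{itemize}
So $\rho_1$ is not even quasiconvex at $0$. Hence $\{\rho_x\}$ is not a family of risk measures in the paper's sense, and this example even satisfies the compatibility condition $\rho(X)\le0\Rightarrow\ell(X)\ge0$.

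The paper's proof passes over this point: it applies Theorem~\ref{thm:main} to $r_{\rho_x}$ citing only that the family is increasing. This does no damage to the conclusion that $r$ is a ranking metric, since that was already established directly. But the statement's description of $\{\rho_x\}$ as ``the corresponding family of risk measures'' is false in general. What is true is what you prove: each $\rho_x$ is decreasing monotone, and $\{\rho_x\le0\}$ is monotone and cash-convex. That is all the representation \eqref{eq:rhoTh1.3} and the argument of $3.\Rightarrow1.$ actually use.
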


\begin{proof}
    Regarding monotonicity, for any $X \leq Y$ we have $\ell(X) \leq \ell(Y)$ (since $\ell$ is increasing) and $\rho(X) \geq \rho(Y)$. The result is trivially obtained if $\rho(X) \leq 0$, since in this case both $r(X)$ and $r(Y)$ are $+\infty$. For the case $\rho(X) > 0$, if $\ell(X) \leq 0$, the result is also trivial since $r(X) = 0 \leq r(Y)$, whatever the value of $r(Y)$ is. Otherwise, if $\rho(X) > 0$ and $\ell(X) > 0$, then $\ell(Y) \geq \ell(X) > 0$, while for $Y$ we could have $\rho(Y) > 0$ or $\rho(Y) \leq 0$. In case $\rho(Y) \leq 0$, we get $r(X) = \frac{\ell(X)}{\rho(X)} < r(Y) = +\infty$, while for $\rho(Y) > 0$, we have
    $$
    r(X) = \frac{\ell(X)}{\rho(X)} \leq \frac{\ell(Y)}{\rho(Y)} = r(Y).
    $$
    Therefore, monotonicity is verified in all cases.

    For cash-quasiconcavity, we divide the proof into three cases: $k=0$, $k>0$, $k<0$.
    When $k=0$, since $\rho(0)=0$, we have $r(0) = +\infty$. In order to verify cash-quasiconcavity, we need to prove that
    $$
    r(\lambda X + (1-\lambda) \cdot 0) = r(\lambda X) \geq \min\{r(X), r(0)\} = r(X).
    $$
    For $\lambda = 0$, the inequality reduces to $+\infty = r(0) \geq r(X)$, which holds trivially. For $\lambda \in (0, 1]$, we consider three sub-cases:
    \begin{itemize}
        \item[i)] If $\ell(\lambda X) > 0$ and $\rho(\lambda X) > 0$, we have for any $\lambda \in (0, 1]$:
        $$
        r(\lambda X) = \frac{\ell(\lambda X)}{\rho(\lambda X)} \geq \frac{\lambda \ell(X)}{\lambda \rho(X)} = r(X),
        $$
        where we used the fact that $\ell$ is concave with $\ell(0)=0$, which implies $\ell(\lambda X) \geq \lambda \ell(X) + (1-\lambda)\ell(0) = \lambda \ell(X)$, and $\rho$ is convex with $\rho(0)=0$, which implies $\rho(\lambda X) \leq \lambda \rho(X)$.
        \item[ii)] If $\rho(\lambda X) \leq 0$, we have immediately $r(\lambda X) = +\infty \geq r(X)$.
        \item[iii)] If $\ell(\lambda X) \leq 0$ and $\rho(\lambda X) > 0$, we have for any $\lambda \in (0,1]$:
        $$
        r(\lambda X) = 0 \qquad \text{and} \qquad r(X) = 0,
        $$
        since $\rho(\lambda X) \leq \lambda \rho(X)$ (by convexity at $0$) implies $\rho(X) > 0$, and $\lambda \ell(X) \leq \ell(\lambda X) \leq 0$ (by concavity at $0$) implies $\ell(X) \leq 0$.
    \end{itemize}

    Now we consider $k > 0$. The strict decreasing monotonicity of $\rho$ on constants implies $\rho(k) < \rho(0) = 0$, so that $r(k) = +\infty$. Hence, we need to verify:
    $$
    r(\lambda X + (1-\lambda)k) \geq \min\{r(X), r(k)\} = r(X).
    $$
    Since $k > 0$ and $1-\lambda \geq 0$, we have $\lambda X + (1-\lambda)k \geq \lambda X$. By the monotonicity of $r$ established above, it follows that $r(\lambda X + (1-\lambda)k) \geq r(\lambda X)$. Thus, the proof directly reduces to Cases a, b, and c discussed for $k=0$, satisfying $r(\lambda X) \geq r(X)$.

    Finally, for $k < 0$, the strict decreasing monotonicity of $\rho$ on constants implies $\rho(k) > \rho(0) = 0$. Since $\ell$ is strictly increasing on constants with $\ell(0)=0$, we have $\ell(k) < \ell(0) = 0$. Because $\rho(k) > 0$ and $\ell(k) < 0$, it follows by definition that $r(k) = 0$. Therefore, the cash-quasiconcavity inequality reduces to:
    $$
    r(\lambda X + (1-\lambda)k) \geq \min\{r(X), r(k)\} = 0,
    $$
    which holds trivially since the ranking metric $r$ maps to $[0, +\infty]$.

    By Theorem~\ref{thm:main}, $r_{\rho_x}(X) := \sup\{x > 0 \colon \rho_x(X) \leq 0\}$ is a ranking metric since $\{\rho_x\}_{x>0}$ is increasing in $x$. We show that $r_{\rho_x}(X) = r(X)$. Indeed, if $\rho(X) \leq 0$, then $\rho_x(X) = -\infty$ for all $x > 0$, yielding:
    $$
    r_{\rho_x}(X) = \sup \{x > 0 \colon \rho_x(X) \leq 0\} = \sup (0, +\infty) = +\infty = r(X).
    $$
    If $\rho(X) > 0$ and $\ell(X) \leq 0$, then due to the signs of $\rho(X) > 0$ and $-\ell(X) \geq 0$, we have $\rho_x(X) = x\rho(X) - \ell(X) > 0$ for all $x > 0$. Thus, $\{x > 0 \colon \rho_x(X) \leq 0\} = \emptyset$, so that:
    $$
    r_{\rho_x}(X) = \sup \emptyset = 0 = r(X).
    $$
    For $\ell(X) > 0$ and $\rho(X) > 0$, we have $\frac{\ell(X)}{\rho(X)} > 0$, and setting the inequality gives:
    $$
    \{x > 0 \colon \rho_x(X) \leq 0\} = \{x > 0 \colon x\rho(X) - \ell(X) \leq 0\} = \left\{x > 0 \colon \frac{\ell(X)}{\rho(X)} \geq x\right\},
    $$
    so that $r_{\rho_x}(X) = \sup \{x > 0 \colon \frac{\ell(X)}{\rho(X)} \geq x\} = \frac{\ell(X)}{\rho(X)} = r(X)$.
    The acceptance set representation $\mathcal{A}_x$ follows directly from Theorem~\ref{thm:main}.
\end{proof}

Characterizations of ranking metrics are important, as they provide tractable ways to construct such metrics. An immediate corollary of Theorem~\ref{thm:main} shows that any ranking metric can be obtained as the cash-quasiconcave envelope of a given monotone base map $f\colon L^\infty\to[0,+\infty]$.

{\begin{proposition}
Let $f : L^\infty \to [0,+\infty]$ be an increasing monotone map. Its cash-quasiconcave
envelope
\[
r_f(X)=\inf \{r(X): r\geq f,\ r\text{ is a ranking metric}\}
\]
is a ranking metric, and it can be represented as in equation \eqref{eq:rhoTh1.3-Ax} and equation \eqref{eq:rhoTh1.3} with, respectively,
\[
\begin{aligned}
A_x
&=
\bigcap\{B : B\supseteq \{X\in L^\infty : f(X)\geq x\},\ B\text{ is cash-convex and monotone}\},\\
\rho_x(X)
&=x-r_f(X).
\end{aligned}
\]
\end{proposition}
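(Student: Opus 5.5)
The plan has three stages. First, show that $r_f$ is itself a ranking metric. Second, identify its upper level sets with the sets $A_x$ in the statement. Third, read off both representations from Theorem~\ref{thm:main}.

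\textbf{Step 1: $r_f$ is a ranking metric.} The feasible class is nonempty, since $r\equiv+\infty$ is monotone, cash-quasiconcave and dominates $f$. Hence $r_f$ is well defined with values in $[0,+\infty]$. Monotonicity passes to pointwise infima: if $X\le Y$, then $r(X)\le r(Y)$ for every feasible $r$, so $r_f(X)\le r_f(Y)$. For cash-quasiconcavity, we work through level sets rather than directly. By Lemma~\ref{lem:r-acceptance}, a map is a ranking metric if and only if all its upper level sets are monotone and cash-convex. For an infimum, $\{r_f\ge x\}$ is not simply the intersection of the sets $\{r\ge x\}$; we only have
\[
\{r_f\ge x\}=\bigcap_{y<x}\ \bigcap_{r}\{r> y\}.
\]
So we argue directly instead. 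Fix $X$, $k$, $\lambda$ and any feasible $r$. Then
\[
r(\lambda X+(1-\lambda)k)\ge \min\{r(X),r(k)\}\ge \min\{r_f(X),r_f(k)\}.
\]
Taking the infimum over $r$ gives cash-quasiconcavity of $r_f$. Since $r_f$ is then feasible, it is the smallest ranking metric dominating $f$.

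\textbf{Step 2: $\{r_f\ge x\}=A_x$ (the main obstacle).} For $x>0$, put $C_x:=\{f\ge x\}$. Intersections of monotone cash-convex sets are monotone and cash-convex, so $A_x$ is the smallest monotone cash-convex set containing $C_x$. The family $\{A_x\}$ is decreasing in $x$, because $C_x$ is. Define
\[
\tilde r(X):=\sup\{x>0: X\in A_x\}.
\]
By $2.\Rightarrow1.$ of Theorem~\ref{thm:main}, $\tilde r$ is a ranking metric. It dominates $f$: if $f(X)\ge x$ then $X\in C_x\subseteq A_x$, so $\tilde r(X)\ge f(X)$. Hence $r_f\le\tilde r$.

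For the reverse inequality, take any feasible $r$. Its level set $\{r\ge x\}$ is monotone and cash-convex by Lemma~\ref{lem:r-acceptance}, and it contains $C_x$. Therefore $A_x\subseteq\{r\ge x\}$, which gives $\tilde r\le r$; taking the infimum over $r$ yields $\tilde r\le r_f$. So $r_f=\tilde r$, and \eqref{eq:rhoTh1.3-Ax} holds with the stated $A_x$.

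The delicate point is the boundary $x$ where $\{x:X\in A_x\}$ may fail to be closed, so that the level set of $r_f$ could differ from $A_x$. The representation via the supremum does not need exact equality of level sets, so I will phrase the result through the supremum formula. I will also restrict the convention to $x>0$, since $A_0=L^\infty$.

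\textbf{Step 3: the risk-measure representation.} This follows verbatim from the $1.\Rightarrow3.$ part of Theorem~\ref{thm:main} applied to the ranking metric $r_f$. Setting $\rho_x:=x-r_f$ gives an increasing family of risk measures with $\sup\{x>0:\rho_x(X)\le0\}=r_f(X)$, which is \eqref{eq:rhoTh1.3}.
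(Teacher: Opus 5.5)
Your proof is correct and is essentially the paper's. You define $\tilde r$ from the hulls $A_x$, obtain $r_f\le\tilde r$ from $2.\Rightarrow1.$ of Theorem~\ref{thm:main} together with $\tilde r\ge f$, obtain $\tilde r\le r_f$ from $A_x\subseteq\{r\ge x\}$ for every feasible $r$, and finish with $1.\Rightarrow3.$. Your Step~1 is a valid but redundant direct check that the infimum is a ranking metric; the paper gets this from $r_f=\tilde r$.

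One aside is wrong but harmless. For a pointwise infimum one has exactly $\{r_f\ge x\}=\bigcap_r\{r\ge x\}$, so the level-set route in Step~1 would have worked too. Your caution that $\{r_f\ge x\}$ can strictly contain $A_x$ is justified, and only the supremum representation is actually claimed.
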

\begin{proof}
For each $x>0$, the set $A_x$ is cash-convex and monotone, since it is the intersection
of cash-convex and monotone sets. Moreover, the family $(A_x)_{x>0}$ is decreasing in
$x$. Define
\[
\widetilde r_f(X):=\sup\{x>0:X\in A_x\}.
\]
By equation \eqref{eq:rhoTh1.3-Ax}, $\widetilde r_f$ is a ranking metric. Also,
$\widetilde r_f\geq f$, because if $f(X)\geq x$, then
$X\in\{Y\in L^\infty:f(Y)\geq x\}\subseteq A_x$. Now let $r\geq f$ be any ranking metric. Then, for every $x>0$, the upper level set
$\{X\in L^\infty:r(X)\geq x\}$ is cash-convex and monotone and contains
$\{X\in L^\infty:f(X)\geq x\}$. Hence,
\[
A_x\subseteq \{X\in L^\infty:r(X)\geq x\}.
\]
It follows that $\widetilde r_f(X)\leq r(X)$ for every $X\in L^\infty$ and for every
ranking metric $r\geq f$. Therefore, $\widetilde r_f\leq r_f$. Since $\widetilde r_f$ is itself a ranking metric satisfying $\widetilde r_f\geq f$, the reverse inequality
$r_f\leq \widetilde r_f$ is immediate. Thus $r_f=\widetilde r_f$, and $r_f$ is a ranking metric. Finally, define $\rho_x(X):=x-r_f(X)$. Since $r_f$ is increasing monotone and
cash-quasiconcave, each $\rho_x$ is decreasing monotone and cash-quasiconvex. Moreover,
the family $(\rho_x)_{x>0}$ is increasing in $x$. Therefore,
\[
\sup\{x>0:\rho_x(X)\leq 0\}
=
\sup\{x>0:x\leq r_f(X)\}
=
r_f(X),
\]
which gives the representation in equation \eqref{eq:rhoTh1.3}.
\end{proof}
}

The following proposition identifies sufficient conditions for the family $\{\rho_x\}_x$ to satisfy various cash-additivity properties within the ranking metric. 
%We denote with $\mathcal{S}_{\rho_x}:=\{X \in \viola{L^\infty} \mid \rho_x(X) >0 \}$ the support of $\rho_x$. 

\begin{proposition}\label{Prop:rho_x}
Let $\{\rho_x\}_{x\geq 0}$ be an increasing family of risk measures and let $r$
be the ranking metric associated with $\{\rho_x\}_{x\geq 0}$ as defined in
equation~\eqref{eq:rhoTh1.3}.
\begin{itemize}
\item[i)] If $\rho_{x+c}(X+c)\geq\rho_x(X)$ for all $X\in L^\infty$, $x>0$,
$c\geq 0$ (equivalently, $\rho_x(X+c)\geq\rho_{x-c}(X)$ for all $X\in L^\infty$,
$x>c\geq 0$), then $r$ is cash-subadditive.
\item[ii)] If $\rho_x(X+c)=\rho_{x-c}(X)$ for all $X\in L^\infty$, $x>c\geq 0$,
then $r$ is cash-additive, with cash-superadditivity holding on
$A:=\mathrm{supp}(r)$.
\end{itemize}
\end{proposition}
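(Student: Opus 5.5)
The plan is to work directly with the level sets $S(X):=\{x>0\colon \rho_x(X)\leq 0\}$, so that $r(X)=\sup S(X)$ by \eqref{eq:rhoTh1.3}. Both parts then reduce to comparing $S(X+c)$ with the shifted set $S(X)+c$. A preliminary observation is that, since the family is increasing, each $S(X)$ is an initial segment of $(0,+\infty)$. Indeed, if $x\in S(X)$ and $0<y\leq x$, then $\rho_y(X)\leq\rho_x(X)\leq 0$.

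\textbf{Part i).} Fix $X\in L^\infty$ and $c\geq 0$; we want $r(X+c)\leq r(X)+c$. If $r(X)=+\infty$ there is nothing to prove, so assume $r(X)<+\infty$. Take any $x\in S(X+c)$ with $x>c$. Using the equivalent form of the hypothesis,
\[
0\geq \rho_x(X+c)\geq \rho_{x-c}(X),
\]
so $x-c>0$ and $x-c\in S(X)$, which gives $x-c\leq r(X)$. Elements $x\in S(X+c)$ with $x\leq c$ satisfy $x\leq c\leq r(X)+c$ trivially. Taking the supremum over $S(X+c)$ yields $r(X+c)\leq r(X)+c$; the convention $\sup\emptyset=0$ is consistent with this bound. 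The equivalence of the two forms of the hypothesis is just the substitution $x\mapsto x-c$, which I will note in passing.

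\textbf{Part ii).} The hypothesis gives in particular $\rho_x(X+c)\geq\rho_{x-c}(X)$ for all $x>c$, which is the condition of i). So cash-subadditivity holds on all of $L^\infty$. For superadditivity, fix $X\in\mathrm{supp}(r)$ and $c\geq 0$, and take any $y\in S(X)$ (this set is nonempty because $r(X)>0$). Setting $x=y+c>c$, the hypothesis gives
\[
\rho_{y+c}(X+c)=\rho_y(X)\leq 0,
\]
so $y+c\in S(X+c)$. Hence $r(X+c)\geq \sup S(X)+c=r(X)+c$, and this also covers $r(X)=+\infty$.

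The main subtlety is precisely the restriction to the support. When $r(X)=0$, $S(X)$ may be empty and the shift argument gives nothing, which is consistent with Remark~\ref{rem:supp-restriction}. I also need to check the boundary cases $x\leq c$ in part i), where the hypothesis only covers $x>c$; these cases are harmless, as shown above. Combining the two inequalities gives $r(X+c)=r(X)+c$ on $\mathrm{supp}(r)$, that is, cash-additivity in the sense of Definition~\ref{def:cash-sub-sup-add} with $A=\mathrm{supp}(r)$.
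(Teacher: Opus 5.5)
Your proof is correct and follows essentially the same route as the paper. For i) you split $\{x>0:\rho_x(X+c)\le 0\}$ at $c$ and use the hypothesis to shift the part above $c$ back into the level set of $X$; for ii) you use the equality to shift the nonempty level set of $X\in\mathrm{supp}(r)$ forward by $c$, exactly as the paper's chain of suprema does, and your remark that this nonemptiness is what makes the support restriction necessary is a useful clarification.
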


\begin{proof}
\textit{i)} Fix $X\in L^\infty$ and $c\geq 0$. We estimate $r(X+c)$ by splitting
the supremum:
\[
r(X+c)
= \sup\{y>0:\rho_y(X+c)\leq 0\}
= \sup\bigl(\{y>c:\rho_y(X+c)\leq 0\}
            \cup\{0<y\leq c:\rho_y(X+c)\leq 0\}\bigr).
\]
For the first set: writing $y=x+c$ with $x>0$, if $\rho_{x+c}(X+c)\leq 0$
then by hypothesis $\rho_x(X)\leq\rho_{x+c}(X+c)\leq 0$, so
$x\in\{x>0:\rho_x(X)\leq 0\}$ and $y=x+c\leq r(X)+c$. Hence
\[
\sup\{y>c:\rho_y(X+c)\leq 0\}\leq r(X)+c.
\]
For the second set: $y\leq c$, so trivially $\sup\{0<y\leq c:\rho_y(X+c)\leq 0\}
\leq c\leq r(X)+c$.

Combining:
\[
r(X+c)\leq r(X)+c,
\]
which is cash-subadditivity.

\medskip

\textit{ii)} Let $X\in\mathrm{supp}(r)$, so $r(X)>0$, i.e.\
$\{x>0:\rho_x(X)\leq 0\}\neq\emptyset$. Pick any $\bar x>0$ with
$\rho_{\bar x}(X)\leq 0$. By the hypothesis with $x=\bar x+c$:
$\rho_{\bar x+c}(X+c)=\rho_{\bar x}(X)\leq 0$, so
$\{y>0:\rho_y(X+c)\leq 0\}\neq\emptyset$ and $r(X+c)>0$.

Using the hypothesis $\rho_x(X+c)=\rho_{x-c}(X)$ for $x>c\geq 0$:
\[
\begin{aligned}
r(X)+c
&= \sup\{x>0:\rho_x(X)\leq 0\}+c\\
&= \sup\{x+c>c:\rho_x(X)\leq 0\}\\
&= \sup\{y>c:\rho_{y-c}(X)\leq 0\}\\
&= \sup\{y>c:\rho_y(X+c)\leq 0\}\\
&\leq \sup\{y>0:\rho_y(X+c)\leq 0\}\\
&= r(X+c),
\end{aligned}
\]
which gives cash-superadditivity on $\mathrm{supp}(r)$.

Cash-subadditivity follows from i): the hypothesis $\rho_x(X+c)=\rho_{x-c}(X)$
implies, substituting $x\to x+c$, that $\rho_{x+c}(X+c)=\rho_x(X)$, i.e.\ the
condition of i) holds with equality, giving $r(X+c)\leq r(X)+c$.

Together, $r(X+c)=r(X)+c$ for all $X\in\mathrm{supp}(r)$ and $c\geq 0$, which
is cash-additivity on $\mathrm{supp}(r)$.
\end{proof}

\begin{corollary} \label{cor:cash-subad-transfer}
If $\rho_x(X) = \rho(X) + x$ and $\rho$ is cash-subadditive (resp. cash-additive), then the associated ranking metric $r$ is cash-subadditive in $L^\infty$ (resp. cash-superadditive in ${\rm supp}(r)$). %, i.e. $\forall X \in 
%\mathrm{Supp}(r) := \{\, X \in L^\infty \mid r(X) > 0 \,\}
%$. 
\end{corollary}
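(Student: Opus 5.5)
The plan is to reduce the statement to Proposition~\ref{Prop:rho_x}. We check that the family $\rho_x(X)=\rho(X)+x$ meets the hypothesis of part i) when $\rho$ is cash-subadditive, and the hypothesis of part ii) when $\rho$ is cash-additive. The family is increasing in $x$, and each $\rho_x$ is a risk measure; both facts were already verified in the proof of Corollary~\ref{cor:-rho}. So $r$ is well defined through \eqref{eq:rhoTh1.3}, and by that corollary it equals $\max\{0,-\rho(X)\}$.

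\emph{Cash-subadditive case.} We must show $\rho_{x+c}(X+c)\geq \rho_x(X)$ for $c\geq 0$. Cash-subadditivity of $\rho$ in the form $\rho(X+c)\geq \rho(X)-c$ gives
\[
\rho_{x+c}(X+c)=\rho(X+c)+x+c\geq \rho(X)-c+x+c=\rho_x(X).
\]
Proposition~\ref{Prop:rho_x}~i) then yields cash-subadditivity of $r$ on all of $L^\infty$.

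\emph{Cash-additive case.} Here $\rho(X+c)=\rho(X)-c$, so for $x>c\geq 0$,
\[
\rho_x(X+c)=\rho(X)-c+x=\rho_{x-c}(X).
\]
This is exactly the hypothesis of Proposition~\ref{Prop:rho_x}~ii), which gives cash-superadditivity on $\operatorname{supp}(r)$ (indeed full cash-additivity there).

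As an independent check, one can argue directly from $r=\max\{0,-\rho\}$. If $r(X)>0$, then $-\rho(X)>0$, hence $-\rho(X+c)=-\rho(X)+c>0$ and $r(X+c)=r(X)+c$. In the subadditive case, $\max\{0,-\rho(X+c)\}\leq \max\{0,-\rho(X)+c\}\leq r(X)+c$.

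The only delicate point is the extended values $\rho=\pm\infty$. In that case the arithmetic $\rho(X)+x$ and the inequalities above must be read in $\R\cup\{\pm\infty\}$; they remain valid because adding finite constants to $\pm\infty$ is harmless. Otherwise the argument is a direct verification.
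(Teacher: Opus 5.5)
Your proposal is correct and follows the paper's route: you verify the hypothesis of Proposition~\ref{Prop:rho_x}~i) for cash-subadditive $\rho$, and that of part~ii) for cash-additive $\rho$. You write part~i)'s hypothesis as $\rho_{x+c}(X+c)\geq\rho_x(X)$ where the paper uses the equivalent $\rho_x(X+c)\geq\rho_{x-c}(X)$, and your extra direct check via $r=\max\{0,-\rho\}$ from Corollary~\ref{cor:-rho} is sound but not needed.
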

\begin{proof}
If $\rho$ is cash-subadditive, for any \(x > c \geq 0\)
\[\begin{aligned}
    \rho_x(X + c) = \rho(X + c) + x \geq \rho(X) - c + x = \rho(X) + (x - c) = \rho_{x - c}(X);
\end{aligned}\]
Hence, \(r\) cash-subadditivity follows from Proposition~\ref{Prop:rho_x}.i). If $\rho$ is cash-additive, Proposition~\ref{Prop:rho_x}.ii) yields cash-superadditivity on $\rm{supp}(r)$; together
with the first part, this gives cash-additivity in the sense of Definition \ref{def:cash-sub-sup-add}.
\end{proof}

\section{Ranking metrics: examples}
\label{sec:Rank_Metr_exam}
In this section, we present several performance metrics commonly used in finance and actuarial science, in particular those that also qualify as ranking metrics. We introduce two novel classes of performance measures that satisfy the ranking metric properties: the lambda-quantile-based measures \citep{Frittelli2014} %~\blue{CITA FrittelliEtAl}
 and the bibliometric-index-based performance measures~ \citep{frittelli2016scientific, PeriThesis}.%\blue{CITA ManciniFrittelliPeri, PeriThesis}).
 
 %Each ranking measure emphasizes different aspects of a portfolio performance. 

\subsection{Ranking metrics based on classical performance measures}

%\subsubsection{Ratio based performance metrics}

A very relevant type of ranking metric is a performance measure that is typically a ratio between some gain or return and a risk measure. A first example is given by the gain--loss ratio (GLR) introduced by
\citet{Bernardo2000}, which is the ratio between the expected return and the downside risk measure $E[X^-]$. The GLR is given by:
\begin{equation*}
	GLR(X)=\begin{cases}
		\dfrac{E[X]}{E[X^-]}&\:\text{if}\:E[X]>0, E[X^-]>0 .\\
        +\infty &\:\text{if}\:E[X]>0, E[X^-]=0\\
		0&\:\text{otherwise,}
\end{cases}\end{equation*}
% \begin{equation*}
% 	GLR(X)=\begin{cases}
% 		\dfrac{E[X]}{E[X^-]}&\:\text{if}\:E[X]>0\:\text{and}\:E[X^-]>0.\\
% 		0&\:\text{if}\:E[X]\leq 0\:\text{and}\:E[X^-]>0.\\
% 		\infty&\:\text{if}\:E[X^-]=0.
% \end{cases}\end{equation*}
where $X^-=(-X)\vee 0$ denotes the negative part of $X$. 
%This ratio offers a meaningful balance between expected gains and expected losses, providing a solid framework for evaluating the acceptability of financial positions.  %This makes it easier to interpret,  more practical to apply, and allows for straightforward comparisons across different funds. \\
By definition we have that $GLR(k)=+\infty$ for all $k\in \mathbb{R}_+ \setminus \{0\}$. 
The GLR is an acceptability index in the sense of \citet{ChernyMadan2009} - monotone, quasiconcave, scale invariant $GLR(\lambda X) = GLR(X), \; \lambda > 0$ and satisfying the Fatou property - therefore it is also a ranking metric.

Another widespread example is the Risk-Adjusted Return on Capital (RAROC), that measures the downside risk through a coherent risk measure and it is defined by:
\begin{equation*}
	RAROC(X)=\begin{cases}
		\dfrac{E[X]}{\rho(X)}&\text{if} \:E[X]>0\:\text{and}\:\rho(X)>0.\\
		0&\text{if} \:E[X]\leq0\:\text{and}\:\rho(X)>0.\\
		\infty&\text{if}\:\rho(X)\leq 0.
	\end{cases}
\end{equation*} 

RAROC is an acceptability index in the sense of \cite{ChernyMadan2009} when the risk measure \( \rho \) is coherent, and RAROC is quasi-concave when the risk measure is convex.

\begin{proposition}\label{prop:RAROC}
Let $\rho\colon L^\infty \to \mathbb{R}$ be a cash-convex, decreasing monotone
risk measure, strictly decreasing on constants and such that $\rho(0)=0$. % and let $\ell\colon L^\infty \to \mathbb{R}$ be a linear increasing
%map. 
Assume that
\begin{equation}\label{eq:compatibility}
    \rho(X) \leq 0 \implies E(X) \geq 0 \qquad \forall\, X \in L^\infty.
\end{equation}
Define the family $\{\rho_x\}_{x>0}$ by
\begin{equation}\label{eq:rho_x-RAROC}
    \rho_x(X) =
    \begin{cases}
        x\rho(X) - E(X), & \rho(X) > 0, \\
        -\infty, & \rho(X) \leq 0.
    \end{cases}
\end{equation}
Then $\{\rho_x\}_{x>0}$ is an increasing family of risk measures, and the ranking
metric $r$ given by the representations~\eqref{eq:rhoTh1.3}
and~\eqref{eq:rhoTh1.3-Ax} of Theorem~\ref{thm:main} takes the form
\begin{equation}\label{eq:r-RAROC}
    RAROC(X) =
    \begin{cases}
        \dfrac{E(X)}{\rho(X)},
        & E(X) > 0 \text{ and } \rho(X) > 0, \\[8pt]
        0,
        & E(X) \leq 0 \text{ and } \rho(X) > 0, \\[4pt]
        +\infty,
        & \rho(X) \leq 0,
    \end{cases}
\end{equation}
with acceptance set representation given by
\begin{equation}\label{eq:A_x-RAROC}
    \mathcal{A}_x =
    \bigl\{X \in L^\infty \colon \rho(X) \leq 0\bigr\}
    \cup
    \left\{X \in L^\infty \colon \rho(X) > 0,\ E(X) > 0,\
    \frac{E(X)}{\rho(X)} \geq x\right\},
    \qquad x > 0.
\end{equation}
% In particular, when $\ell(X) = E[X]$ and $\rho$ is a coherent risk measure,
% $r$ coincides with $\mathrm{RAROC}$.
\end{proposition}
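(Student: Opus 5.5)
The plan is to treat Proposition~\ref{prop:RAROC} as the special case $\ell=E$ of Corollary~\ref{corr:general_RAROC}, and to check directly the two claims that corollary does not spell out: that $\{\rho_x\}_{x>0}$ is an increasing family of risk measures, and that the sup-formula \eqref{eq:rhoTh1.3} returns \eqref{eq:r-RAROC}. The expectation $E$ is linear, hence concave. It is increasing, strictly increasing on constants since $E(k)=k$, and $E(0)=0$. So every hypothesis of Corollary~\ref{corr:general_RAROC} holds, and that corollary already gives that $RAROC$ is a ranking metric with acceptance sets \eqref{eq:A_x-RAROC}.

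\textbf{Step 1: representation.} I would fix $X$ and split into cases exactly as in the last part of the proof of Corollary~\ref{corr:general_RAROC}.
\begin{itemize}
\item If $\rho(X)\le 0$, then $\rho_x(X)=-\infty$ for every $x>0$, so the supremum is $+\infty$.
\item If $\rho(X)>0$ and $E(X)\le 0$, then $x\rho(X)-E(X)>0$ for all $x>0$, so the set is empty and the supremum is $0$.
\item If both are positive, the set is $\{x>0: x\le E(X)/\rho(X)\}$, whose supremum is $E(X)/\rho(X)$.
\end{itemize}
The acceptance sets then follow from the $1.\Rightarrow 2.$ construction in Theorem~\ref{thm:main}.

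\textbf{Step 2: increasing family.} For $x\le y$ with $\rho(X)>0$ we have $\rho_y(X)-\rho_x(X)=(y-x)\rho(X)\ge 0$. When $\rho(X)\le 0$ both values equal $-\infty$.

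\textbf{Step 3: monotonicity of each $\rho_x$.} Take $X\le Y$, so that $\rho(Y)\le\rho(X)$ and $E(Y)\ge E(X)$.
\begin{itemize}
\item If $\rho(Y)\le 0$, then $\rho_x(Y)=-\infty\le\rho_x(X)$.
\item If $\rho(Y)>0$, then $\rho(X)>0$ as well, and $x\rho(Y)-E(Y)\le x\rho(X)-E(X)$.
\end{itemize}

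\textbf{Step 4: cash-quasiconvexity of each $\rho_x$ (the main obstacle).} The difficulty is that $\rho_x$ is defined piecewise. Fix $X$, $k$, $\lambda$ and set $Z=\lambda X+(1-\lambda)k$. If $\rho(Z)\le 0$, then $\rho_x(Z)=-\infty$ and there is nothing to prove.

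Otherwise $\rho(Z)>0$, and cash-convexity of $\rho$ gives $0<\rho(Z)\le\lambda\rho(X)+(1-\lambda)\rho(k)$. Linearity of $E$ gives $E(Z)=\lambda E(X)+(1-\lambda)k$. Hence
\[
x\rho(Z)-E(Z)\le \lambda\bigl(x\rho(X)-E(X)\bigr)+(1-\lambda)\bigl(x\rho(k)-k\bigr).
\]
If $\rho(X)>0$ and $\rho(k)>0$, this right-hand side is at most $\max\{\rho_x(X),\rho_x(k)\}$, which settles this case.

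The delicate case is when one of $\rho(X)$, $\rho(k)$ is $\le 0$, so that the corresponding $\rho_x$ value is $-\infty$ and the maximum is governed by the other term. I would handle it using the compatibility condition \eqref{eq:compatibility}, which says that $\rho(X)\le 0$ forces $E(X)\ge 0$, and hence $x\rho(X)-E(X)\le 0$. Here is how I would proceed:
\begin{itemize}
\item If $\rho(k)\le 0$, then $k\ge 0$ by strict monotonicity on constants, and the $k$-term above is nonpositive. So $\rho_x(Z)\le\lambda\rho_x(X)$, which yields the bound when $\rho_x(X)\ge 0$.
\item When the surviving term is negative, I would instead use monotonicity in the mixture. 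For $k\ge 0$ one has $Z\ge\lambda X$, so $\rho(Z)\le\rho(\lambda X)\le\lambda\rho(X)$ by convexity at $0$. This suggests showing $\rho_x(Z)\le\rho_x(X)$ directly through the scaling bound $x\rho(\lambda X)-E(\lambda X)\le\lambda(x\rho(X)-E(X))$ plus a sign argument.
\item The symmetric case $\rho(X)\le 0<\rho(k)$ with $k<0$ is handled analogously.
\end{itemize}
If this case analysis does not close in full generality, I would fall back on the route through Theorem~\ref{thm:main} $1.\Rightarrow 3.$ It supplies an increasing family of cash-quasiconvex risk measures representing $RAROC$ in any event. Step 1 shows that the given $\{\rho_x\}$ produces the same $r$.
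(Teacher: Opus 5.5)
Your Steps~1--3 are correct. Reducing the ranking-metric part to Corollary~\ref{corr:general_RAROC} with $\ell=E$ is legitimate. The paper instead re-runs the same $k=0$, $k>0$, $k<0$ case analysis directly for $\mathrm{RAROC}$, and then checks the sup-formula exactly as in your Step~1.

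The gap is Step~4, and no better sign argument will close it: the case you flag as delicate ($k\ge 0$, $\rho(X)>0$, $\rho_x(X)<0$) is false. Your scaling bound $\rho_x(\lambda X)\le\lambda\rho_x(X)$ points the wrong way there, because $\lambda\rho_x(X)\ge\rho_x(X)$ when $\rho_x(X)<0$. For positively homogeneous $\rho$, such as CVaR, the bound is an equality. Hence $\rho_x(\lambda X)=\lambda\rho_x(X)>\rho_x(X)=\max\{\rho_x(X),\rho_x(0)\}$ for every $\lambda\in(0,1)$ and every $X$ with $\rho(X)>0$ and $E(X)>x\rho(X)$.

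Here is a concrete instance. Take $\rho(X)=-\operatorname{ess\,inf}X$, which is convex, strictly decreasing on constants, has $\rho(0)=0$, and satisfies \eqref{eq:compatibility}. Let $X$ take the values $-1$ and $21$ with probability $1/2$ each, and set $x=1$, $k=0$, $\lambda=1/2$. Then $\rho_1(X)=1-10=-9$ and $\rho_1(0)=-\infty$. On the other hand, $\rho(X/2)=1/2>0$, so $\rho_1(X/2)=1/2-5=-9/2>-9$. The given $\rho_x$ are therefore not even quasiconvex at $0$, hence not risk measures in the paper's sense.

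Your other sub-case, $\rho(X)\le 0<\rho(k)$ with $Z=\lambda X+(1-\lambda)k$, does close. There \eqref{eq:compatibility} gives $x\rho(X)-E(X)\le 0$, so $\rho_x(Z)\le(1-\lambda)\rho_x(k)\le\rho_x(k)$ because $\rho_x(k)>0$ for $k<0$.

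Your fallback does not rescue the claim. Theorem~\ref{thm:main} $1.\Rightarrow 3.$ produces the different family $x-\mathrm{RAROC}(X)$ and says nothing about the specific $\rho_x$ in the statement. The paper's own proof does not supply this step either. It only notes that the family is increasing and invokes Theorem~\ref{thm:main}; what it actually proves is that $\mathrm{RAROC}$ is monotone and cash-quasiconcave and that $\sup\{x>0:\rho_x(X)\le 0\}=\mathrm{RAROC}(X)$.

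What is true, and all that the direction $3.\Rightarrow 1.$ of Theorem~\ref{thm:main} uses, is the following. Each $\rho_x$ is monotone, the family is increasing in $x$, and the zero sublevel set $\{\rho_x\le 0\}$ is monotone and cash-convex. That sublevel set coincides with $\mathcal{A}_x=\{\mathrm{RAROC}\ge x\}$, which is an upper level set of a cash-quasiconcave map. A correct proof should establish exactly this, which your Steps~1--3 plus the corollary already do. It should then drop, or weaken to the zero sublevel set, the assertion that each $\rho_x$ is cash-quasiconvex.
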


\begin{proof}
	Regarding monotonicity, for any $X\leq Y$ we have $ E(X)\leq E(Y)$ and $\rho(X)\geq \rho(Y)$. The result is trivially obtained if $\rho(X)\leq 0$, since in this case both $RAROC(X)$ and $RAROC(Y)$ are $+\infty$. For the case $\rho(X)>0$, if $ E(X)\leq 0$, the result is also trivial, since $RAROC(X)=0\leq RAROC(Y)$, whatever the value of $RAROC(Y)$ is. Otherwise, i.e. $\rho(X)>0$ and $E[X]>0$, then $E(Y) \geq E(X) >0$ while for $Y$ we could have $\rho(Y)>0$ or $\rho(Y)\leq 0$. In case $\rho(Y)\leq 0$ we get 
    $$RAROC(X) =\frac{ E(X)}{\rho(X)} < RAROC(Y)=+\infty 
    $$
    while for $\rho(Y) >0$ we have
    \[RAROC(X)=\frac{ E(X)}{\rho(X)}\leq\frac{ E(Y)}{\rho(Y)}=RAROC(Y).\]
    Therefore  monotonicity is verified in all the possible cases. 
    
    For cash-quasiconcavity, we divide the proof in three cases: $k= 0$, $k>0$, $k<0$.
    When $k=0$ we have
    $\rho(0)=0$, thus $RAROC(0)=+\infty$. Hence
    % $$
    % RAROC(0)=\begin{cases}
    %     0, & \text{if }\rho(0)>0\\
    %     +\infty, & \text{if } \rho(0)\leq 0.
    % \end{cases}
    % $$
    in order to verify cash-quasiconcavity we need to prove that $$RAROC(\lambda X + (1-\lambda) \cdot 0)= RAROC(\lambda X) \geq  RAROC(X)=\min\left\{ RAROC(X),RAROC(0) \right\}.$$

    For $\lambda =0$ the above inequality reduces to: 
    $$
     +\infty= RAROC(0)\geq RAROC(X),
    $$
    which holds trivially.
    
    For $\lambda \in (0,1]$ 
    %we consider the following cases.    % Case 1. If $\rho(0)> 0$
    % the result is trivial since $RAROC(0)=0$ and 
    % $\min\left\{ RAROC(X),RAROC(0) \right\}=0$.
    % Caso 2. If $\rho(0)\leq 0$, then $RAROC(0)=+\infty$ and, consequently, $\min\left\{ RAROC(X),RAROC(0) \right\}=RAROC(X)$. Hence, in this situation 
    we need to prove that 
    $RAROC(\lambda X) \geq RAROC(X)$.

    Case a. If $E(\lambda X) >0,\rho(\lambda X)>0$ we have, for any $\lambda \in (0,1]$
    $$
    RAROC(\lambda X)=\frac{\lambda E(X)}{\rho(\lambda X)} \geq 
    \frac{\lambda E(X)}{\lambda\rho(X)} =RAROC(X),
    $$
    where the inequality comes from the convexity of $\rho$ at $0$, which means $\rho(\lambda X)\leq \lambda \rho(X)$ for all $\lambda \in (0,1]$.

    Case b. If $\rho(\lambda X)\leq 0$, we have immediately,
    $RAROC(\lambda X)=+\infty \geq  RAROC(X)$.

    Case c. If $E(\lambda X)\leq 0,\rho(\lambda X)>0$, we have, for any $\lambda \in (0,1]$,
    $$
    RAROC(\lambda X) = 0, \qquad RAROC(X)=0
    $$
    since convexity at $0$ implies $ 0 < \rho(\lambda X)\leq \lambda \rho (X)$ and $E(\lambda X) \leq 0$ implies $E(X)\leq 0$.

Now we consider $k>0$ which means $\rho(k)<0$ from strict decreasing monotonicity of $\rho$ on constants, so that 
$$
RAROC(k)=
+\infty.
$$
    Hence we need to verify
    $$
    RAROC(\lambda X +(1-\lambda )k) \geq \min\left\{ RAROC(X),RAROC(k) \right\}= RAROC(X).
    $$
    Since $RAROC(\lambda X +(1-\lambda )k)\geq RAROC(\lambda X)$ for every $k>0$ (by monotonicity of $RAROC$), cash-quasiconcavity holds if we have, once again 
    $RAROC(\lambda X )\geq RAROC( X)$.
    
    % Case 2. For $\rho(k)>0$ we need to verify that
    % $$
    % RAROC(\lambda X +(1-\lambda )k)\geq \min\left\{RAROC( X) , RAROC(k)\right\}.
    % $$
    % Since
    % it holds in general that 
    % $$
    % RAROC(\lambda X +(1-\lambda )k)\geq RAROC(\lambda X) \geq RAROC(X) 
    % $$
    % where the first inequality comes from monotonicity, while the second one comes from the previous argument, we obtain
    % $$
    % RAROC(\lambda X +(1-\lambda )k)\geq \min\left\{RAROC( X) , RAROC(k)\right\}.
    % $$
Finally, for $k<0$, we notice that strictly monotonicity of $\rho$ on constants %condition \eqref{eq:compatibility}
guarantees 
$\rho(k) >0$, %(also from strictly monotonicity of $\rho$ on constants), 
so that 
$RAROC(k)=0$ and therefore the inequality
$$
RAROC(\lambda X +(1-\lambda )k) \geq \min\left\{ RAROC(X),RAROC(k)\right\}
$$
reduces to 
$$
RAROC(\lambda X +(1-\lambda )k) \geq 0,
$$
which holds trivially. 

   %  For $k<0$ we have
   %  $$
   %  RAROC(k)= \begin{cases}
   %      +\infty, & \rho(k)
   %  \end{cases}
   %  $$
   %  is whenever $RAROC(X)$ or $RAROC(k)$ are equal to $0$, since in this case we would have
   %  \begin{equation*}
   %      \begin{aligned}
   %      &RAROC(\lambda X+ (1-\lambda)k) \geq 0=\min\left\{r(X), r(k)\right\}.
   %  \end{aligned}
   %  \end{equation*}
   %  For the case both $ RAROC(X)$ and $RAROC(k)$ are finite and such that $RAROC(X)\geq c$ and $RAROC(k)\geq c$ with $c>0$, which means that 
   %  $$
   %  RAROC(X)= \frac{E(X)}{\rho(X)} \qquad 
   %  RAROC(k)= \frac{k}{\rho(k)},
   %  $$
   %  we show that the upper level sets of $r$ are cash-convex. 
   %  %Thus, take $c\in\mathbb{R}_+$ with . 
   %  Fix any $\lambda\in[0,1]$. If $\rho(\lambda X+(1-\lambda)k) \leq 0$, the result is trivial. Otherwise, we have  \[E[\lambda X+(1-\lambda)k]\geq c(\lambda\rho(X)+(1-\lambda)\rho(k))\geq c\rho(\lambda X+(1-\lambda)k).\]

   %  Since $RAROC(X)$ is a ranking metric, by Theorem \ref{thm:main}.2 we can choose $\mathcal{A}_x := \{x>0\colon RAROC(X)\geq x\}$, increasing in $x$, such that $RAROC(X)=\sup\{x>0\colon x \in \mathcal{A}_x\}$. 
	
	  % Let $\{\rho_x\}_{x>0}$ be a family defined as \[	\rho_x(X):=
   %      \begin{cases}
   %      -E(X)+x \rho(X), & E(X)>0, \rho(X) >0\\
   %      0, & \text{otherwise}
   %          \end{cases}\] 
    By Theorem \ref{thm:main}, $r_{\rho_x}:=\sup\{x>0\colon\rho_x(X)\leq 0\}$ is a ranking metric since $\{\rho_x\}_{x>0}$ increasing in $x$.   
  We show that $r_{\rho_x}(X)=RAROC(X)$. Indeed, if $\rho(X) \leq  0$ we have $RAROC(X)=+\infty$; 
  moreover, $\rho_x(X)=-\infty$ for all $x>0$, and thus
  \begin{equation}
  \begin{aligned}
     r_{\rho_x}(X)& = \sup \left\{ x>0:\rho_x(X)\leq 0\right\}
     = \sup \mathbb{R}_+ \setminus \left\{ 0\right\}=+\infty
  \end{aligned}
  \end{equation}
  %where the last equality follows whatever it is the sign of
  %$\frac{\mathbb{E}[X]}{\rho(X) }$.
  %Hence 
  %$RAROC(X)=r_{\rho_x}(X)= +\infty$.
  % If $\rho(X)=0$, 
  % by definition $RAROC(X)=+\infty$; in addition, under the assumption that $E[X]\geq 0$ when $ \rho(X) =0$, 
  % we have that
  % %$$
  % %-E[X]\leq \rho(X) \leq 0 %\leq x(-%\rho(X))
  % %$$
  % %for all 
  % %$X\in L^\infty$ 
  % %and $x>0$.
  % %, since
  % %$-\rho(X) \geq 0$.
  % \begin{equation}
  % \begin{aligned}
  %    r_{\rho_x}(X)& = \sup \left\{ x>0: -\mathbb{E}[X]+x \rho(X) \leq 0\right\}\\
  %    & = \sup \left\{ x>0: 0 \leq \mathbb{E}[X]\right\}
  %    = \sup \left(\mathbb{R}_+ \setminus\left\{0\right\}\right) = +\infty,
  % \end{aligned}
  % \end{equation}
  % which means that when $\rho(X)= 0$,
  % $r_{\rho_x}(X)=RAROC(X)$.

  % When $\rho(X)>0$ we consider two cases. 
  If $\rho(X) >0$ and  $E(X) \leq 0$,
    due to the sign of $\rho(X)$ and $E(X)$, we have
   $$
   \left\{x>0: -E(X)+x\rho(X) \leq 0\right\} = \emptyset,
   $$
   so that
   $$
   r_{\rho_x}(X) = \sup \emptyset =0.$$ which corresponds to $RAROC(X)$ when $E(X)\leq 0$ and $\rho(X) >0$, 

    For $E(X)>0$ and $\rho(X) >0$ we have
   $\frac{E(X)}{\rho(X)}>0$, thus \begin{equation*}\begin{aligned}
      \left\{x>0: \rho_x(X)\leq 0 \right\} & = \left\{x>0: -E(X) + x\rho(X) \leq 0\right\} \\
       & =
   \left\{x>0:  -E(X) \leq -x\rho(X)\right\}\\
     & =  \left\{ x>0:\frac{E(X)}{\rho(X)} \geq x\right\}
   \end{aligned}\end{equation*}
    so that
    $$
    r_{\rho_x}(X) = \sup 
    \left\{ x>0:\frac{E(X)}{\rho(X)} \geq x\right\}=\frac{E(X)}{\rho(X)}= RAROC(X). 
    $$
   The representation of $\mathcal{A}_x$ comes from Theorem \ref{thm:main}.  
   %\viola{qui è con $r_{\rho_x}$, ma va almeno fatto un remark per $\mathcal{A}_x$. \blue{[TO DO]}}
 %  if $\rho(X)> 0$ for all $X$ and $E(X) \leq 0$, then $r_{\rho_x}=\sup\{ \emptyset \}=0$. If $-E(X)\leq \rho(X) \leq -x \rho(X)$ and $\rho(X)\leq 0$ then 
 %    $r_{\rho_x}=\sup\{\mathbb{R}_+ \}=\infty$, while, if $E(X) \geq 0$, then 
	% \[\mathcal{A}_{\rho_x}=\left\lbrace  X\in L^\infty\colon - E(X)+x\rho(X)\leq 0\right\rbrace=\left\lbrace X\in L^\infty\colon\dfrac{ E(X)}{\rho(X)}\geq x \right\rbrace = \mathcal{A}_x, \]
 %    which concludes the proof.

    %since we have the following for any $X\in L^\infty$: \[\begin{aligned}
	%	\frac{\partial\rho_x(X)}{\partial x}&=(1+x)^{-2}( E(X)-x\rho(X))+(1+x)^{-1}\rho(X)\\
	%	&\geq(1+x)^{-2}(-\rho(X)-x\rho(X))+(1+x)^{-1}\rho(X)\\
	%	&=(1+x)^{-1}\rho(X)-(1+x)^{-1}\rho(X)=0.
	%\end{aligned}\] 

%
    
    %If $\rho(X)\leq 0$, then $RAROC(X)=\infty$, which assures $RAROC(X)\geq x$ for any $x>0$. If $\rho(X)> 0$, \blue{Fix this part as we did} then \replaced{$- E(X)\leq -x\rho(X) \leq 0 \leq \rho(X)$}{ $ E(X)\geq-\rho(X) \geq 0\geq x\rho(X)$}. In this case we obtain that \[\rho_x(X)=-\frac{1}{1+x} E(X)+\frac{x}{1+x}\rho(X) \leq 0.\:\forall\:x>0.\] Thus, $RAROC(X)=\infty=\sup\{x>0\colon\rho_x(X)\leq 0\}$. Finally, we get that
	%\[\mathcal{A}_x=\mathcal{A}_{\rho_x}=\left\lbrace  X\in L^\infty\colon - E(X)+x\rho(X)\leq 0\right\rbrace=\left\lbrace X\in L^\infty\colon\dfrac{ E(X)}{\rho(X)}\geq x \right\rbrace.  \]
\end{proof}

 A related acceptability formulation for the GLR is the $\Omega$ ratio \citep{keating2002introduction}.
 %with target $\tau=0$. 
 Omega further decomposes the return distribution into positive and negative parts. Omega is defined as
\begin{equation*}
	\Omega(X)=\begin{cases}
        \frac{E[X^+]}{E[X^-]} &  \text{if }\:E[X^-]>0\\
        +\infty, & \text{if }\:E[X^-]=0.
    \end{cases}
\end{equation*}
% \begin{equation*}
% \Omega(X)=\begin{cases}
% 	\dfrac{E[X^+]}{E[X^-]}&\:\text{if}\:E[X^-]>0.\\
% 		0 &\:\text{if}\:E[X^-]=0.
% \end{cases}\end{equation*}
%Depending on the chosen normalization, for $E[X^-]=0$, we set  $\Omega(X)=0$ if we want $\Omega(0)=0$, and $\Omega(X)= +\infty$ if
%we want $\Omega(0)=\infty$. \red{The only normalization that make sense if we want quasiconcality is $\Omega(0)=0$.}
This formulation is not quasiconcave due to convexity of the numerator. However, Omega ratio is a star-shaped acceptability index as in \cite{Righi2024} (i.e. monotone, quasiconcave at 0 and $\Omega(0)=+\infty$),  which can be represented by a family of expectile-based risk measures $\rho_x=EVaR^{\frac{1}{1+x}}$, which is not convex for $x<1$, it is quasi-convex at $0$ for any $x$. We remind that an expectile-based risk measure $EVaR^{p}$ is the negative of a $p$-expectile of $X$, for $p\in[0,1]$, that is the unique solution $y$ of $pE[(X-y)^+]=(1-p)E[(X-y)^-]$ for $p\in[0,1]$. 
The following result establishes that Omega is a ranking metric since it satisfies \quasistar. 

\begin{proposition}
\begin{itemize}
    \item[i)] The Omega ratio is monotone and cash-quasiconcave, hence it is a ranking metric.
    \item[ii)] \citep{Righi2024} Given $\rho_x=EVaR^{\frac{1}{1+x}}$, family of expectile-based risk measures $EVaR^{p}$ with $p=\frac{1}{1+x}$, then $\Omega(X)$ can be represented as in equation \eqref{eq:rhoTh1.3-Ax} and \eqref{eq:rhoTh1.3}  of Theorem \ref{thm:main} with, respectively, $$\rho_x=EVaR^{\frac{1}{1+x}}$$
and
$$
\mathcal{A}_x=\left\lbrace X\in L^\infty\colon\frac{E[X^+]}{E[X^-]}\geq\dfrac{1-p}{p} \right\rbrace.
$$
\end{itemize}

\end{proposition}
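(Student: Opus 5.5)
Monotonicity in i) is immediate: if $X\le Y$ then $X^+\le Y^+$ and $X^-\ge Y^-$. Hence $E[X^+]\le E[Y^+]$ and $E[X^-]\ge E[Y^-]$. If $E[Y^-]=0$ then $\Omega(Y)=+\infty$. Otherwise $E[X^-]\ge E[Y^-]>0$ and the ratio is monotone.

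For cash-quasiconcavity I would split according to the sign of $k$.

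\emph{Case $k\ge 0$.} Here $\Omega(k)=+\infty$, since $k^-=0$, so I must show $\Omega(\lambda X+(1-\lambda)k)\ge \Omega(X)$. By monotonicity, $\Omega(\lambda X+(1-\lambda)k)\ge \Omega(\lambda X)$. Positive homogeneity of $X\mapsto X^\pm$ gives $\Omega(\lambda X)=\Omega(X)$ for $\lambda>0$. The case $\lambda=0$ is trivial.

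\emph{Case $k<0$.} Now $\Omega(k)=E[k^+]/E[k^-]=0$, so the inequality $\Omega(\cdot)\ge 0$ is trivial.

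This is the whole proof of i). So the step I expected to be the obstacle, a genuine mixing inequality between $X$ and a negative constant, disappears because $\Omega$ vanishes on negative constants. This is the point where quasiconcavity fails but cash-quasiconcavity survives.

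For ii) the cleanest route is a direct check of \eqref{eq:rhoTh1.3} rather than relying only on the citation. Take $p=\frac{1}{1+x}$ and let $e_p(X)$ denote the $p$-expectile, so that $EVaR^p=-e_p$. First I would show the family is increasing in $x$. As $x$ increases, $p$ decreases, the expectile $e_p$ is nondecreasing in $p$, and so $-e_p$ increases.

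Each $\rho_x$ is a risk measure. Decreasing monotonicity is standard for expectiles. Cash-quasiconvexity follows from the star-shapedness of expectiles at constants: $e_p(\lambda X+(1-\lambda)k)=\lambda e_p(X)+(1-\lambda)k$, by positive homogeneity and translation equivariance. Therefore
\[
\rho_x(\lambda X+(1-\lambda)k)=\lambda\rho_x(X)+(1-\lambda)\rho_x(k)\le\max\{\rho_x(X),\rho_x(k)\}.
\]

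The key computation is the equivalence $\rho_x(X)\le 0\iff e_p(X)\ge 0$. The map
\[
y\mapsto pE[(X-y)^+]-(1-p)E[(X-y)^-]
\]
is strictly decreasing and vanishes at $e_p(X)$. So $e_p(X)\ge 0$ holds exactly when this map is nonnegative at $y=0$, that is, when $pE[X^+]\ge(1-p)E[X^-]$. This in turn is equivalent to $\Omega(X)\ge \frac{1-p}{p}=x$, with the case $E[X^-]=0$ handled separately as $+\infty$.

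Hence $\{x>0:\rho_x(X)\le 0\}=\{x>0:\Omega(X)\ge x\}$, and its supremum is $\Omega(X)$. The description of $\mathcal{A}_x$ as $\{\rho_x\le 0\}$ then follows from Theorem~\ref{thm:main}.

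The only delicate points are the boundary cases: $E[X^-]=0$, and the case where $\Omega(X)=0$ with the empty supremum. Both are settled by the conventions $\sup\emptyset=0$ and $\sup\mathbb{R}_+=+\infty$.
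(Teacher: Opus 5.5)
Your proof is correct, but it takes a different route from the paper in both parts.

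\textbf{Part i).} For $k\ge 0$ the paper uses a direct distributional computation. It sets $Y=aX+c$ with $c=(1-a)k$ and writes $E[Y^\pm]$ as integrals of tail probabilities. This gives $E[Y^+]\ge aE[X^+]+c\,P(X>0)$ and $E[Y^-]\le aE[X^-]$, and the paper then compares the ratios; monotonicity of $\Omega$ is taken for granted there. You instead prove monotonicity and reduce the case to $\Omega(\lambda X+(1-\lambda)k)\ge\Omega(\lambda X)=\Omega(X)$. That is, you use monotonicity plus scale invariance, the same template the paper uses for RAROC. Your argument is shorter and exposes the structure: cash-quasiconcavity of $\Omega$ follows from monotonicity, quasiconcavity at $0$, and the fact that $\Omega$ takes only the values $0$ and $+\infty$ on constants. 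The paper's estimates carry more quantitative information, such as the explicit gain $c\,P(X>0)$ in the numerator, than the claim requires.

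One correction to your commentary: the case that separates cash-quasiconcavity from quasiconcavity is $k\ge 0$, not $k<0$, which is trivial under either notion. It is for $k\ge 0$ that you use the fact that a nonnegative constant only shifts $\lambda X$ upward, and no such domination is available when mixing with a random $Y$.

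\textbf{Part ii).} The paper gives no argument and defers to \cite{Righi2024}, whereas you verify the representation directly. Your three steps are all correct:
\begin{itemize}
\item $\rho_x$ is increasing in $x$ because $p\mapsto e_p$ is nondecreasing.
\item Each $\rho_x$ is cash-quasiconvex because $e_p(\lambda X+(1-\lambda)k)=\lambda e_p(X)+(1-\lambda)k$.
\item The strictly decreasing first-order function gives $\rho_x(X)\le 0\iff pE[X^+]\ge(1-p)E[X^-]\iff\Omega(X)\ge x$.
\end{itemize}
This makes the proposition self-contained. Note only that the strict decrease needs $p\in(0,1)$, which $x>0$ guarantees; the index $x=0$, i.e.\ $p=1$, plays no role in the supremum.
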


\begin{proof}
To verify this, let $Y=aX+(1-a)k$. Using the representation
$E[X^+]=\int_0^{\infty}\!P(X>t)\,dt$ and $E[X^-]=\int_0^{\infty}\!P(-X>t)\,dt$,
we analyze two cases. If $k<0$, then $\Omega(k)=0$, while $\Omega(Y)\ge0$ by definition.
Thus $\Omega(Y)\ge\min\{\Omega(X),\Omega(k)\}$. If $k\ge0$ we distinguish two cases.
 If $E[Y^-]=0$ then by definition $\Omega(Y)=+\infty$, hence $\Omega(Y) \geq \min \{\Omega(X), \Omega(k)\}$.
 
If $E[Y^-]>0$ then 
\[
E[Y^+]
 =\int_0^{\infty}\!P(aX+c>t)\,dt
 =a\!\int_{-c/a}^{\infty}\!P(X>s)\,ds
 \ge aE[X^+]+c\,P(X>0).
\]
Similarly,
\[
E[Y^-]
 =\int_0^{\infty}\!P(-aX-c>t)\,dt
 =a\!\int_{c/a}^{\infty}\!P(-X>u)\,du
 \le aE[X^-].
\]
Hence, since $E[Y^-]>0$
\[
\Omega(Y)=\frac{E[Y^+]}{E[Y^-]}
   \ge \frac{aE[X^+]+c\,P(X>0)}{aE[X^-]}
   \ge \frac{E[X^+]}{E[X^-]}
   =\Omega(X).
\]
Since $\Omega(k)=\infty$ for $k\ge0$, the \quasistar\  inequality holds. We refer to \cite{Righi2024} for the proof of the representations.
\end{proof}

\begin{remark}
    Within this definition, the only normalization compatible with cash-quasiconcavity and monotonicity is $\Omega(0)=+\infty$. 
\end{remark}

\subsection{Ranking metrics generated by risk measures}
  
\subsubsection{Certainty equivalent / Expected loss based ranking metrics}
\label{sub_sub:Exp_lossRM}
A first example is the class of ranking metrics generated by expected loss-based risk measures. 

%A first and simple example for \quasistar \ risk measure is the family of expected losses as $E[f(-X)]$, where 
%$f\colon\R\to\R$ be increasing. 
%Such $\rho$ is a \quasistar \ risk measure.

\begin{definition}
 Let $\left\{f_x\right\}_{x>0}$ be family of increasing and convex, hence continuous, real functions, also increasing in $x$, i.e., $f_x \leq f_y$ for any $x\leq y$. Proposition 3.3 in \cite{Han2024} assures the expected loss $E[f_x(-X)]$ is a risk measure for any $x>0$. Consider a family of risk measures $\{\rho_x\}$ such that $$\rho_x(X)=E[f_x(-X)],$$ then by Theorem \ref{thm:main}.3,  we can generate a new family of ranking metrics as follows: 
 \begin{equation} \label{def:r_EL}
 r^f(X)=\sup\{x>0\colon E[f_x(-X)]\leq 0\}.
  \end{equation}
\end{definition}
  %X is a profit

Typical examples are, for instance, an insurance contract that pays $f_x(-X)$ for an insurable loss $-X$, or the put premium $E[(X-x)^-]$ for some strike price $x$ where $f_x(y)=(y+x)^+$.
%in this way $f_x(-X)= (x-X)^+= (-(X-x))^+ = (X-x)^-$.

\begin{example}
Let $\rho_x(X)=\rho(X)+x$ with $\rho(X)=E\left[f(-X)\right]$, which implies $f_x(y)=f(y)+x$. Then by Corollary \ref{cor:-rho} we obtain $$r^f(X)=\sup \left\{ 0, -E[f(-X)]\right\}.$$
\end{example}
The following lemma shows that
%how various properties of \(r^f\) depend on specific properties of the family \(\{f_x\}\).
under specific assumptions on \(\{f_x\}\), expected loss-based ranking metrics $r^f$ are cash-subadditive.
\begin{lemma}
If $f_{x+c}(y-c)\geq f_x(y)$ for all $x>0.c\geq 0$ we have that $r^f$ is cash-subadditive in $L^\infty$.

If $f_{x+c}(y-c)= f_x(y)$ for all $x>0, \ c\geq 0$ we have that $r^f$ is cash-additive in ${\rm supp}(r)$. 
%\begin{itemize}
% \item[i)]
% $r_{EL}(X)$ is cash-subadditive.
 %\item[ii)] If $f_x(X)$ is positive homogeneous in $X$, then $r_{EL}(X)$ is positive-homogeneous???
%\end{itemize}
\end{lemma}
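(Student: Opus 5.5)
The plan is to reduce both claims to Proposition~\ref{Prop:rho_x}, applied to the family $\rho_x(X)=E[f_x(-X)]$ that generates $r^f$ through \eqref{def:r_EL}. By the discussion preceding the lemma, this is an increasing family of risk measures. Indeed, $f_x\le f_y$ for $x\le y$ gives $\rho_x\le\rho_y$, and each $\rho_x$ is a risk measure by Proposition 3.3 in \cite{Han2024}. So $r^f$ is exactly the ranking metric of Theorem~\ref{thm:main}.3, and Proposition~\ref{Prop:rho_x} is available. It therefore suffices to translate the pointwise hypotheses on $\{f_x\}$ into the conditions on $\{\rho_x\}$ appearing there.

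For the first claim, I would fix $X\in L^\infty$, $x>0$ and $c\ge 0$, and evaluate the hypothesis pointwise at $y=-X(\omega)$. This gives $f_{x+c}(-X-c)\ge f_x(-X)$ almost surely. Since $-(X+c)=-X-c$, taking expectations (everything is bounded, as $f$'s are continuous and $X$ is bounded) yields
\[
\rho_{x+c}(X+c)=E[f_{x+c}(-(X+c))]\ge E[f_x(-X)]=\rho_x(X).
\]
This is precisely the condition of Proposition~\ref{Prop:rho_x}.i), so $r^f$ is cash-subadditive on $L^\infty$.

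For the second claim, the same computation with equality gives $\rho_{x+c}(X+c)=\rho_x(X)$ for all $x>0$, $c\ge0$. Substituting $x\to x-c$ turns this into $\rho_x(X+c)=\rho_{x-c}(X)$ for $x>c\ge 0$, which is the hypothesis of Proposition~\ref{Prop:rho_x}.ii). That result then gives cash-additivity of $r^f$, with cash-superadditivity on $\mathrm{supp}(r^f)$.

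The only delicate point is the boundary case $x=c$ in the substitution. The family is indexed by $x>0$, so $\rho_0$ is not defined, whereas Proposition~\ref{Prop:rho_x}.ii) nominally needs $\rho_{x-c}$ for all $x>c$. I expect this to be harmless, because only $x-c>0$ is ever used in the sup-computations of that proof. If needed, one can set $f_0:=\lim_{x\downarrow0}f_x$ (or $\inf_x f_x$) to extend the family to $x=0$. This keeps monotonicity in $x$, and the equality $f_c(y-c)=f_0(y)$ passes to the limit. Integrability of $f_x(-X)$ also needs a check: $f_x$ is continuous and $X$ is bounded, so $f_x(-X)$ is bounded and all the expectations are finite.
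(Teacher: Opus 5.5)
Your proposal is correct and is essentially the paper's proof. Evaluating the hypothesis at $y=-X$ gives $\rho_{x+c}(X+c)\ge\rho_x(X)$ (resp.\ equality) for $\rho_x(X)=E[f_x(-X)]$, and Proposition~\ref{Prop:rho_x}~i) (resp.\ ii)) then concludes. The boundary worry in your last paragraph is unnecessary: Proposition~\ref{Prop:rho_x}~ii) only involves $x>c$, so $x-c>0$ always, and no $\rho_0$ or $f_0$ is ever needed.
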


\begin{proof}
%\begin{itemize}
%\item[i)] 
Assume that $f_{x+c}(y-c)\geq f_x(y)$ for all $x>0.c\geq 0$ holds. 
Then 
$$
{\rho}_x(X):= E[f_x(-X)]
$$
satisfies
$$
{\rho}_{x+c}(X+c)
= E[f_{x+c}(-X -c)] 
\geq E[f_x(-X)]
= {\rho}_x(X)
$$
Hence, cash-subadditivity of $r(X)$ follows from Proposition~\ref{Prop:rho_x}.(i). 

In case $f_{x+c}(y-c)= f_x(y)$ for all $x>0, \ c\geq 0$
cash-additivity follows similarly using Proposition~\ref{Prop:rho_x}.(ii). 
\end{proof}

Expected loss-based ranking metrics admit an equivalent utility-based representation and thus coincide with metrics induced by a family of generalized certainty equivalents. 

\begin{definition}
  Given an increasing family $\{u_x\}_{x>0}$ of strictly  increasing, concave utility functions satisfying $E[\,|u_x(X)|\,]<\infty$ for all $x>0$, we define the family of certainty equivalent associated with $u_x$ as:
\begin{equation}\label{def:cert-eq}
\tilde{\rho}_x(X)\;:=\; -\,u_x^{-1}\!\big(E[u_x(X)]\big).
\end{equation}
The induced expected utility-type ranking metric is then given by:
\[\begin{aligned}
r^{u}(X)\;:=\;\sup\{\,x>0:\ -\,u_x^{-1}\!\big(E[u_x(X)]\big)\le 0\,\}.
\end{aligned}
\]  
\end{definition}
We now show that this construction coincides with the expected utility ranking metric generated by the choice $f_x(y)=-u_x(-y)$ in~\eqref{def:r_EL}.

\begin{proposition}
Let $\{u_x\}_{x>0}$ be an increasing family of strictly increasing and concave utility functions,  such that $u_x(0)=0$ and $E[|u_x(X)|]<\infty$ for all $x>0$. If $f_x:\mathbb{R}\to\mathbb{R}$ is defined by $f_x(y)\;:=\;-u_x(-y)$, then
\[
r^{f}(X)\;=\;r^{u}(X).
\]

\end{proposition}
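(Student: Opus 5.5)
Since both $r^f$ and $r^u$ are suprema over $x>0$ of the same kind of condition, it is enough to show that the two defining sets agree. For every $x>0$ and $X\in L^\infty$ we want
\[
\{x>0:\ E[f_x(-X)]\le 0\}=\{x>0:\ -u_x^{-1}(E[u_x(X)])\le 0\}.
\]
Taking suprema, with the same convention $\sup\emptyset=0$, then gives $r^f(X)=r^u(X)$.

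\emph{Rewrite the expected-loss condition.} With $f_x(y)=-u_x(-y)$ we get $f_x(-X)=-u_x(X)$. Hence
\[
E[f_x(-X)]\le 0 \iff E[u_x(X)]\ge 0 .
\]
Integrability is guaranteed by the hypothesis $E[|u_x(X)|]<\infty$.

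\emph{Rewrite the certainty-equivalent condition.} Here
\[
-u_x^{-1}(E[u_x(X)])\le 0 \iff u_x^{-1}(E[u_x(X)])\ge 0 .
\]
The map $u_x$ is strictly increasing, so $u_x^{-1}$ is strictly increasing on the range of $u_x$. Using $u_x(0)=0$, applying $u_x$ to both sides shows this is equivalent to $E[u_x(X)]\ge u_x(0)=0$. The two conditions therefore coincide for every $x$, and the proof is complete.

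\emph{Main obstacle: well-definedness of $u_x^{-1}(E[u_x(X)])$.} We need $E[u_x(X)]$ to lie in the range of $u_x$. Since $X$ is bounded, $\operatorname{ess\,inf}X\le X\le \operatorname{ess\,sup}X$. Monotonicity of $u_x$ then places $E[u_x(X)]$ in $[u_x(\operatorname{ess\,inf}X),\,u_x(\operatorname{ess\,sup}X)]$. A concave function on $\mathbb{R}$ is continuous, so by the intermediate value theorem this interval is contained in the range of $u_x$.

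We will also note two side points. First, $f_x$ is increasing and convex because $u_x$ is increasing and concave. Second, the family $\{f_x\}$ being increasing corresponds to the ordering of $\{u_x\}$, read in the sense that makes $\{\rho_x\}$ increasing. Neither point is needed for the set equality, which holds pointwise in $x$.
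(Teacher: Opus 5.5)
Your proposal is correct and follows the same route as the paper. You substitute $f_x(-X)=-u_x(X)$ and check that, for each $x>0$, the condition $E[f_x(-X)]\le 0$ is equivalent to $-u_x^{-1}(E[u_x(X)])\le 0$, so the two suprema are taken over the same set; unlike the paper's one-line rewriting, you also justify that $u_x^{-1}(E[u_x(X)])$ is well defined and make explicit where strict monotonicity and $u_x(0)=0$ are used.
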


\begin{proof} By definition of $f_x$ and  $r^f$ follows that

\[\begin{aligned}
r^f(X) & = \sup\left\{ x >0: E[f_x(-X)] \leq 0\right\} \\
& = \sup\left\{ x >0: E[-u_x(X)]\leq 0\right\}\\
& = \sup\left\{ x >0:- u_x^{-1}(E[u_x(X)])\leq 0\right\} =r^u(X)
\end{aligned}\]
\end{proof}

% \[\begin{aligned}
% r^f_{EL}(X) & = \sup\left\{ x >0: \mathbb{E}[f_x(-X)] \leq 0\right\} \\
% & = \sup\left\{ x >0: \mathbb{E}[-u_x(-X)]\leq 0\right\}\\
% & = \sup\left\{ x >0: u^{-1}((\mathbb{E}[-u_x(-X)])\leq 0\right\}\\
% & = \sup\left\{ x >0: -u^{-1}((\mathbb{E}[u_x(-X)])\leq 0\right\}\\
% &= \sup\left\{ x >0: -u^{-1}((\mathbb{E}[u_x(-X)])\leq 0\right\}=r^u_{EL}(X)
% \end{aligned}\]

\begin{example}
If the family of expected-utility-based risk measures is defined by $x$-shifts    $\tilde{\rho}_x(X)=\tilde{\rho}(X)+x$ with $\tilde{\rho}(X)=-u^{-1}E[u(X)]$, then by Corollary \ref{cor:-rho} the ranking metric is given by:
\begin{equation}\label{def:sup-cert-eq}
\begin{aligned}
r^u(X)= \sup \left\{ 0, \,  u^{-1}(E[u(X)])\right\}.
\end{aligned}
\end{equation}
Notice that if $u(y)=-f(-y)$, then$$r^f(X)=u(r^u(X)).$$ 
\end{example}

\subsubsection{$\Lambda$-quantiles based ranking metrics}

As a novel example, we consider ranking metrics generated by $\Lambda$-quantiles, known originally as Lambda value at risk,  \citep{Frittelli2014, BelliniPeri2022}. This risk measure, \(\Lambda\text{VaR}\)(X), is defined for any \(X\) as the negative of the $\Lambda$-quantile $q_{\Lambda}(X)$:
\[
\Lambda\text{VaR}(X) := -q_{\Lambda}(X)= -\inf\left\{y \in \mathbb{R} : \mathbb{P}(X \leq y) > \Lambda(y)\right\},
\]
where \(\Lambda\colon \mathbb{R} \to [0,1]\) is a function that is not identically equal to zero. When \(\Lambda = \alpha\) is constant, we recover the traditional Value-at-Risk at the level $\alpha$ (\(\text{VaR}_\alpha\)).
The \(\Lambda\text{VaR}\) is not quasiconvex but it is a cash-quasiconvex risk measure and becomes cash-subadditive when \(\Lambda\) is decreasing (see~\cite{Han2024}).

\begin{definition}
Given a family of lambda value at risk \(\{\Lambda \mathrm{VaR}_x\}_{x>0}\) increasing in $x$, by Theorem \ref{thm:main}.3, we construct a novel class of ranking metrics by defining
\[
r_{\Lambda \mathrm{VaR}_x}(X) := \sup\left\{x > 0 : \Lambda \mathrm{VaR}_x(X) \leq 0\right\}.
\]
\end{definition}

A class of lambda quantiles ranking metrics can be induced by a family of lambda value at risk measures defined through a decreasing collection of lambda functions \(\{\Lambda^x\}\).  

\begin{example}
Let \(\{\Lambda^x\}\) be a decreasing family of decreasing functions \(\Lambda^x\colon \mathbb{R} \to [0,1]\) not identically equal to zero. We define
\[
\Lambda^x \mathrm{VaR}(X) := -q_{\Lambda^x}(X)= -\inf\left\{y \in \mathbb{R} : \mathbb{P}(X \leq y) > \Lambda^x(y)\right\}.
\]
The induced ranking metric is given by:
\[
r_{\Lambda^x \mathrm{VaR}}(X) := \sup\left\{x > 0 : \Lambda^x \mathrm{VaR}(X) \leq 0\right\}.
\]    
\end{example}

Another class of ranking metrics can be induced by the family of lambda value at risk measures generated by $x$-shifts.  

\begin{example}
Let be \(\{\Lambda \mathrm{VaR}_x\}_{x>0}\) the family of lambda value at risk generated by $x$-shifts \(\Lambda \mathrm{VaR}_x(X) = \Lambda \mathrm{VaR}(X) + x\), for any $x>0$. By Corollary \eqref{cor:-rho}, the induced ranking metric is given by
\begin{equation}\label{def:r_LambdaVaR}
r_{\Lambda\text{VaR}}(X) = \sup\left\{0\, -\Lambda \mathrm{VaR}(X)\right\}= \sup\left\{0, q_\Lambda(X)\right\}.
\end{equation} 
\end{example}
Notice that the $\Lambda$-quantile ranking metric is cash-quasiconcave but not quasiconcave, showing how meaningful performance measures can still emerge when standard quasiconcavity assumptions are relaxed.

In the case where the $\Lambda\mathrm{VaR}$ family is generated by cash shifts the following lemma holds.  

\begin{lemma}\label{L:lambdaVaR_cashsubadd}
   Let us assume $\Lambda_x \mathrm{VaR} (X)=\Lambda \mathrm{VaR}(X)+x$.  If \(\Lambda\) is decreasing in \(y\), then the induced ranking metric $r_{\Lambda VaR}$ is cash-subadditive.
\end{lemma}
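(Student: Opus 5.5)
The plan is to reduce the claim to Corollary~\ref{cor:cash-subad-transfer}. With $\rho_x(X)=\Lambda\mathrm{VaR}(X)+x$, it states that the induced ranking metric is cash-subadditive on $L^\infty$ as soon as the base risk measure $\rho=\Lambda\mathrm{VaR}$ is cash-subadditive. In the convention of the paper this means $\rho(X+c)\geq\rho(X)-c$ for all $X\in L^\infty$ and $c\geq 0$. In terms of quantiles this reads
\[
q_\Lambda(X+c)\leq q_\Lambda(X)+c ,
\]
so the whole task is to establish this inequality when $\Lambda$ is decreasing. It is the result of \cite{Han2024} quoted just before, but I would prove it directly to keep the argument self-contained.

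For the direct verification, fix $c\geq 0$. By definition,
\[
q_\Lambda(X+c)=\inf\{y: \mathbb{P}(X\leq y-c)>\Lambda(y)\}.
\]
Substituting $z=y-c$ gives
\[
q_\Lambda(X+c)=c+\inf\{z:\mathbb{P}(X\leq z)>\Lambda(z+c)\}.
\]
Since $\Lambda$ is decreasing and $c\geq 0$, we have $\Lambda(z+c)\leq\Lambda(z)$. Hence every $z$ with $\mathbb{P}(X\leq z)>\Lambda(z)$ also satisfies $\mathbb{P}(X\leq z)>\Lambda(z+c)$. Thus the second set contains $\{z:\mathbb{P}(X\leq z)>\Lambda(z)\}$, and its infimum is at most $q_\Lambda(X)$. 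This yields $q_\Lambda(X+c)\leq q_\Lambda(X)+c$, that is, $\Lambda\mathrm{VaR}(X+c)\geq\Lambda\mathrm{VaR}(X)-c$.

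Alternatively, one can bypass the corollary. By Corollary~\ref{cor:-rho}, $r_{\Lambda\mathrm{VaR}}(X)=\max\{0,q_\Lambda(X)\}$. Therefore
\[
r_{\Lambda\mathrm{VaR}}(X+c)\leq\max\{0,q_\Lambda(X)+c\}\leq\max\{0,q_\Lambda(X)\}+c=r_{\Lambda\mathrm{VaR}}(X)+c,
\]
using $c\geq 0$ in the middle step. I would present both routes briefly, with the corollary as the main one and this as a sanity check.

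The main obstacle is the edge behaviour of the infimum: the sets may be empty, or the quantile may be $\pm\infty$ when $\Lambda$ is merely non-identically zero. With $\sup\emptyset$-type conventions, the inclusion of sets still gives the inequality in the extended reals. For $X\in L^\infty$ the quantile is bounded above by $\operatorname{ess\,sup}X$ whenever $\Lambda<1$ somewhere large, and I would note that the inequality $\inf B\leq\inf A$ for $A\subseteq B$ holds regardless, so no finiteness is needed. The change of variables and the monotonicity of $\Lambda$ are otherwise routine.
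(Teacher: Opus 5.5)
Your proposal is correct and follows the paper's route. You reduce via Corollary~\ref{cor:cash-subad-transfer} to cash-subadditivity of $\Lambda\mathrm{VaR}$; the paper simply cites this fact from \cite{Han2024}, whereas you prove it directly through the shift $z=y-c$ and the sound inclusion $\{z:\mathbb{P}(X\le z)>\Lambda(z)\}\subseteq\{z:\mathbb{P}(X\le z)>\Lambda(z+c)\}$. Your second route through $r_{\Lambda\mathrm{VaR}}=\max\{0,q_\Lambda\}$ is also valid, but it is not an independent sanity check, since it relies on the same inequality $q_\Lambda(X+c)\le q_\Lambda(X)+c$.
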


\begin{proof}
By Corollary \ref{cor:cash-subad-transfer}, it suffices to show that the underlying risk measure $\Lambda\mathrm{VaR}$ is cash-subadditive. As established in \cite{Han2024}, $\Lambda\mathrm{VaR}$ satisfies cash-subadditivity provided that $\Lambda$ is a decreasing function. 
% To prove cash-subadditivity, we need to show that for any \(c > 0\),
% \[
% \Lambda\text{VaR}(X + c) \geq \Lambda\text{VaR}(X) - c.
% \]
% By definition of \(\Lambda\text{VaR}\), this inequality is equivalent to:
% \[
% \inf\{y \in \mathbb{R} \colon F_{X + c}(y) > \Lambda(y)\} \leq \inf\{y \in \mathbb{R} \colon F_X(y) > \Lambda(y)\} + c.
% \]
% Using the fact that \(F_{X + c}(y) = F_X(y - c)\), we can rewrite the left-hand side as:
% \[
% \inf\{y \in \mathbb{R} \colon F_X(y - c) > \Lambda(y)\}.
% \]
% Similarly, the right-hand side becomes:
% \[
% \inf\{y + c \in \mathbb{R} \colon F_X(y) > \Lambda(y)\} = \inf\{z \in \mathbb{R} \colon F_X(z - c) > \Lambda(z - c)\}.
% \]
% Hence, it is sufficient to show the set inclusion:
% \[
% \{y \in \mathbb{R} \colon F_X(y - c) > \Lambda(y)\} \supseteq \{y \in \mathbb{R} \colon F_X(y - c) > \Lambda(y - c)\}.
% \]
% This inclusion holds whenever \( \Lambda(y - c) \geq \Lambda(y)\), which is guaranteed if \(\Lambda(y)\) is decreasing in \(y\). Thus, the inequality follows, completing the proof.
\end{proof}

\subsection{Ranking metrics generated by bibliometric indices}
\label{sub_biblio_index}
%\section{Bibliometric index based ranking metrics}
An alternative class of ranking measures is given by the so-called scientific research measures (SRM) introduced by \cite{frittelli2016scientific}. This new class has been introduced in the bibliometric literature as a generalization of several bibliometric indices, including the $h$-index, the $h$-alpha-index, the maximum number of citations, the maximum number of publications etc. 

Here, we assume there exists a portfolio with $N$ assets (alternatively, we can think about a fund of portfolios) whose returns are denoted by $X_1, \ldots X_p, \ldots, X_N$, respectively. We denote with $X_{(1)}, \ldots X_{(p)}, \ldots, X_{(N)}$ the same portfolio whose components are ranked in decreasing order. With a slight abuse of notation, we set $X(p):=X_{(p)}$ if $p \in \left \{ 1, \ldots, N \right\}$ and $X(p)=0$ if $p>N$. In what follows, we consider a more general setting where a portfolio is identified with a function from $\mathbb{R}_+$ to $\mathbb{R}_+$, i.e., as a sequence $X=\left\{X(p)\right\}_{p\in \mathbb{R}_+}$. % can be viewed as a random variable from . %\blue{\cancel{$(\mathbb{N},2^{\mathbb{N}})$ to $\mathbb{R}_+$.} [since when we combine we gets real numbers so let's work on reals]} 
In this framework, ranking metrics generated by SRM are maps $r_{\text{SRM}}$ with domain $L^\infty(\mathbb{R}_+) $ (the space of equivalence classes of Lebesgue measurable functions that are essentially bounded, endowed with the essential supremum norm) that provide different ranks based on a preassigned family of performance curves $\{f_x\}$.
\begin{definition}\label{def:SRM}
Let $\{f_x\}_{x>0}$ be a family of performance curves such that $f_x: \mathbb{R}_+ \to \mathbb{R}_+$ for any $x>0$, such that $f_x$ is increasing and left continuous on $x$. Ranking metrics generated by SRM are maps $r_{\text{SRM}} \colon%
 L^{\infty}(\mathbb{R}_+)\rightarrow \lbrack 0,+\infty ]$ defined as follows %\blue{\cancel{L^{\infty}(\mathbb{N},2^{\mathbb{N}})}}
\begin{eqnarray}
r_{\text{SRM}}(X) :=\sup \left\{ x>0 \colon X(p)\geq f_{x}(p) \text{ for all}\ p\in 
\mathbb{R}_+ \ a.s. \right\} .  \label{eq:SRM}
\end{eqnarray} 
\end{definition}
%\blue{\cancel{\mathbb{N}}}
In this context, performance curves indicate the return associated with each ranked asset that is required to achieve a portfolio's performance level $x$. The higher the portfolio's performance, the higher the performance curve $f_{x}$ that it can reach, and the higher the corresponding level $x$. In practice, levels can be taken as integers, i.e. $x \in \mathrm{\mathbb{N}}$. 
%Figure 1 shows two families of performance curves: square type functions corresponding to the $h$-index and power law type functions corresponding to a specific SRM that will be used in our empirical application. Moving from the origin $(0.0)$ of the graph in a northeasterly direction, the research performance level increases, that is, each performance curve $f_q$ is associated to a higher and higher level $q$. 

Different choices of $\{f_x\}_{x>0}$ provide different performance/ranking measures. 
%The following properties are assumed on $\{f_x\}_{x>0}$.
%\begin{assumption}\label{ass:fx}
%    Let $f_x: \mathbb{N} \to \mathbb{R}_+$ for any %$x>0$. 
%    \begin{itemize}
%        \item $f_x$ is increasing in $x$;
%        \item $f_x$ is left continuous on $x$
%    \end{itemize}
%\end{assumption}
For instance, the $h$-index ranking metric is characterized by the performance curve $f_x(p)=x \mathbf{1}_{(0, x]}(p)$, where $\mathbf{1}_{(0, x]}(p)$ takes value 1 when $p \in(0, x]$ and zero otherwise; another trivial example is the max portfolio return that is characterized by $f_x(p)=x \mathbf{1}_{(0,1]}(p)$. Table \ref{tab:research_measures_list} provides further examples of ranking metrics generated by scientific research measures. For a clearer and more insightful interpretation of these new performance measures in the context of finance and insurance, refer to the empirical application presented in Section~\ref{sec: emp_app}.

\begin{table}[phtb]
%\begin{sidewaystable}[phtb]
  \centering
    %\begin{tabular}{l l r r r r}
    \begin{tabular}{c p{5.5cm}p{4cm}p{4cm}}
\multicolumn{1}{l}{Index} &
\multicolumn{1}{c}{Description} &
\multicolumn{1}{l}{Author(s)} &
\multicolumn{1}{l}{$f_x(p)$} \\
\hline
$h$     & A fund has $h$-index ranking metric $h$ if $h$ of its portfolios (or contracts) have at least $h$ percentage return points       &   \cite{hirsch2005index}
& $x \mathbf{1}_{(0,\, x]}(p)$ \\
\\
$h^2$   & A fund has index $h^2$ if $h$ of its portfolios (or contracts) have at least $h^2$ percentage return points   &   \cite{Kosmulski_ISSI2006}     
& $x^2 \mathbf{1}_{(0,\, x]}(p)$ \\
\\
$h_\alpha$ &   A fund has index $h_\alpha$ if $h$ of its portfolios (or contracts) have at least $\alpha h$ percentage return points       &   \cite{Eck_Waltman_JI2008}     
& $\alpha x \mathbf{1}_{(0,\, x]}(p), \alpha > 0$ \\
\\
$w$ &   A fund has index $w$ if $w$ of its portfolios (or contracts) have at least $w, w-1, \ldots, 1$ percentage return points       &   \cite{Woeginger_MSS2008}     
& $(-p+x+1) \mathbf{1}_{(0, \, x]}(p)$ \\
\end{tabular}
  \caption{Bibliometric indices ranking metrics. The table lists popular bibliometric indices, a short description of each index interpreted in the finance context, and the author(s) who introduced the index.
  The $h$-, $h^2$-, $h_\alpha$-, and $w$-index are scientific research measures ranking metrics as defined in~(\ref{eq:SRM}). The last column provides the corresponding performance curves, $f_x(p)$.   }\label{tab:research_measures_list}
\end{table}

By definition, $r_{SRM}$ is monotone increasing and quasiconcave, and this implies $r_{SRM}$ is a ranking metric.

%\textbf{MR: I suggest joining Lemmas 7 and 8 into a single proposition.}
\begin{lemma}
The map $r_{SRM}$ defined in~(\ref{eq:SRM}) is monotone increasing and quasiconcave.
\end{lemma}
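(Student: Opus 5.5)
Both properties follow directly from the sup-representation \eqref{eq:SRM}: $r_{SRM}(X)=\sup S(X)$, where
\[
S(X):=\{x>0\colon X(p)\geq f_x(p)\ \text{for all } p\in\mathbb{R}_+\ \text{a.e.}\}.
\]
The plan is to argue entirely at the level of these sets, in the same way as the implications $2.\Rightarrow 1.$ and $3.\Rightarrow 1.$ of Theorem~\ref{thm:main}. Equivalently, we can set $\mathcal{A}_x:=\{X\colon X\geq f_x \text{ a.e.}\}$, check that each $\mathcal{A}_x$ is monotone and convex, and invoke Theorem~\ref{thm:main}. I prefer the direct route because it gives full quasiconcavity and not only cash-quasiconcavity.

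\emph{Monotonicity.} Let $X\leq Y$ a.e. and $x\in S(X)$. Then $Y(p)\geq X(p)\geq f_x(p)$ for a.e.\ $p$, so $x\in S(Y)$. Hence $S(X)\subseteq S(Y)$, and taking suprema (with the convention $\sup\emptyset=0$) gives $r_{SRM}(X)\leq r_{SRM}(Y)$.

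\emph{Quasiconcavity.} Fix $X,Y\in L^\infty(\mathbb{R}_+)$ and $\lambda\in[0,1]$, and put $\bar x:=\min\{r_{SRM}(X),r_{SRM}(Y)\}$. If $\bar x=0$ there is nothing to prove. Otherwise, take any $x<\bar x$. We need $X\geq f_x$ and $Y\geq f_x$ a.e., and this is the one delicate step: $x<r_{SRM}(X)$ only gives some $x'\in S(X)$ with $x'>x$, not $x$ itself. To pass down to $x$ I would use that the family $\{f_x\}$ is increasing in the level $x$. This is the natural reading of the assumption in Definition~\ref{def:SRM}, and all the curves in Table~\ref{tab:research_measures_list} satisfy it. Then $X\geq f_{x'}\geq f_x$ a.e., so $S(X)$ is a down-set in $(0,\infty)$, and likewise $S(Y)$. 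Given $X\geq f_x$ and $Y\geq f_x$ a.e., pointwise convexity of the inequality yields
\[
\lambda X+(1-\lambda)Y\geq \lambda f_x+(1-\lambda)f_x=f_x \quad\text{a.e.},
\]
so $x\in S(\lambda X+(1-\lambda)Y)$. Since this holds for every $x<\bar x$, taking the supremum gives $r_{SRM}(\lambda X+(1-\lambda)Y)\geq\bar x$. This argument also covers $\bar x=+\infty$.

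I expect no real obstacle beyond making the monotonicity-in-$x$ of $f_x$ explicit, which is needed for the down-set property. Left continuity in $x$ is not needed here; it would only matter for showing that the supremum is attained. Finally, by the Remark following Definition~\ref{def:RM}, quasiconcavity implies cash-quasiconcavity, so $r_{SRM}$ is a ranking metric.
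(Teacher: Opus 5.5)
Your proof is correct and follows essentially the same route as the paper: set inclusion for monotonicity, and for quasiconcavity you use that $f_x$ is increasing in the level $x$ to pass from a level near the supremum down to a common level, then take the pointwise convex combination. The only difference is cosmetic. The paper runs an $\epsilon$-argument at level $x-\epsilon$ and invokes left continuity of $f_x$ at the end, whereas you work with every $x<\bar x$ directly, and you are right that left continuity is not needed for this conclusion.
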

\begin{proof}
First, we prove monotonicity. Let $X_1,X_2 \in  L^{\infty}(\mathbb{R}_+)$.  For $X_2 \geq X_1 \text{ for all}\ p\in \mathbb{N}$ we have $X_2(p)\geq X_1(p)\geq f_{x}(p) \text{ for all}\ p\in \mathbb{R}_+$, hence $$\left\{ x>0 \colon X_1(p)\geq f_{x}(p) \text{ for all}\ p\in \mathbb{R}_+\right\} \subseteq \left\{ x>0 \colon X_2(p)\geq f_{x}(p) \text{ for all}\ p\in \mathbb{R}_+\right\}$$This implies $r_{SRM}(X_2) \geq r_{SRM}(X_1)$.

Now, we show that $r_{SRM}$ is quasiconcave. It is sufficient to show that if $r_{SRM}(X_1) \geq x$ and $r_{SRM}(X_2) \geq x$, then $r_{SRM}(\lambda X_1 + (1-\lambda) X_2 ) \geq x$. So, for any $i=1,2$, we have  $\sup \left\{ x>0 \colon X_i(p)\geq f_{x}(p) \text{ for all}\ p\in \mathbb{R}_+ \right\} \geq x$. This means, for any $\epsilon >0$ there exists $x_i$ such that $x_i \geq r_{SRM}(X_i) - \epsilon \geq x - \epsilon$, and $X_i(p)\geq f_{x_i}(p) \text{ for all } \ p\in \mathbb{R}_+$. Since the performance curves $f_x$ are monotone increasing in $x$ by definition, we have, for any $i=1,2$, that $X_i(p)\geq f_{x_i}(p) \geq f_{x -\epsilon}(p) \ \text{for all}\ p\in \mathbb{R}_+$ and for any $\epsilon >0$. This implies
$\lambda X_1 + (1-\lambda) X_2\geq f_{x -\epsilon}(p) \text{ for all }\ p\in \mathbb{R}_+$ and for any $\epsilon >0$. Hence, using the left continuity of $f_x$ we obtain $r_{SRM}(\lambda X_1 + (1-\lambda) X_2 ) \geq x$ and thus $r_{SRM}$ is quasiconcave. 
\end{proof}

In the next lemma we provide a sufficient condition on $\left\{f_x\right\}_{x>0}$ in order to guarantee cash-subadditivity of $r_{SRM}$ at some $k\in \mathbb{R}_+$. This condition works for the restriction of $r_{SRM}$ on 
the positive elements of $L^\infty\subseteq L^\infty(\mathbb{R}_+)$, that we denote by $L^\infty_+$. 

\begin{lemma} \label{lemma:bib-prop}
Let 
$\{f_x\}_{x>0}$ be a family of performance curves as in Definition \ref{def:SRM}.
 Let $k\in \mathbb{R}_+$. If, for any $x>0$, there exists $\ell_{x,k}>0$ such that 
\begin{equation}\label{eq:case_rsm_sub}
    \begin{cases}
        f_{x+k}(p) - k \geq f_x(p), \qquad p \in (0,\ell_{x,k}]\\
    f_{x+k}(p) - k \leq 0=f_x(p), \qquad p > \ell_{x,k},
    \end{cases}
\end{equation} 
then the restriction of $r_{SRM}$ on $L^\infty_+$ is cash-subadditive at $k$.
%\item[iii)] If $f_{x+k} - f_x = k$ for all $k \in \mathbb{R}_{+}$, then $r_{SRM}$ is cash-additive (on ${\rm supp}(r)$).

%\end{itemize}

\end{lemma}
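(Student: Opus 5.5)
Our aim is $r_{SRM}(X+k)\leq r_{SRM}(X)+k$ for every $X\in L^\infty_+$. We work directly with the defining sets in \eqref{eq:SRM}, following the splitting argument of Proposition~\ref{Prop:rho_x}.i). Set $S(X):=\{x>0\colon X(p)\geq f_x(p)\ \forall p\in\mathbb{R}_+\}$, so that $r_{SRM}(X)=\sup S(X)$. The candidate levels for $X+k$ are $y\in S(X+k)$. We split them into $y\leq k$ and $y>k$. The first group is harmless, since $y\leq k\leq r_{SRM}(X)+k$ because $r_{SRM}\geq 0$. It therefore suffices to show that every $y=x+k\in S(X+k)$ with $x>0$ satisfies $x\in S(X)$. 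Taking suprema then gives $\sup\{y>k\colon y\in S(X+k)\}\leq \sup S(X)+k$, and combining the two groups yields the claim.

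The core step is the implication $x+k\in S(X+k)\Rightarrow x\in S(X)$. We prove it by using the two-region condition \eqref{eq:case_rsm_sub} with $\ell=\ell_{x,k}$.

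On $p\in(0,\ell]$ we have $X(p)+k\geq f_{x+k}(p)$. Hence $X(p)\geq f_{x+k}(p)-k\geq f_x(p)$ by the first line of \eqref{eq:case_rsm_sub}.

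On $p>\ell$ the second line gives $f_x(p)=0$. Since $X\in L^\infty_+$ means $X(p)\geq 0$, we get $X(p)\geq 0=f_x(p)$ directly. The hypothesis $X+k\geq f_{x+k}$ is not even needed there. This is exactly where the restriction to $L^\infty_+$ enters.

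At $p=0$, or at any point not covered by the two cases, the inequality holds a.e. as well. All inequalities are understood a.e. in $p$, so measure-zero points can be ignored.

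The main obstacle is seeing why the restriction to positive elements is essential. A naive attempt to derive $X\geq f_x$ from $X+k\geq f_{x+k}$ on the tail $p>\ell$ fails, because there one only knows $f_{x+k}-k\leq 0$, which gives no lower bound on $X$. Nonnegativity of $X$ together with $f_x=0$ on the tail closes this gap. A secondary point is the sup bookkeeping when $S(X)$ is empty. In that case the argument shows that no $y>k$ lies in $S(X+k)$, so $r_{SRM}(X+k)\leq k=r_{SRM}(X)+k$, consistent with the convention $\sup\emptyset=0$. If $r_{SRM}(X)=+\infty$ the inequality is trivial.
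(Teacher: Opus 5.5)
Your proposal is correct and takes essentially the same route as the paper. The paper also splits the supremum defining $r_{SRM}(X+k)$ into levels $x\le k$ and $x>k$ and substitutes $x=y+k$. It then uses the inclusion $\{y>0\colon X\ge f_{y+k}-k\}\subseteq\{y>0\colon X\ge f_y\}$, proved exactly as you do: the first line of \eqref{eq:case_rsm_sub} on $(0,\ell_{y,k}]$, and nonnegativity of $X$ together with $f_y=0$ on the tail.
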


\begin{proof} 
Let $X\in L^\infty_+$. 
We claim that the following inclusion holds:\begin{equation}\label{eq:equalityfx1}\begin{aligned}
     \left\{ y>0: X(p)\geq f_{y+k}(p) -k  \ \forall p>0\right\}\subseteq
     \left\{ y>0: X(p)\geq f_{y}(p) \ \forall p>0  \right\}. 
 \end{aligned}\end{equation}
 In fact, using assumption \eqref{eq:case_rsm_sub}, 
 for any $p \in (0,\ell_{y,k}]$
 we have:
 $$
  X(p) \geq f_{y+k}(p) -k 
  \qquad 
  \Rightarrow
  \qquad 
  X(p) \geq f_y(p).
 $$
 Moreover, since $X\in L^\infty_+$ and $f_y(p)=0$ for any $p> \ell_{y,k}$, the inequality $X(p) \geq 0 = f_y(p)$ holds even in this interval. Thus, we conclude that 
 $X(p) \geq f_y(p)$ for any $p> 0$.
 
 Now we compute $r_{SRM}(X+k)$. We have
\begin{equation*}\begin{aligned}
 r_{SRM}(X+k)&= \sup\left\{x >0: X(p)+k \geq f_x(p), \forall p >0\right\}\\
 &=  \sup\left\{ x >0: X(p) \geq f_x(p)-k\ \forall p >0\right\}\\
  &=  \sup\left\{\left\{x >k: X(p) \geq f_x(p)-k\  \forall p >0\right\} \cup 
  \left\{0<x\leq k: X(p) \geq f_x(p)-k\  \forall p >0\right\} \right\} \\
  &\leq \sup\left\{\left\{x >k: X(p) \geq f_x(p)-k\  \forall p >0\right\} \cup 
 [0,\, k] \right\} \\
 & = \sup\left\{\left\{y+k,y>0: X(p) \geq f_{y+k}(p)-k\  \forall p >0\right\} \cup 
 [0,\, k] \right\} \\
 &= \sup\left\{\left\{y>0: X(p) \geq f_{y+k}(p)-k\  \forall p >0\right\}+k \cup 
 [0,\, k] \right\}\\
 & \leq 
  \sup \left\{y>0: X(p) \geq f_{y+k}(p)-k\  \forall p >0\right\}  +k \\
  &\leq 
  \sup \left\{y>0: X(p) \geq f_{y}(p)\  \forall p >0\right\}  +k =r_{SRM}(X)+k,
 \end{aligned}\end{equation*}
 where in the last inequality we used inclusion \eqref{eq:equalityfx1}.
 This proves cash-subadditivity at $k$.

\end{proof}

% As in the classical bibliometric literature, we define the Hirsch core of $r_{SRM}$ as
% \[
% \mathcal{H}_{r_{SRM}}(X) := \{\, p \in \mathbb{R} \mid f_{r_{SRM}(X)}(p) > 0 \,\}.
% \]
% We denote with $\mathcal{H}_{0} := (0. x] $ the Hirsch core of the $h$-index and with $\mathcal{H}_{k} := \{ (0. x+k] \mid k \in \mathbb{R}_+ \}$ the family of its $k$-shifts.

%\textbf{MR: I have a doubt: if the Hirsch core is defined in terms of $f_r$, then why its shifted and scaled versions do not? Regarding the red part, I think it is OK.}
%, i.e. 
%    $$h(\lambda X) \leq \lambda h(X), \;  \text{ in }  \mathcal{ H}_{h,\lambda}=\{(0. \lambda x] , \; \lambda \in \mathbb{R}_+\}.$$

\begin{lemma}
%and let 
%$
%\mathcal{H}_{h,\lambda} := \{ (0. \lambda x] \mid \lambda %\in \mathbb{R}_+ \}
%$
%denote its $\lambda$-rescalings. 
The following property holds:
\begin{itemize}
\item[i)] The $h$-index is cash-subadditive on $L^\infty_+$.
\item[ii)] The $h_\alpha$-index is cash-subadditive on $L^\infty_+$ for $\alpha > 1$.
\item[iii)] The $h^2$-index is cash-subadditive at any $k\geq 1$ on $L^\infty_+$.
\item[iv)] The $w$-index is cash-subadditive at any $k\in \mathbb{R}_+$.
% \item[ii)] The $h$-, $h^2$- and $h_{\alpha}$-indices are positive-sub-homogeneous under any $\lambda$-rescaling of the Hirsch core, i.e. 
% $$h(\lambda X) \leq \lambda h(X), \;  \text{ in }  \mathcal{ H}_{h,\lambda}=\{(0. \lambda x] , \; \lambda \in \mathbb{R}_+\}.$$
%and positive-homogeneous everywhere. 
%The $w$-index is neither positive super-homogeneous nor positive sub-homogeneous \blue{[check: as it is pos-super-hom only if $\lambda \in (0.1)$, instead if $\lambda > 1$  is pos-sub-hom]} 
\end{itemize}
\end{lemma}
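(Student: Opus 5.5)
The plan is to deduce all four items from the mechanism behind Lemma~\ref{lemma:bib-prop}. Each performance curve in Table~\ref{tab:research_measures_list} has the form $f_x(p)=g(x,p)\mathbf{1}_{(0,x]}(p)$, so it vanishes for $p>x$. In each case I will take $\ell_{x,k}:=x$.

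Before doing so, I will revisit the proof of Lemma~\ref{lemma:bib-prop}, because the second line of \eqref{eq:case_rsm_sub} fails for the $h$-index: for $p\in(x,x+k]$ one has $f_{x+k}(p)-k=x>0$. The proof of the inclusion \eqref{eq:equalityfx1} never uses that second line. It only uses two facts:
\begin{itemize}
\item the first inequality $f_{y+k}(p)-k\geq f_y(p)$ on $(0,\ell_{y,k}]$;
\item $f_y(p)=0$ for $p>\ell_{y,k}$, together with $X\geq 0$.
\end{itemize}
So I will state and use this weaker sufficient condition: for every $x>0$ there is $\ell_{x,k}$ with $f_{x+k}-k\geq f_x$ on $(0,\ell_{x,k}]$ and $f_x=0$ beyond $\ell_{x,k}$. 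The chain of inequalities computing $r_{SRM}(X+k)$ then goes through verbatim for $X\in L^\infty_+$.

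With $\ell_{x,k}=x$, the vanishing of $f_x$ for $p>x$ holds for all four curves, so only one pointwise inequality on $(0,x]$ needs checking in each case.
\begin{enumerate}
\item[(i)] For the $h$-index, $f_{x+k}(p)-k=x+k-k=x=f_x(p)$ on $(0,x]$, for every $k\geq0$.
\item[(ii)] For the $h_\alpha$-index, $\alpha(x+k)-k\geq \alpha x$ reduces to $(\alpha-1)k\geq 0$. This holds for $\alpha\geq1$, and in particular for $\alpha>1$.
\item[(iii)] For the $h^2$-index, $(x+k)^2-k\geq x^2$ reduces to $2xk+k(k-1)\geq0$. This holds for all $x>0$ as soon as $k\geq1$, which is exactly the stated restriction.
\item[(iv)] For the $w$-index, on $(0,x]\subseteq(0,x+k]$ we get $f_{x+k}(p)-k=(-p+x+k+1)-k=-p+x+1=f_x(p)$, with equality for every $k\in\mathbb{R}_+$.
\end{enumerate}

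The main obstacle is the $w$-index, and more generally the role of $X\geq0$. In the region $p\in(x,x+k]$, the condition $X+k\geq f_{x+k}$ only yields $X(p)\geq -p+x+1$, which can be negative. Beyond $x+k$ it only yields $X\geq -k$. So nonnegativity of $X$ is genuinely needed to recover $X\geq f_x=0$ there. I will therefore prove (iv) on $L^\infty_+$, as for the other items, which is the natural setting of the lemma.

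As a fallback, I would give the direct one-line argument for each case. If $X+k$ reaches level $y>k$, then $X\geq f_y-k$ on $(0,y]$. Comparing with $f_{y-k}$ on $(0,y-k]\subseteq(0,y]$, together with $X\geq0$ elsewhere, shows that $X$ reaches level $y-k$. Taking suprema and using the split at $k$ from the proof of Lemma~\ref{lemma:bib-prop} gives $r_{SRM}(X+k)\leq r_{SRM}(X)+k$.
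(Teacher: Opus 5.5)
Your argument is correct and is essentially the paper's. Both proofs reduce every item to the pointwise sufficient condition of Lemma~\ref{lemma:bib-prop} on the performance curves, valid on $L^\infty_+$; for the $w$-index this restriction is also all that the paper's argument yields.

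The only difference is your choice $\ell_{x,k}=x$, which forces you to weaken the lemma's hypothesis. The paper instead takes $\ell_{x,k}=x+k$ (and $\ell_{x,k}=x+1$ for the $w$-index when $k\geq 1$), and with that choice both lines of \eqref{eq:case_rsm_sub} hold as stated. So your claim that the second line fails for the $h$-index is an artifact of fixing $\ell_{x,k}=x$, and the detour through a modified lemma is unnecessary, though harmless.
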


\begin{proof}
The proof proceeds verifying that assumption \eqref{eq:case_rsm_sub} in Lemma \ref{lemma:bib-prop} holds for the bibliometric indices in i)-iv). 
\begin{itemize}
 \item[i)] Let $X \in L^\infty_+$.
 For any $k\in \mathbb{R}_+$ and any $x>0$ we have
 \begin{equation}\label{eq:fx+k}
     f_{x+k}(p)-k=(x+k) \mathbf{1}_{(0,\, x+k]}(p) - 
 k= x \mathbf{1}_{(0,\, x+k]}(p) - 
 k\mathbf{1}_{(x+k,+\infty)}(p). 
 \end{equation}

 By \eqref{eq:fx+k}, for any $p \in (0,\, x+k]$ we have that
 $f_{x+k}(p)-k\geq f_x(p)$. 
 Moreover, whenever $p>x+k$,
 we have $f_{x+k}(p)-k=-k<0=f_x(p)$, 
 since $f_x\equiv 0$ on $(x,+\infty)$.
 This proves condition \eqref{eq:case_rsm_sub} in Lemma \ref{lemma:bib-prop} for the $h$-index with $\ell_{x,k}=x+k$, and, consequently cash-subadditivity. 
 \item[ii)] 
 Since
 % Since \begin{equation}\label{eq:alpha_f}
 %      f_{x+k}(p)- k \geq f_x(p) \ \ \text{for all} \ p \in (0.x+k]. 
 % \end{equation}
 for the increasing family associated to the $h_{\alpha}$-index we have 
 \begin{equation*}
 \begin{aligned}
      f_{x+k}(p)- k&= \alpha (x+k)\mathbf{1}_{(0,\, x+k]}(p)-
     k\\
     & = 
    ( \alpha x + (\alpha-1)k )\mathbf{1}_{(0,\, x+k]}(p)- k\mathbf{1}_{(x+k,+\infty)}(p),
 \end{aligned}   
 \end{equation*}
 we deduce that condition \eqref{eq:case_rsm_sub}  is satisfied for every $\alpha >1$ with $\ell_{x,k}=x+k$.

 \item[iii)] 
 For the increasing family associated to the $h^2$-index we have\begin{equation}\label{eq:fx+kh2}
 \begin{aligned}
     f_{x+k}(p)-k&=(x+k)^2 \mathbf{1}_{(0,\, x+k]}(p) - 
 k\\
 & = (x^2+2kx+k^2-k) \mathbf{1}_{(0,\, x+k]}(p) - 
 k\mathbf{1}_{(x+k,+\infty)}(p). 
 \end{aligned}
 \end{equation}

 We observe that on $(0,\, x+k]$ the term $f_{x+k}(p) -k$ satisfies
 $$
 f_{x+k}(p) -k = (x^2+2kx+k^2-k)\mathbf{1}_{(0,\, x+k]}(p) \geq 
 x^2\mathbf{1}_{(0,\, x+k]}(p) =f_x(p)
 $$
 if  $k^2-k \geq 0$, i.e., if  $k\geq 1$.
 This proves condition \eqref{eq:case_rsm_sub} with $\ell_{x,k}=x+k$ and, consequently, cash-subadditivity for $h^2$ at any $k\geq 1$.

 \item[iv)] 
 We divide the proof in two parts. First, we consider $k\geq 1$. 
 In this case we can prove that
 \begin{equation*}
      \begin{cases}
          f_{x+k}(p) -k \geq f_{x}(p), \text{ for all\ } p \in (0,\, x+1]\\
          f_{x+k}(p)-k\leq 0 \text{ for all\ } p > x+1.
      \end{cases} 
 \end{equation*}
 In fact, we have
 $$
 f_{x+k}(p) -k= 
 \begin{cases}
     -p+x+1= f_x(p), \qquad p \in (0,\, x+1]\\
     -p+x+1\leq 0=f_x(p), \qquad p \in (x+1,x+k]\\
     -k \leq 0= f_x(p), \qquad p>x+k.
 \end{cases}
 $$
 This proves condition \eqref{eq:case_rsm_sub} with $\ell_{x,k}=x+1$ and, consequently, $w$-index is cash-subadditive at any $k\geq 1$. 
 
 Second, we consider $k \in (0,1)$. By calculations similar to those above, we obtain
 \begin{equation*}
      \begin{cases}
          f_{x+k}(p) -k \geq f_{x}(p), \text{ for all\ } p \in (0,\, x+k]\\
          f_{x+k}(p)-k\leq 0=f_x(p), \text{ for all\ } p > x+k.
      \end{cases} 
 \end{equation*}
 This proves that condition \eqref{eq:case_rsm_sub} 
 remains true even for $k\in (0,1)$ with $\ell_{x,k}=x+k$, and, consequently that the $w$-index is cash-subadditive even for $k\in (0,1)$. In conclusion, $w$-index is cash-subadditive at any $k\in \mathbb{R}_+$. 
\end{itemize}
\end{proof}
% \textbf{MR: this Omega ration example could also be deleted.}

%\subsection{Illustrative simulation examples}

%In this section, we present an illustrative example to demonstrate the computation of ranking metrics derived from scientific research measures. The results are summarized in Table xxx. Consider two funds, each comprising five portfolios. The returns of each portfolio are assumed to follow a t-distribution with specified degrees of freedom. At any time $t$, the value of a portfolio is modeled as the expected value of the portfolio's return over 500 simulations, scaled by its weight. Specifically, we define the portfolio value as $\hat{X}_{p}=\omega_p E[X_p]$, $p=1,\dots ,5$, where $\omega_p$ represents the proportion of the total fund invested in the $p$-th portfolio (shown in Column xxx of Table xxx). Then, we rank the vector of expected portfolio $\hat{X}_{(p)}$ in descending order, denoting the ranked values as $\hat{X}_{(1)},\hat{X}_{(2)}, \dots, \hat{X}_{(5)}$, where we set $\hat{X}_{(p)}=0$ for portfolios with $\hat{X}_{(p)}<0$ (Column xxx of Table xxx). To forecast the $h$-index of the fund at time $t$, we rescale this vector by dividing each value by the minimum expected positive value (shown in Column xxx of Table). This allows to have both the axis of the portfolios returns and the number of portfolios on the same scale. The $h$-index of the fund is given by xxx.

\section{Empirical application}\label{sec: emp_app}
Using real-world financial and climate risk datasets, this section illustrates the practical relevance of the proposed ranking metrics through empirical analyses of investment portfolios and European climate risk insurance markets.

\subsection{Financial portfolio rankings}
This empirical analysis compares the rankings of financial investments obtained using both traditional risk-adjusted performance measures and the novel $\Lambda$-quantiles and bibliometric index-based performance metrics. Using daily price data from the NASDAQ and S\&P 500 indices (January 2023 - December 2023, sourced from Bloomberg), we construct 15 equally weighted portfolios, with 27 assets each. Table~\ref{tab:empirical_quantiles} reports the empirical annual mean, annual standard deviation, and the daily quantiles at levels $0.1, 0.5, 0.9$ of the portfolios' returns.

\begin{table}[h]
\centering
\caption{Empirical statistics of the portfolio returns}
\tiny{
\begin{tabular}{r|rrrrrr}
\toprule
% & annual & annual & &  &  \\
Portfolio & annual $\mu$ & annual $\sigma$& $q_{0.1}$ & $q_{0.5}$ & $q_{0.9}$ \\
\midrule
$P_1$  & -0.223 & 0.230 & -0.016 & 0.0000 & 0.017 \\
$P_2$  & 0.086  & 0.208 & -0.016 & 0.0009 & 0.015 \\
$P_3$  & 0.037  & 0.154 & -0.011 & 0.0003 & 0.012 \\
$P_4$  & -0.025 & 0.174 & -0.013 & 0.0006 & 0.011 \\
$P_5$  & 0.019  & 0.119 & -0.009 & 0.0000 & 0.009 \\
$P_6$  & 0.019  & 0.163 & -0.013 & -0.0001 & 0.012 \\
$P_7$  & 0.214  & 0.114 & -0.007 & 0.0004 & 0.010 \\
$P_8$  & 0.220  & 0.080 & -0.004 & 0.0005 & 0.007 \\
$P_9$  & 0.164  & 0.108 & -0.008 & 0.0005 & 0.009 \\
$P_{10}$ & 0.240  & 0.079 & -0.005 & 0.0011 & 0.008 \\
$P_{11}$ & -0.203 & 0.199 & -0.019 & 0.0003 & 0.013 \\
$P_{12}$ & -0.040 & 0.139 & -0.012 & 0.0002 & 0.009 \\
$P_{13}$ & 0.034  & 0.145 & -0.011 & 0.0010 & 0.010 \\
$P_{14}$ & -0.037 & 0.111 & -0.010 & 0.0006 & 0.008 \\
$P_{15}$ & -0.115 & 0.097 & -0.008 & 0.0000 & 0.006 \\
\bottomrule
\end{tabular}
}
\label{tab:empirical_quantiles}
\end{table}

In this analysis, we evaluate historical performance, estimating expectations and conditional expectations using empirical averages based on the available daily return data.

\paragraph{Traditional and quantile-based ranking metrics} RAROC is computed as the ratio of expected return to the historical Conditional Value-at-Risk at 0.05, $\text{CVaR}_{5\%}$, allowing for a direct comparison with the GLR, which instead is normalized using the expectation of the negative part of the returns. Omega is also computed. 

The $\Lambda$-quantile ranking metric, $r_{\Lambda \text{VaR}}$, is computed using formula \eqref{def:r_LambdaVaR}, this means it will return the greater between zero and the negative of the $\Lambda$VaR, which corresponds to the $\Lambda$-quantile itself, $q_\Lambda$. Low confidence levels are generally avoided, as they would often yield zero, lacking meaningful interpretation in the context of performance measures. The $\Lambda$-quantile is implemented by assuming an increasing two-steps $\Lambda$ function, taking values either equal to $\lambda_\text{min}$ or $\lambda_\text{max}$, $\lambda_\text{min},\lambda_\text{max} \in (0,1)$ with the shift occurring at a threshold value $\bar{x} = q_\tau$, where $q_\tau$ is the quantile at level $\tau$ (here, $\tau=\frac{\lambda_{\max}+\lambda_{\min}}{2}$) estimated taking the historical returns of all portfolios. Hence, the lambda quantile ranking metric is equal to the quantile at lower level $q_{\lambda_\text{min}}$, before $\bar{x}$, or to the quantile at higher level $q_{\lambda_\text{max}}$, above $\bar{x}$. In the simplest case, $\Lambda$ is a constant $\lambda$, this performance measure coincides with the quantile at $\lambda$ confidence level. At high-level quantiles, this measure highlights portfolio performances in the upper tail, meaning that portfolios are ranked based primarily on their best outcomes. At a confidence level of 50\%, the measure coincides with the median (if positive), ranking portfolios according to their central tendency.  When considering two confidence levels, as in this analysis, a similar logic applies. For assessing upper-tail performances, the portfolio manager may select high confidence levels (e.g., 85\%–95\%), focusing on returns in that range. In this context, the threshold $\bar{x}$ is set as the 90th percentile of all returns. Conversely, to assess performance slightly above the median, thus avoiding negative values truncated to zero, the manager may select confidence levels just over 50\% (e.g., 55\%–65\%), centered at $\tau = 60\%$. This enables ranking portfolios based on their typical performance.
 
Figure \ref{fig:rankingRAPM} displays the rankings of the 15 portfolios obtained by ordering their metric values in descending order for RAROC, Omega, and the $\Lambda$‑quantiles\footnote{GLR is omitted as it yields rankings nearly identical to RAROC.}. These rankings offer a comparative perspective on how the selection of different performance measures influences portfolio classification. As expected, RAROC and GLR provide almost the same ranking as both are structured in a similar way (expected return divided by a measure of downside risk). Specifically, portfolios $P_8$, $P_{10}$, and $P_7$ consistently appear among the top-ranked. Omega provides additional insights comparing only positive returns, which results in a ranking similar to RAROC and GLR for top-performing portfolios, while also shedding light on positive performances of portfolios with a negative expected returns.
Rankings based on $\Lambda$-quantile performance vary depending on the chosen $\Lambda$ levels. Comparing the rankings derived from the 55\%–65\% $\Lambda$-quantiles with those based on traditional performance measures is particularly insightful, as the former range focuses on performance near the median. \deleted{, offering a more robust and stable classification.} In this case, alongside P10. portfolios P2 and P4 also emerge as strong performers. In contrast, focusing on the right tail, specifically the 85\%–95\% $\Lambda$-quantile, allows us to compare portfolios based on their potential to generate large gains. Under this criterion, the ranking favors portfolios such as $P_2$, $P_1$, and $P_{11}$, highlighting their strong performance in the upper tail of the return distribution.

\begin{figure}[h]
    \centering   
    \includegraphics[width=0.48\textwidth, height=4.5cm]{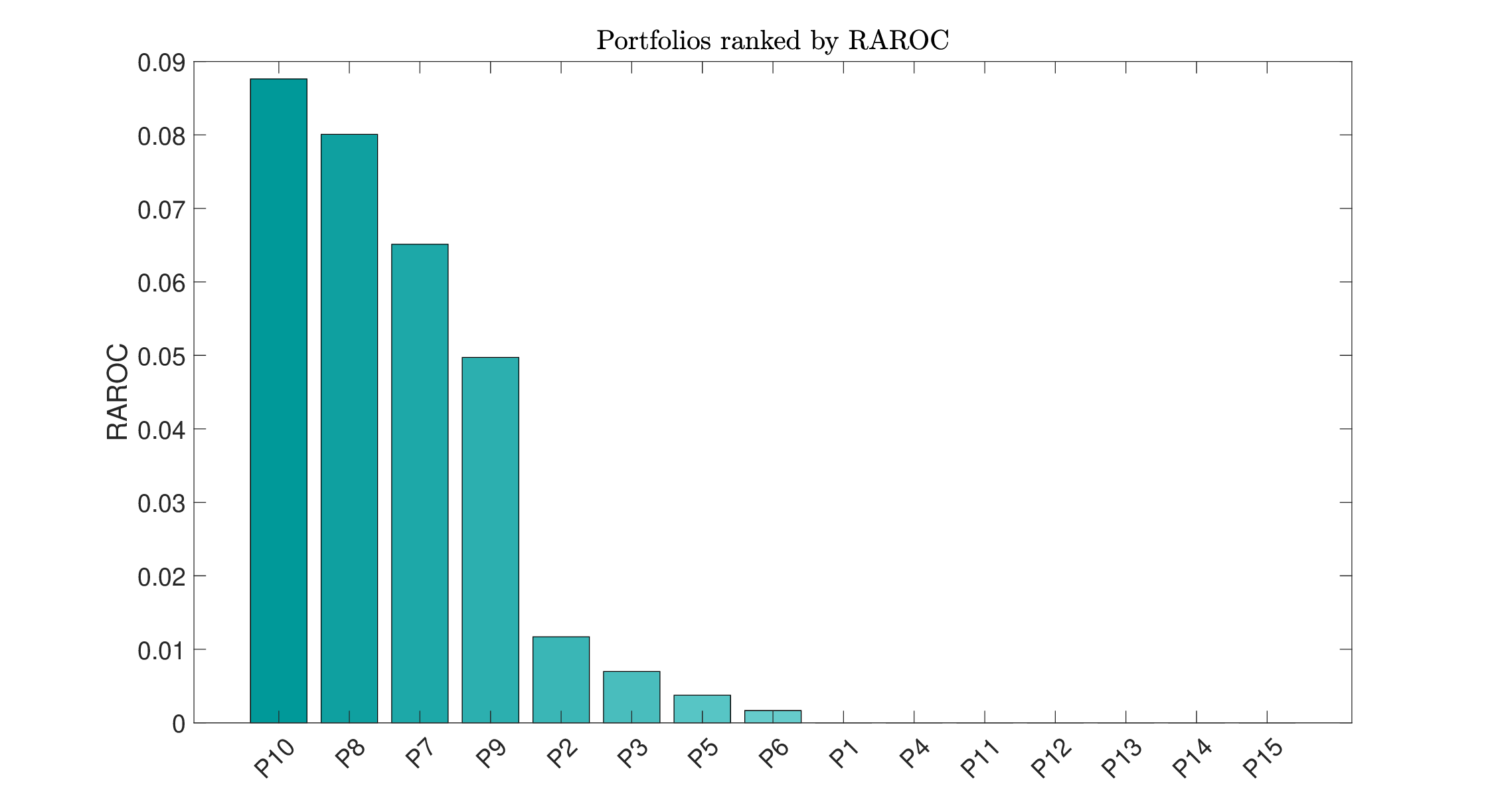}    \includegraphics[width=0.48\textwidth, height=4.5cm]{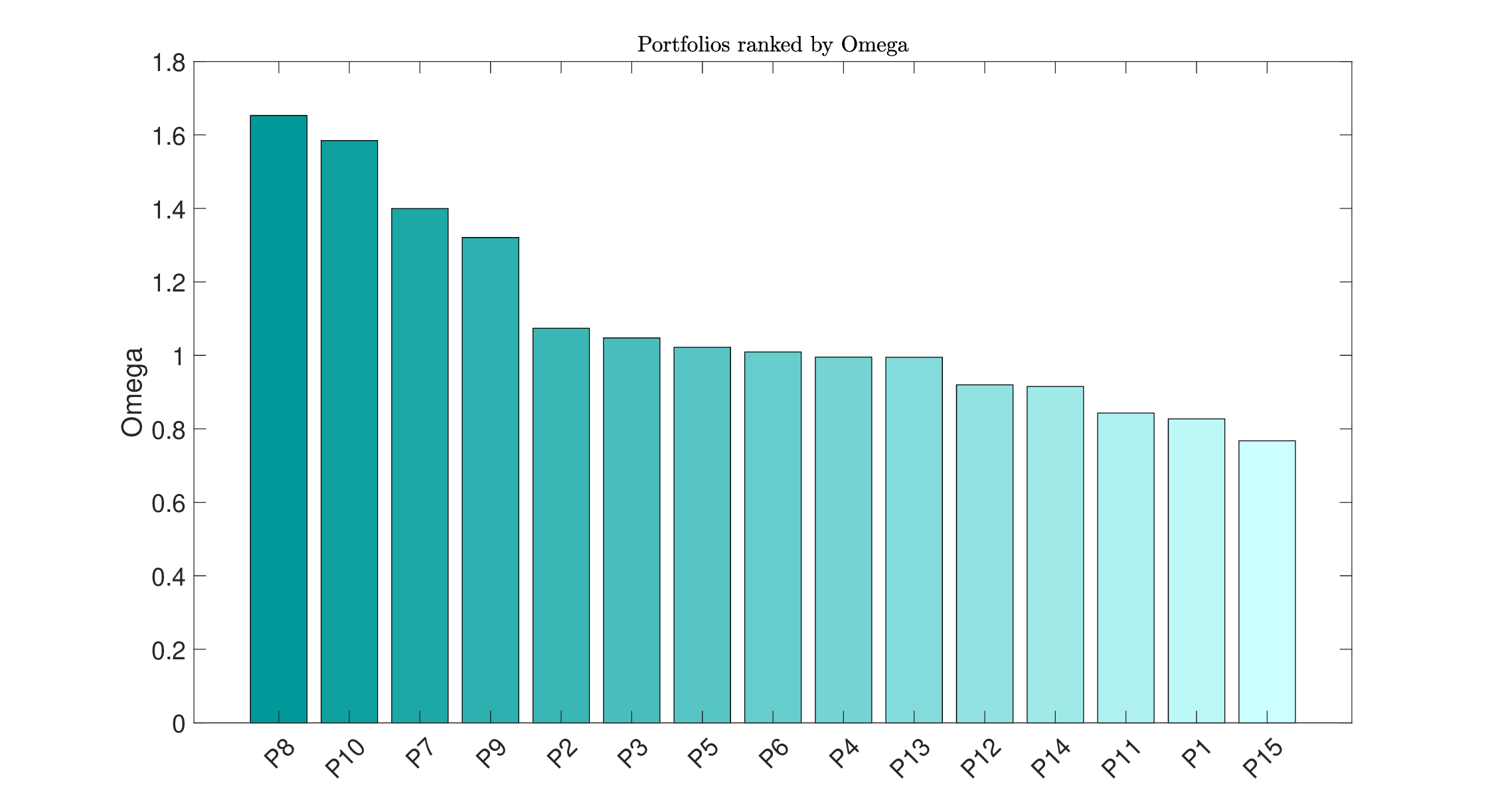}\\
    \includegraphics[width=0.48\textwidth, height=4.5cm]{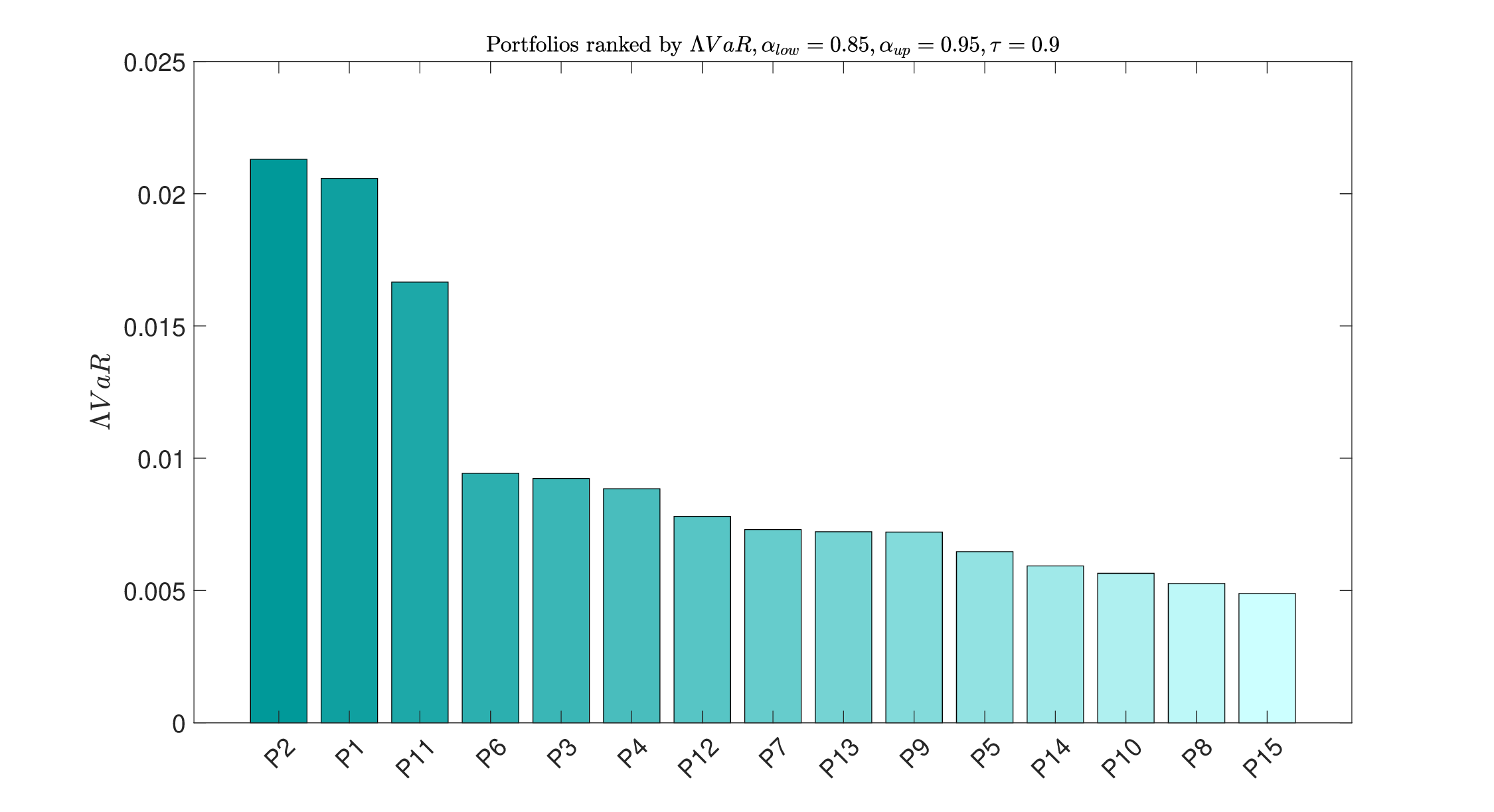}
  \includegraphics[width=0.48\textwidth, height=4.5cm]{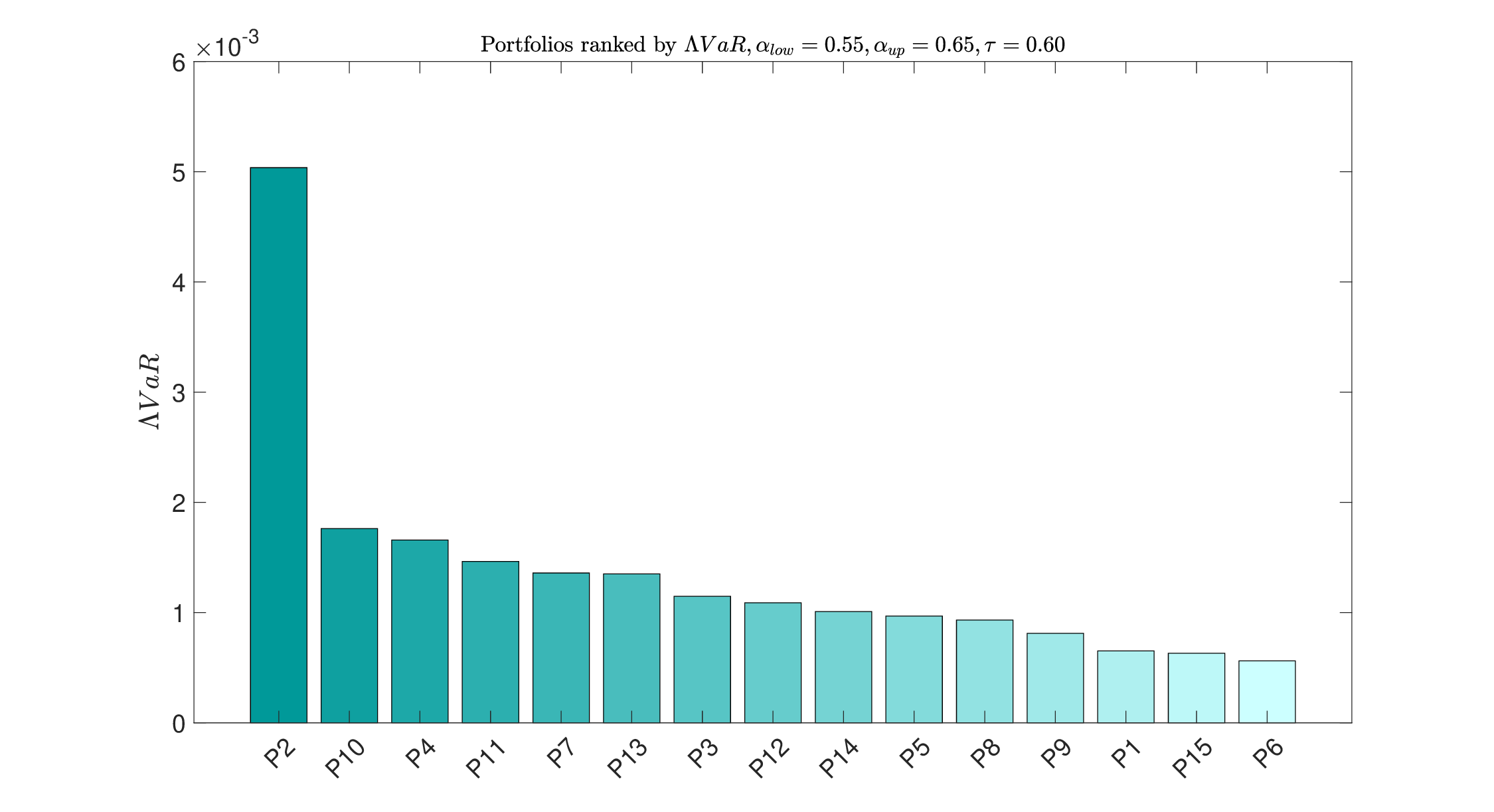}
    \caption{Portfolios' performance rankings using RAROC, Omega and $\Lambda$-quantiles.}
    \label{fig:rankingRAPM}
\end{figure}

\paragraph{Bibliometric index-based ranking metrics} In addition, we assess portfolio rankings using the bibliometric index-based performance measures introduced in Subsection~\ref{sub_biblio_index}; Figure~\ref{fig:rank_biblio} reports the results for all 15 portfolios. 
\begin{figure}[h]
    \centering   
    \includegraphics[width=0.48\textwidth, height=4.5cm]{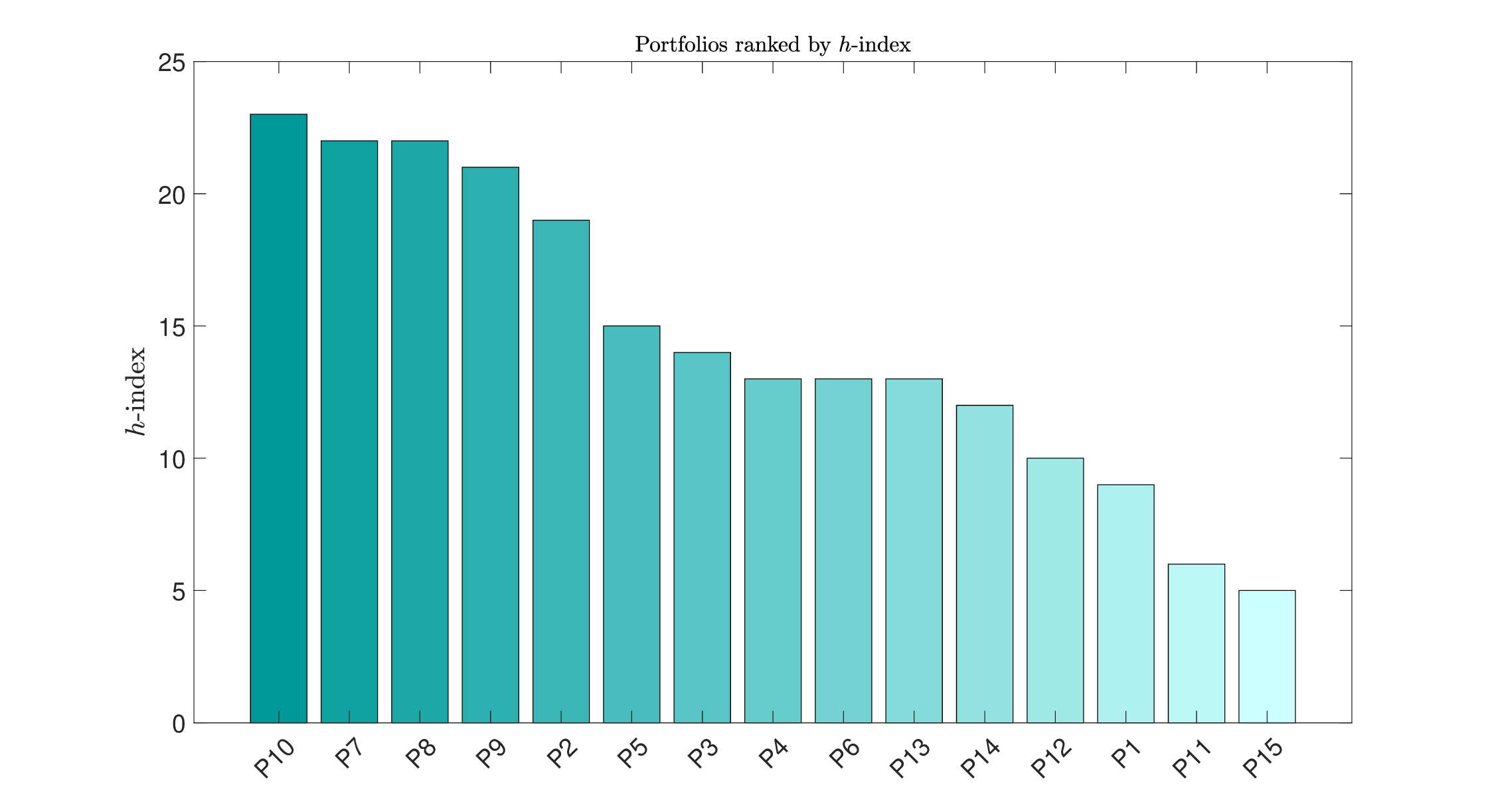}    \includegraphics[width=0.48\textwidth, height=4.5cm]{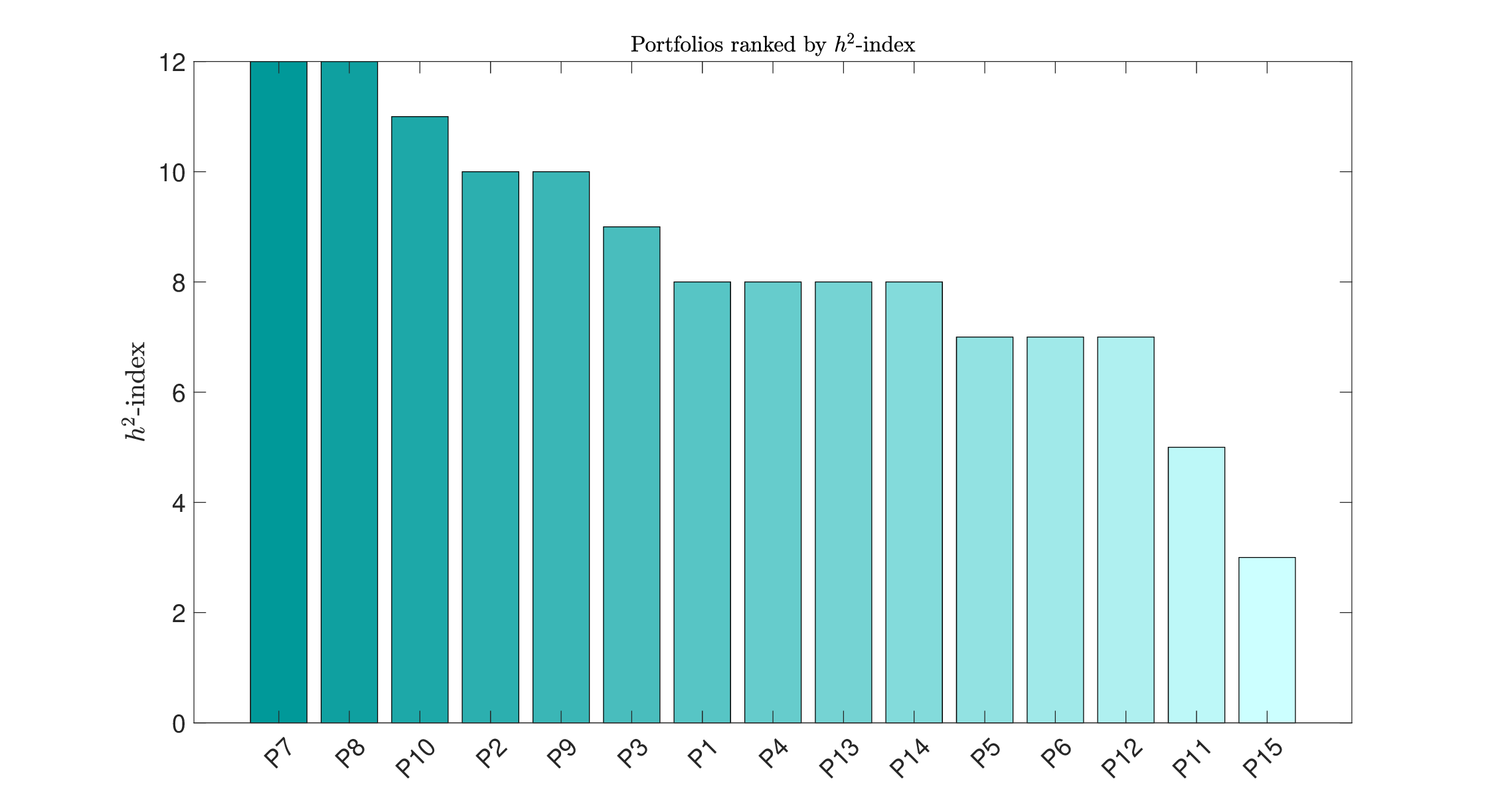}\\
    \includegraphics[width=0.48\textwidth, height=4.5cm]{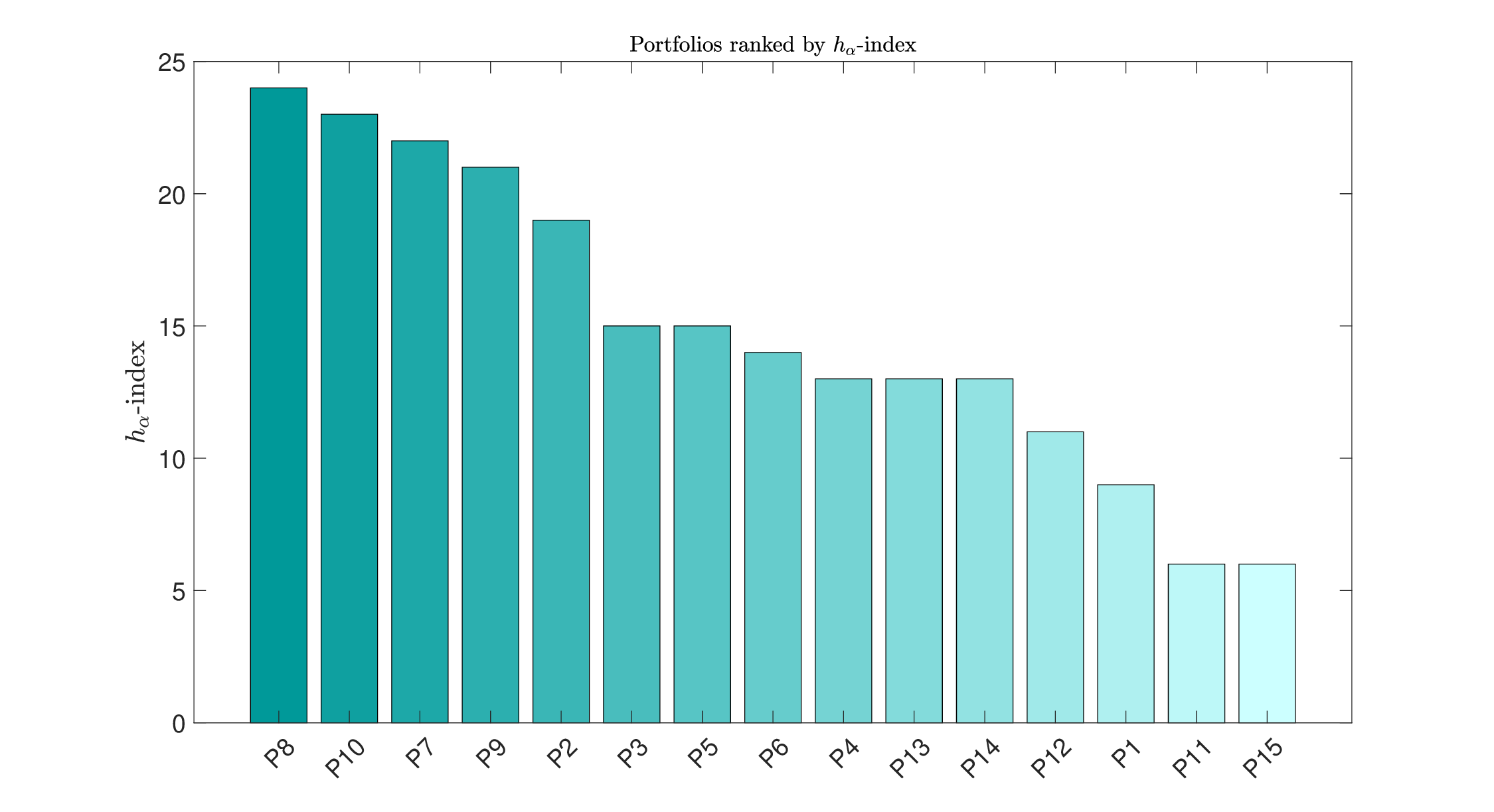}
  \includegraphics[width=0.48\textwidth, height=4.5cm]{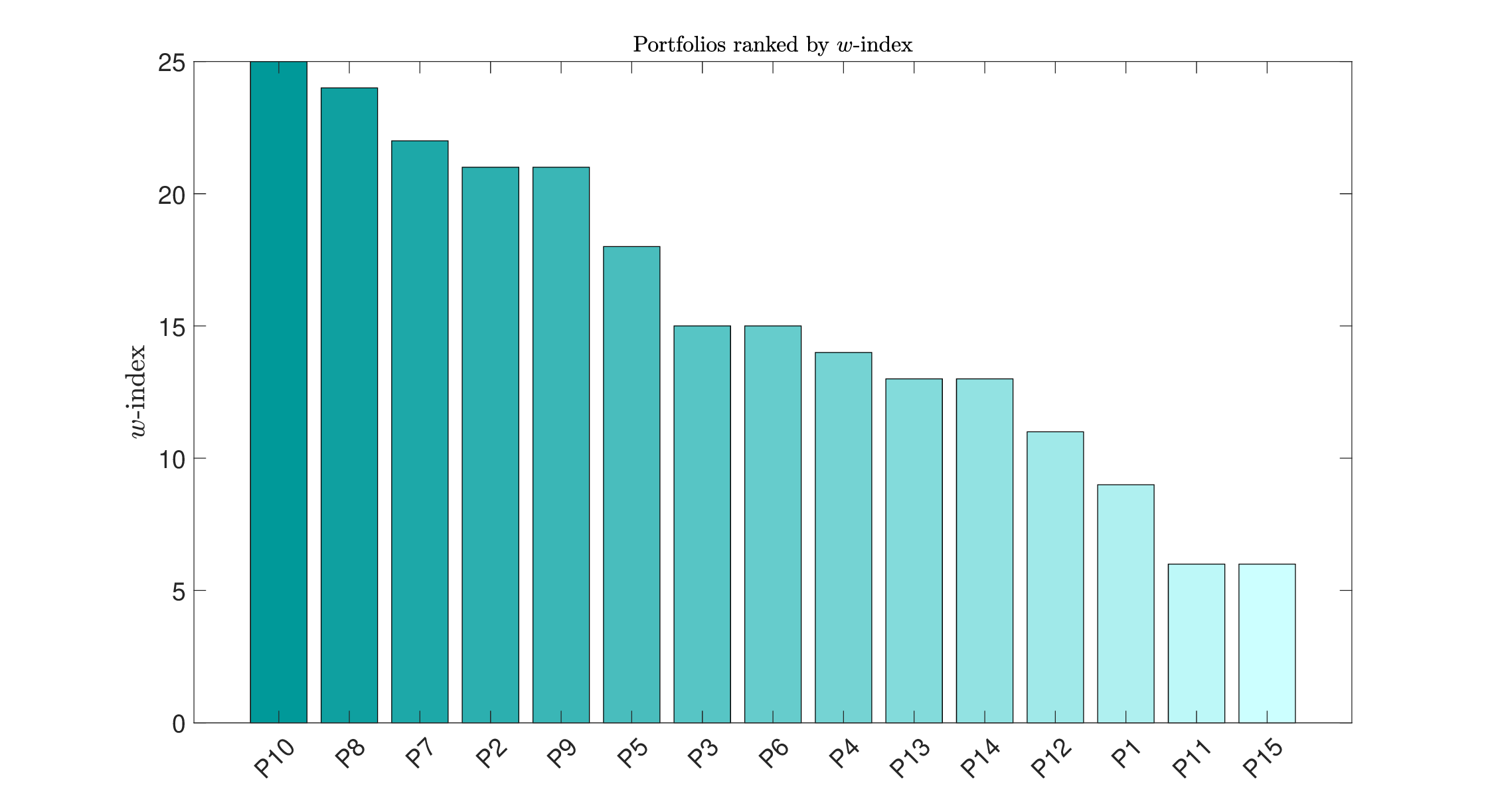}
    \caption{Portfolios' performance rankings using bibliometric indices}
     \label{fig:rank_biblio}
\end{figure}
Specifically, we compute the $h$-index, $h^2$-index, $h_\alpha$-index with $\alpha=0.5$ ($h_{0.5}$) and the $w$-index for each portfolio. Portfolios $P_{10}$, $P_8$, and $P_7$ consistently rank highly across all bibliometric metrics, reflecting a strong alignment with traditional measures such as RAROC and GLR. However, bibliometric rankings provide unique insights into the internal distribution of performances among portfolio’s assets, capturing aspects not fully reflected by conventional performance metrics. For instance, in the case of the $h$-index, a portfolio manager can use this metric to conclude that the portfolio has an $h$-index of $h$, meaning it contains at least $h$ assets whose average returns are each at least $h$ times the benchmark return, and thereby gain insight into the distribution of returns within the portfolio. 

We now provide a detailed description of how the bibliometric indices are constructed. To facilitate comparison with rankings obtained from classical performance measures and facilitate the interpretation, we first construct the $N$-dimensional vector $Y^j = \{Y^j_{p}\}_{p=1,\dots,N}$, which characterises the $j$-th portfolio return structure and serves as the basis for computing each index. For each asset $p$, the expected return $E[Y^j_p]$ is estimated using its \deleted{250-day} historical arithmetic mean, after which the weights $w_p$ (equal in this case, $w_p=\frac{1}{N}$) are applied. Negative expected returns are set to zero, as they provide no informative contribution to the ranking. The weighted asset-level expected returns are then rescaled relative to a benchmark - here, the minimum positive expected return across all portfolios -  and discretised into integer values to facilitate the interpretation. This produces, for each asset $p$:
\begin{equation}\label{eq:Xindices}
X^j_p := \frac{E\left[ w_p Y^j_p \right]^{+}}{\min\limits_{j,p} E\left[ w_p Y^j_p \right]^{+}}.
\end{equation}
The vector $X^j = \{X^j_{(p)}\}_{(p)=1,\dots,N}$ is obtained by sorting the $X^j_p(\cdot)$ in decreasing order. These sorted values are displayed as histogram bars in Figure~\ref{fig:biblio_index_P6}, in the particular case of $X^6$ which illustrates the index construction for portfolio $P_6$. The bars represent the rescaled average returns of the assets, ranked from highest to lowest.

The bibliometric indices are obtained by comparing the histogram bars with families of performance curves characterized by upward-increasing levels. 
Each level represents a performance threshold defined as the minimum average return that each asset must achieve for the portfolio to be assigned that level - expressed in relative terms as the number of times the asset’s return exceeds the benchmark return (here, the minimum return across all assets). These performance curves thus establish a consistent and structured criterion for assessing the distribution of asset returns within the portfolio.

Each index arises from a distinct family of performance curves, imposing different structural conditions on how the portfolio’s return distribution is assessed. For example, the $h$-index uses rectangular curves at height $x$, capturing the smallest integer $h$ such that at least $h$ assets each outperform the benchmark by a factor of $h$ or more. The enlarged view of the top-left subplot in Figure~\ref{fig:biblio_index_P6} clearly illustrates the $h$-index performance curves, specifically highlighting the curves at levels 12, 13, and 14. In particular, for portfolio $P_6$ all bars exceed the level-13 curve but not the level-14 curve, giving an $h$-index of 13. This indicates that at least 13 assets in the portfolio have average returns at least 13 times greater than the benchmark return (taken as the minimum return overall). 
 This corresponds to roughly 85\% of the portfolio constituents.

In contrast, the $h^2$-index (top-right subplot) is constructed comparing the histogram bars with a squared performance curve. $P_{6}$ has an $h^2$-index of 7, indicating that 7 of its assets achieve average returns at least $7^2$ times the benchmark. The $h_\alpha$-index (with $\alpha = 0.5$) equals 14, indicating 14 assets with average returns at least $0.5 \times 14$ times the benchmark. Finally, the $w$-index of 15 means that $P_{6}$ contains 15 assets whose maximum average return performance decreases linearly - starting from 15 times the benchmark for the top-performing asset, 14 times for the next, and so on, down to 1 times the benchmark for the 15th asset.

Additional bibliometric-index-based ranking metrics can be created by modifying the family of performance curves.
A manager can design these curves to specify the required average return of all portfolio constituents - relative to a chosen benchmark - to attain a given performance level.
Exploring such flexible curve specifications offers a promising direction for future research, enabling more tailored assessments of portfolio performance.

\begin{figure}[h!]
    \centering
    \includegraphics[width=0.48\textwidth, height=4.5cm]{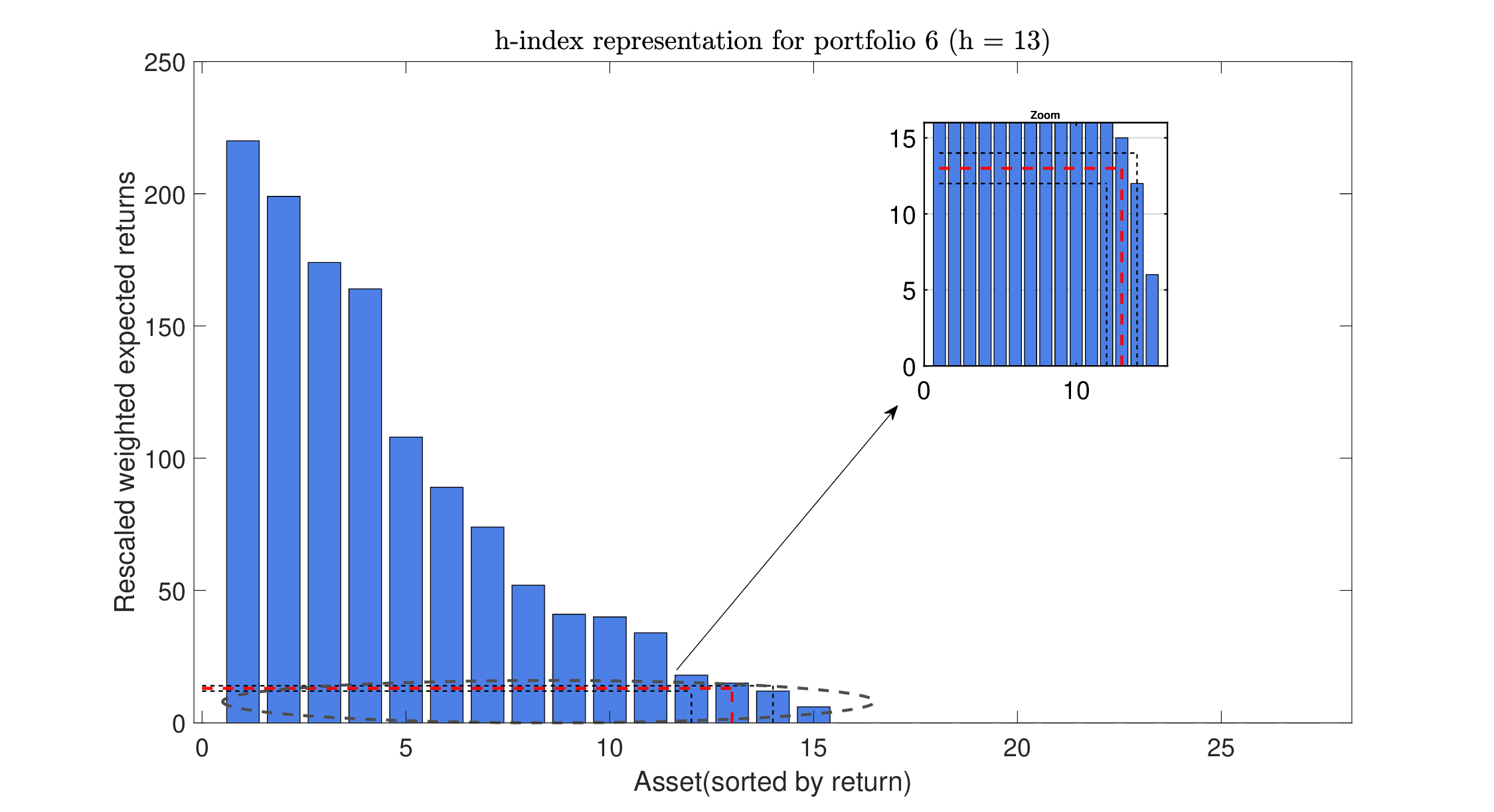}
     \includegraphics[width=0.48\textwidth, height=4.5cm]{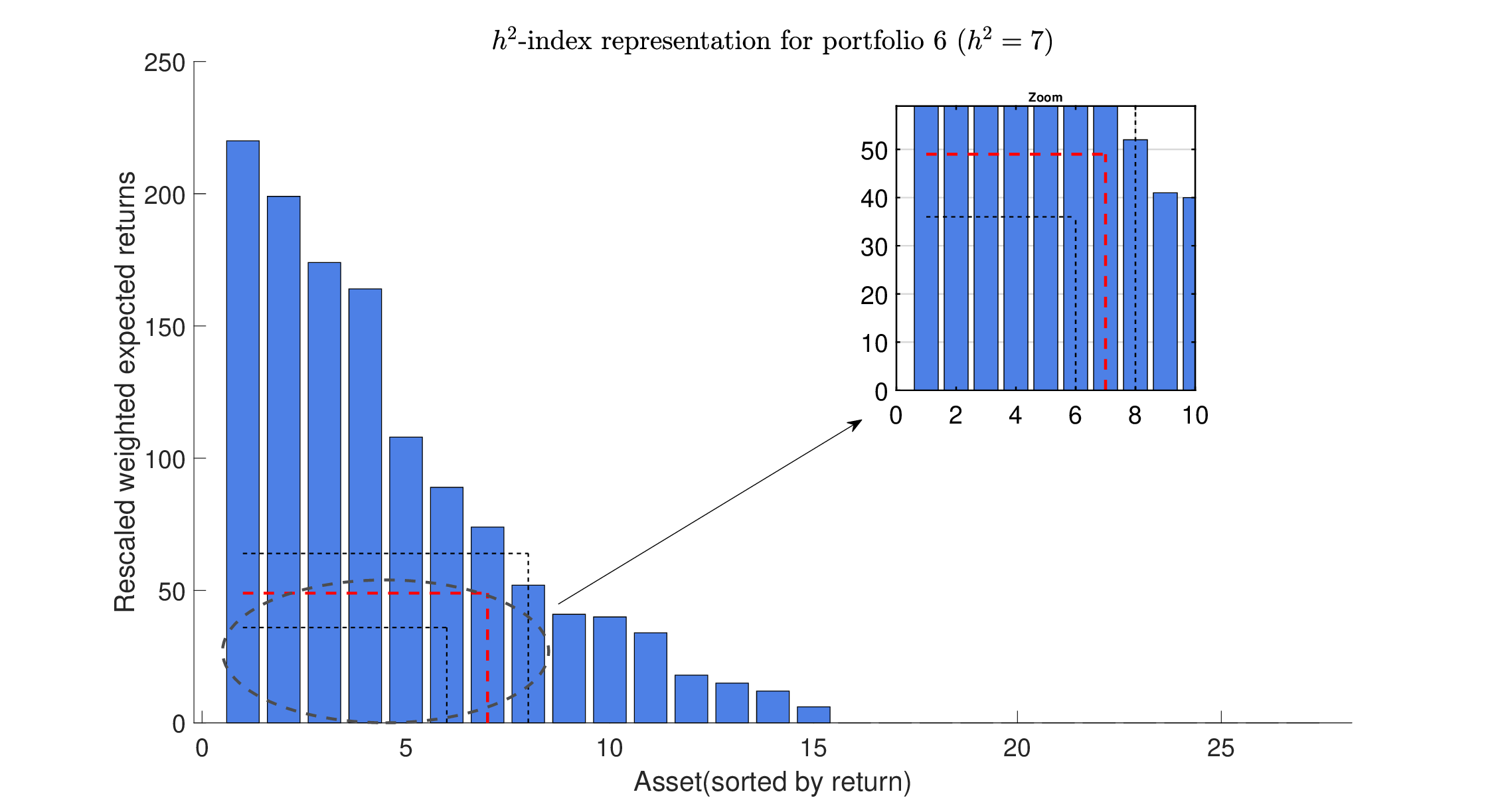}\\       \includegraphics[width=0.48\textwidth, height=4.5cm]{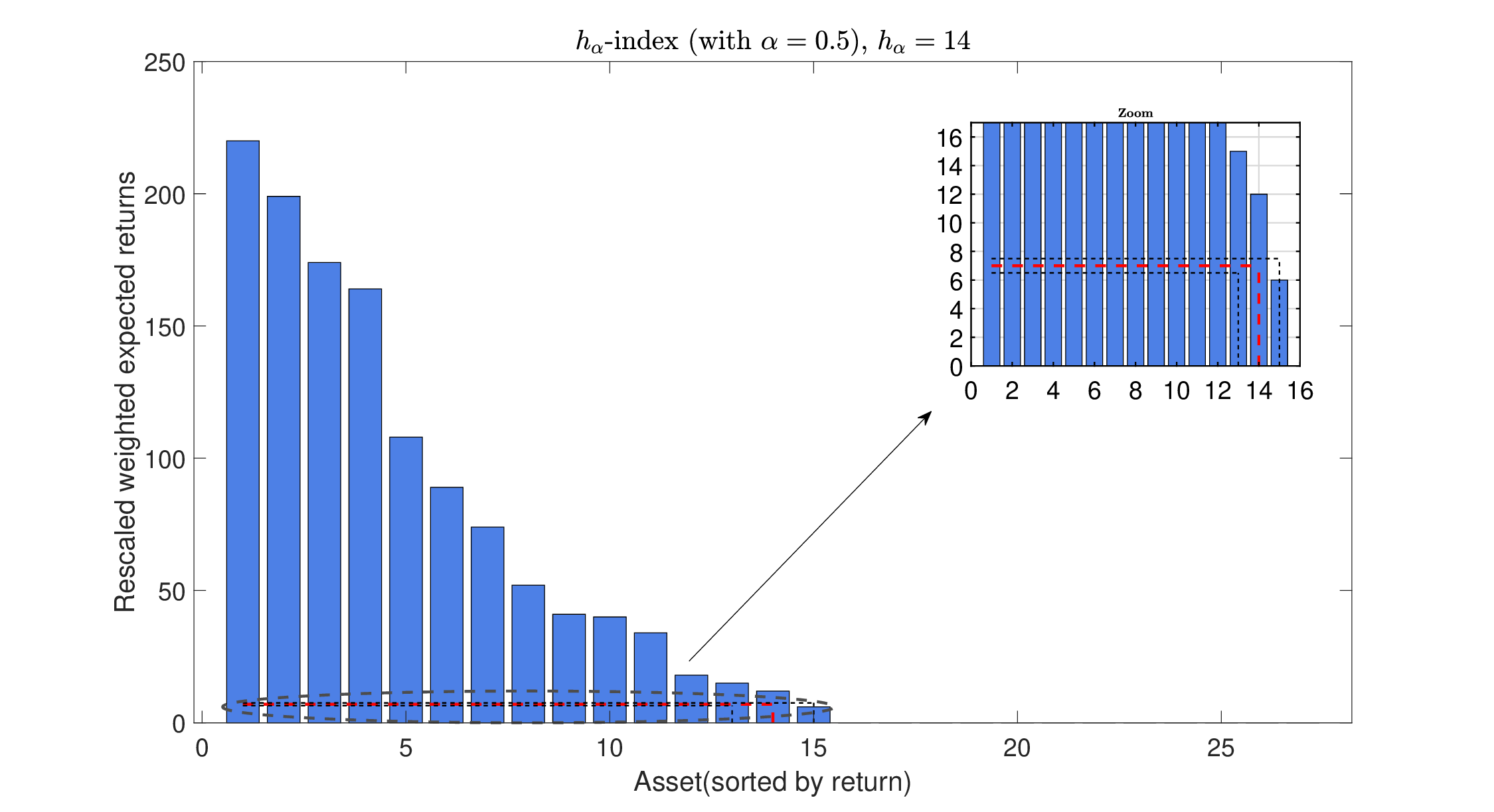}
     \includegraphics[width=0.48\textwidth, height=4.5cm]{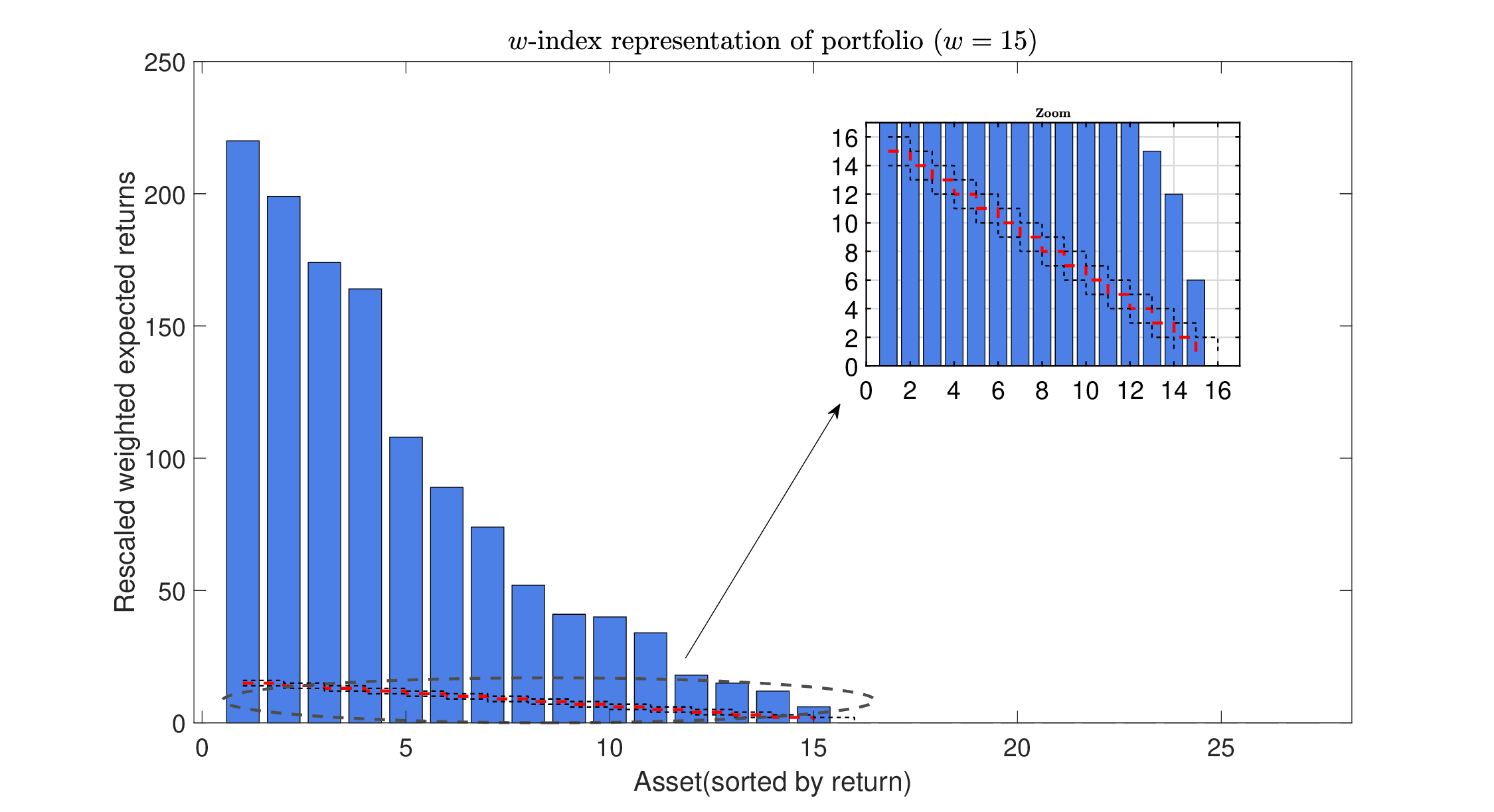}
     \caption{
     Construction of bibliometric indices for portfolio $P_6$. 
In each subfigure, the x-axis lists assets (ordered by expected return) 
and the y-axis shows weighted expected returns rescaled to integer units.}  \label{fig:biblio_index_P6}
\end{figure}

A synthesis of the ranking metric values across all evaluated portfolios is provided in Table \ref{tab:biblioindices}. 

\begin{table}[h!]
\centering
\tiny{
\begin{tabular}{lccccccccc}
\toprule
Portfolio & RAROC & GLR &  $\Omega$ & $(0.85-0.95)\Lambda$VaR &  $(0.55-0.65)\Lambda$VaR & $h$-index & $h^2$-index & $h_{0.5}$-index & w-index \\
\midrule
$P_1$  & 0.0000      & 0.0000     & 0.8271 & 0.0206 & 0.0007 & 9  & 8  & 9  & 9  \\
$P_2$  & 0.0117 & 0.0739 & 1,0739 & 0.0213 & 0.0050 & 19 & 10 & 19 & 21 \\
$P_3$  & 0.0070 & 0.0474 & 1,0474 & 0.0092 & 0.0011 & 14 & 9  & 15 & 15 \\
$P_4$  & 0.0000 & 0.0000 & 0.9954 & 0.0088 & 0.0017 & 13 & 8  & 13 & 14 \\
$P_5$  & 0.0038 & 0.0218 & 1,0218 & 0.0065 & 0.0010 & 15 & 7  & 15 & 18 \\
$P_6$  & 0.0017 & 0.0093 & 1,0093 & 0.0094 & 0.0006 & 13 & 7  & 14 & 15 \\
$P_7$  & 0.0651 & 0.3995 & 1,3995 & 0.0073 & 0.0014 & 22 & 12 & 22 & 22 \\
$P_8$  & 0.0801 & 0.6528 & 1,6528 & 0.0053 & 0.0009 & 22 & 12 & 24 & 24 \\
$P_9$  & 0.0497 & 0.3208 & 1,3208 & 0.0072 & 0.0008 & 21 & 10 & 21 & 21 \\
$P_{10}$ & 0.0876 & 0.5848 & 1,5848 & 0.0057 & 0.0018 & 23 & 11 & 23 & 25 \\
$P_{11}$ & 0.0000      & 0.0000      & 0.8428 & 0.0167 & 0.0015 & 6  & 5  & 6  & 6  \\
$P_{12}$ & 0.0000 & 0.0000      & 0.9198 & 0.0078 & 0.0011 & 10 & 7  & 11 & 11 \\
\color{red}{$P_{13}$} & 0.0000      & 0.0000      & 0.9948 & 0.0072 & 0.0014 & 13 & 8  & 13 & 13 \\
$P_{14}$ & 0.0000      & 0.0000      & 0.9153 & 0.0059 & 0.0010 & 12 & 8  & 13 & 13 \\
$P_{15}$ & 0.0000      & 0.0000      & 0.7676 & 0.0049 & 0.0006 & 5  & 3  & 6  & 6  \\
\bottomrule
\end{tabular}
}
\caption{Summary of rankings metric values across all portfolios.}
\label{tab:biblioindices}
\end{table}

To conclude, bibliometric indices provide a richer, structure-aware perspective compared to conventional performance ratios, revealing whether strong performance is broadly supported by multiple solid contributors or driven by a few dominant assets - an insight essential for assessing portfolio robustness and performance diversification beyond traditional mean-variance allocation.

\subsection{Climate risk resilience rankings for European zones}
This empirical analysis applies the proposed ranking metrics to assess climate risk resilience across European regions. We use annual data from the European Environment Agency on economic losses resulting from weather- and climate-related events, covering 29 European countries over the period 1980–2023 (44 years). Specifically, the dataset records losses $L_{c,t} \in \mathbb{R}_+$ for each country $c$ and year $t$, capturing the financial impact of floods, storms, droughts, and heatwaves.
For regional analysis, countries are grouped into four geographic zones -- Eastern, Northern, Southern, and Western Europe -- as detailed in Table~\ref{tab:regions}.
We compare these regions’ exposure to climate-related damages over time by applying various ranking metrics to their aggregated loss series. 
 \begin{table}[H]
\centering
\tiny{
\begin{tabular}{llll}
\toprule
\textbf{Eastern Europe} & \textbf{Northern Europe} & \textbf{Southern Europe} & \textbf{Western Europe} \\
\midrule
Bulgaria & Denmark     & Greece     & Belgium       \\ 
Czechia  & Estonia     & Spain      & Germany       \\ 
Hungary  & Ireland     & Croatia    & France        \\ 
Poland   & Latvia      & Italy      & Luxembourg    \\ 
Romania  & Lithuania   & Cyprus     & Netherlands   \\ 
Slovakia & Finland     & Malta      & Austria       \\
         & Sweden      & Portugal   & Liechtenstein \\ 
         & Iceland     & Slovenia   & Switzerland   \\ 
         & Norway      & Türkiye    &               \\ 
\bottomrule
\end{tabular}
}
\caption{Country groups by region}
\label{tab:regions}
\end{table}  

Simply using the losses would not allow for a coherent interpretation of the ranking metrics, since lower losses indicate better outcomes. To preserve the interpretability of the ranking metrics, we first need to transform the dataset, as the losses are reported as positive amounts. However, just assigning losses a negative sign would make ranking metrics degenerate. Instead, we mean-center losses by subtracting the overall average across countries and years $\bar{L}$, so that each observation captures its excess loss relative to the European historical mean. We then change the sign to align higher scores with better outcomes, defining for each year $t$
\[
Y_{c,t} \;=\; -\bigl(L_{c,t}-\bar L\bigr),
\qquad
\bar L \;=\; \frac{1}{TN}\sum_{c=1}^{N}\sum_{t=1}^{T} L_{c,t}.
\]
Thus, larger values of \(Y_{c,t}\) indicate better-than-average climate risk resilience, meaning
losses below the European historical mean. We denote by \(Y^z_{c,t}\) the smaller-than-mean losses for the countries in zone $z$ at time $t$.

Aggregate climate risk resilience at the regional level is obtained by summing country-level scores within each geographic zone:
\begin{equation}\label{eq:Xzone}
Y^z_{t} \;=\; \sum_{c \in \mathcal{C}(z)} Y^z_{c,t},
\end{equation}
where \(\mathcal{C}(z)\) denotes the set of countries in zone \(z\).
\paragraph{Traditional and quantile-based ranking metrics} 
Rankings of climate-risk resilience across the four European zones, using RAROC, GLR, Omega, and $\Lambda$-quantiles, are reported in
Figure~\ref{fig:climate_certainty_RAPM}. The metrics are computed as in the previous section, but applied to each zone’s smaller-than-mean loss series \(Y^z_{t}\). 
\begin{figure}[H]
    \centering
    \includegraphics[width=0.48\textwidth, height=4.5cm]{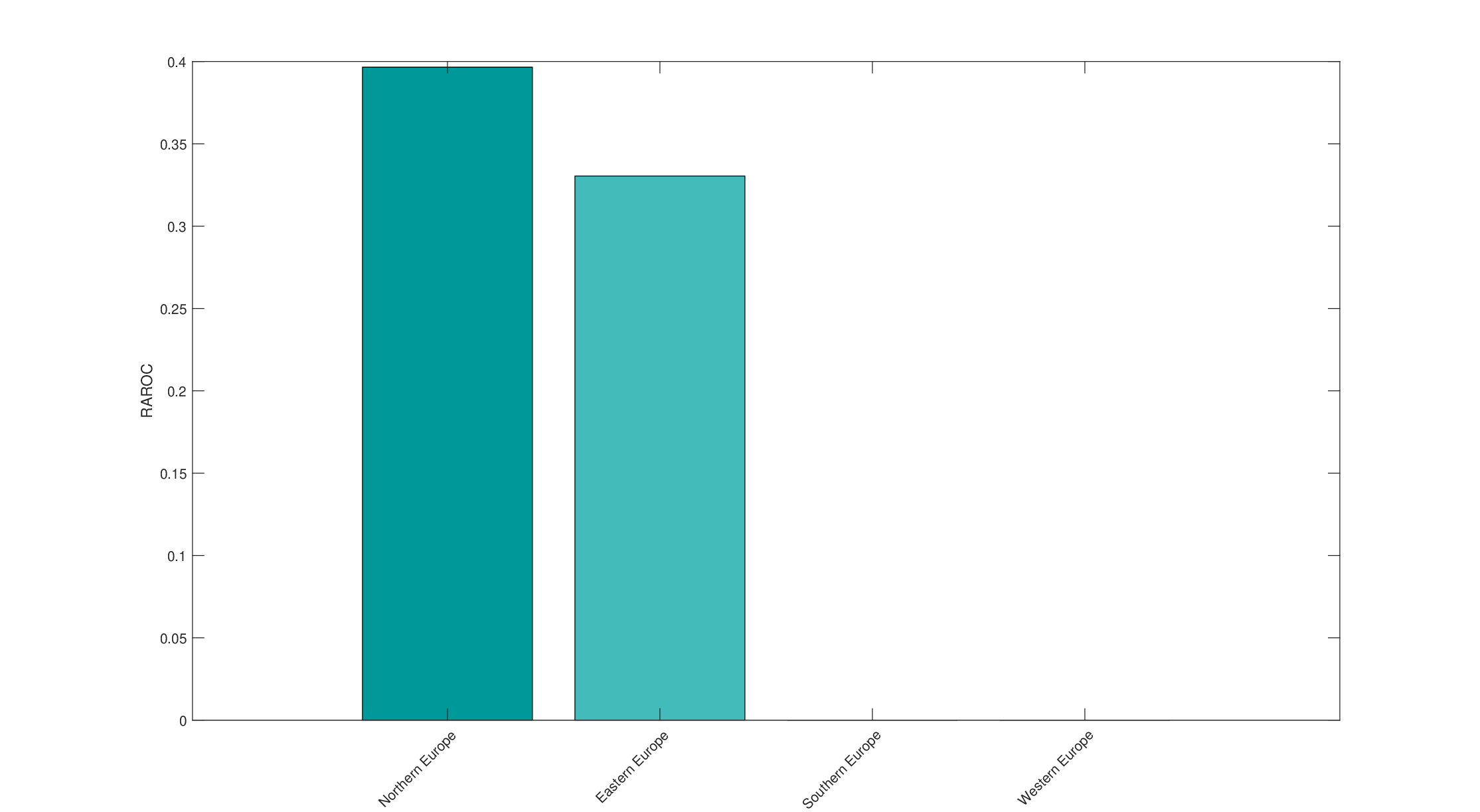}
     \includegraphics[width=0.48\textwidth, height=4.5cm]{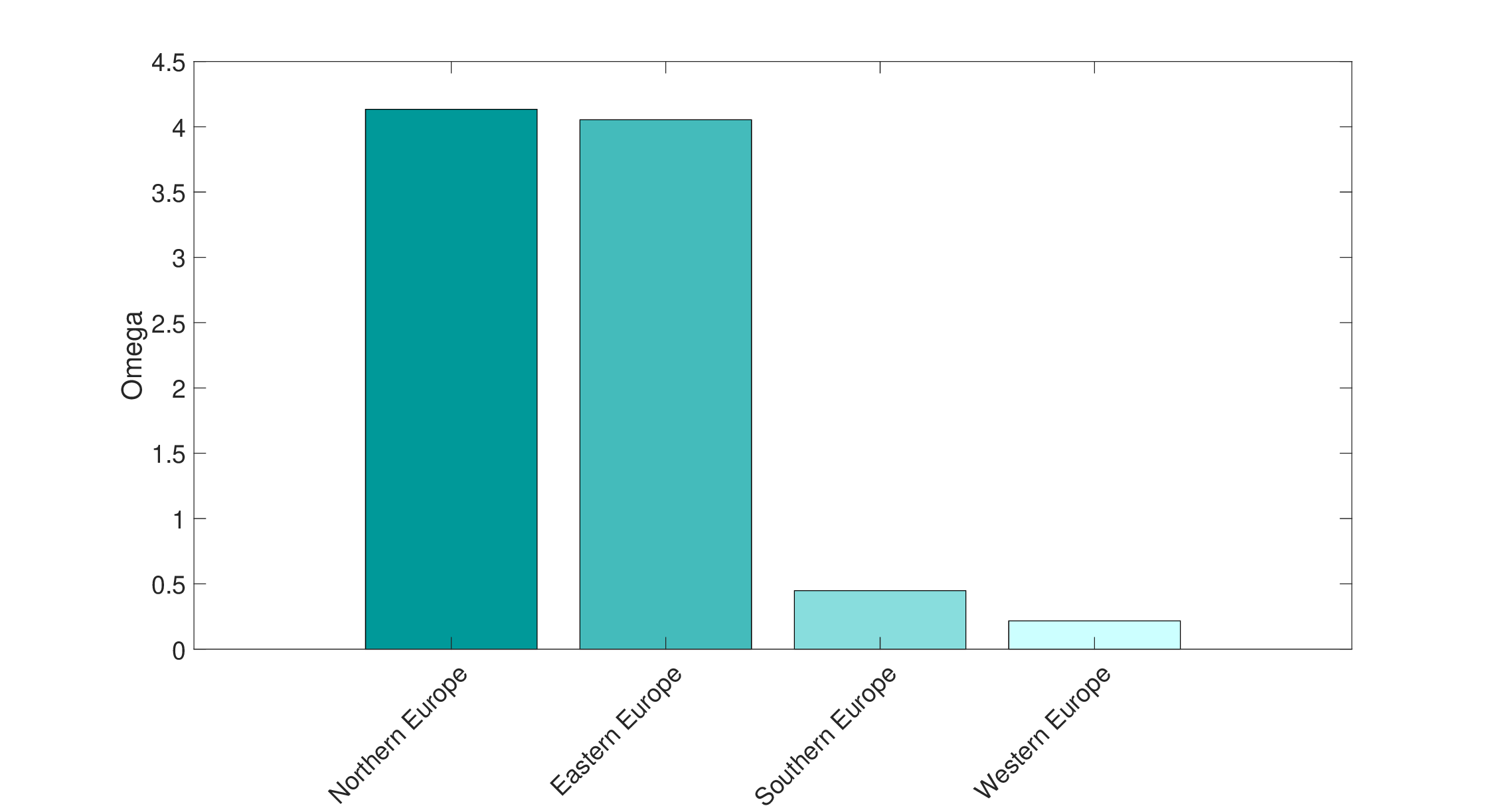}\\       \includegraphics[width=0.48\textwidth, height=4.5cm]{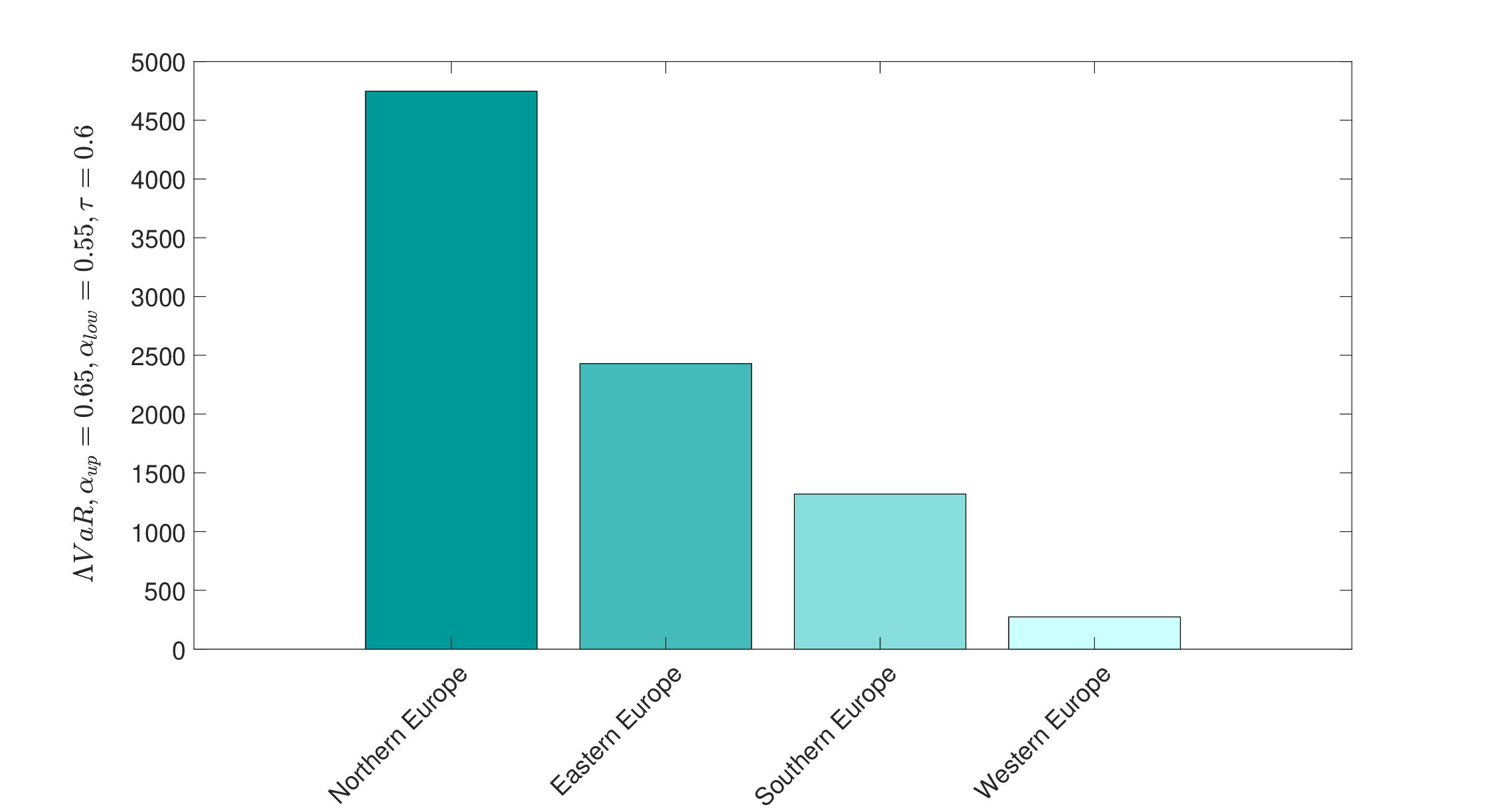}
     \includegraphics[width=0.48\textwidth, height=4.5cm]{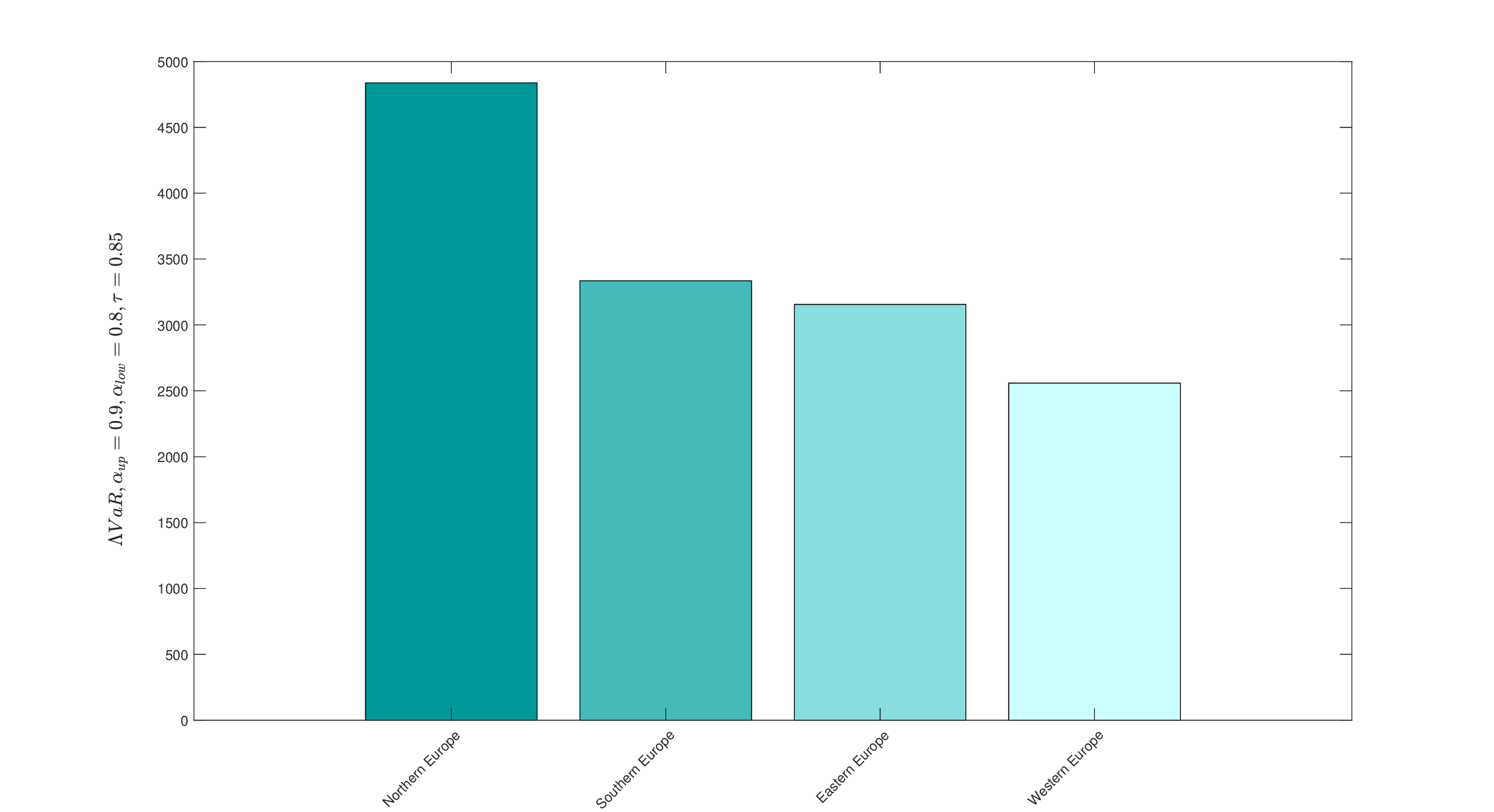}
     \caption{Ranking of European zones climate-related-losses based on RAROC, Omega and $\Lambda$-quantiles.}
	\label{fig:climate_certainty_RAPM}
\end{figure}

\textsc{RAROC}, \textsc{GLR}, $\Omega$, and the $55\% \text{--} 65\% \text{-} \Lambda$-quantiles consistently rank Northern and Eastern Europe as the most resilient zones. For Northern Europe, a \textsc{RAROC} of $0.40$ ($\mathbb{E}[Y] / \text{CVaR}_{\alpha}(Y)$) indicates 40\% smaller-than-mean losses per units of tail-risk exposure. This performance is substantiated by a \textsc{GLR} $\approx 3$, where net expected resilience, accounting for both positive and negative realizations, is three times the magnitude of the expected tail-vulnerability ($\mathbb{E}[Y^-]$). Similarly, an $\Omega \approx 4$ indicates that the expected value of strictly resilient outcomes ($\mathbb{E}[Y^+]$) is four times greater than that of excess-loss events ($\mathbb{E}[Y^-]$). The identity $\Omega - \text{GLR} = 1$ validates the distributional consistency of these results, suggesting that frequent below-average loss years provide a structural buffer that cushions the impact of extreme climate events.

The $55\% \text{--} 65\% \text{-} \Lambda$-quantiles further confirm Northern and Eastern Europe as the regions with the highest near-median resilience. Specifically, Northern Europe reports smaller-than-mean losses of approximately $4,700$ million euros, nearly double the $2,400$ million euros observed for Eastern Europe. In other words, in $55\% \text{--} 65\%$ of the years under observation (i.e., 24--28 years out of 44), Northern Europe maintained losses below the European mean by a margin of approximately $4,700$ million euros, whereas Eastern Europe's margin was $2,400$ million euros.

Northern Europe also emerges as the top performer at the $85\% \text{--} 95\% \text{-} \Lambda$ quantile level, maintaining a resilience margin of approximately $4,500$ million euros. Interestingly, at this higher quantile level, Southern Europe becomes the second-best performer with a margin of $3,500$ million euros, suggesting that while Southern Europe may struggle near the median, it possesses significant tail-resilience in its best-performing years. Conversely, these $\Lambda$-quantiles highlight the structural vulnerabilities of the least-performing regions: Southern and Western Europe at the $55\% \text{--} 65\%$ level, and Western and Eastern Europe at the $85\% \text{--} 95\%$ level.

\paragraph{Bibliometric-indices based ranking metrics}

We next apply a set of bibliometric-inspired ranking metrics, specifically the $h$-index, $h^2$-index, $h_\alpha$-index, and $w$-index, to evaluate regional resilience. Following the approach outlined in the previous section, each zone’s profile is represented by a vector $X^z = \{X^z_{(c)}\}_{c \in \mathcal{C}(z)}$, where components correspond to the sorted values, in decreasing order, of the normalized weighted expected resilience: $X^z_{c}:=\frac{\mathbb{E}[ w_cY^z_c]^+}{\min_{z,c} \mathbb{E}[ w_cY^z_c]^+ }$.
Here, $X^z_c$ measures the expected smaller-than-mean loss of each country relative to the European minimum benchmark. In this calculation, all zones are assigned equal weights, and the expectation is estimated using the $Y^z_c$ historical average\footnote{We acknowledge that differences in the number of countries per region may influence the attainable index values; this could be mitigated by introducing weights proportional to the country count in each zone.}.

Figure~\ref{fig:climate_certainty_h} and Table~\ref{tab:biblio_indices_clim} present the rankings across all four regions. Northern Europe consistently emerges as the top-performing region, ranking highest across all bibliometric indices. This confirms the results obtained from traditional risk-adjusted measures and $\Lambda$-quantiles, while providing deeper insight into the distribution of climate-risk resilience across countries within the zone. Specifically, Northern Europe’s $h$-index of $6$ indicates that at least six countries in the region maintain average resilience levels (smaller-than-average losses) at least six times greater than the continental minimum benchmark.

The remaining indices offer nuanced perspectives on the structural composition of this regional performance. The $h^2$-index emphasizes the contribution of ``top-tier" resilient countries, highlighting those that most significantly outperform the European average, whereas the $h_\alpha$-index ($\alpha=0.5$) provides a more inclusive measure by weighting the contribution of countries with more moderate resilience levels. Furthermore, the $w$-index accounts for the linear distribution of resilience across the zone, rewarding regions with a broad, consistent base of below-average losses. Collectively, these indices suggest that Northern Europe's dominance is not merely the result of a few outliers, but rather a robust, region-wide capacity to maintain damages below the continental mean.

\begin{figure}[H]
    \centering
    \includegraphics[width=0.48\textwidth, height=4.5cm]{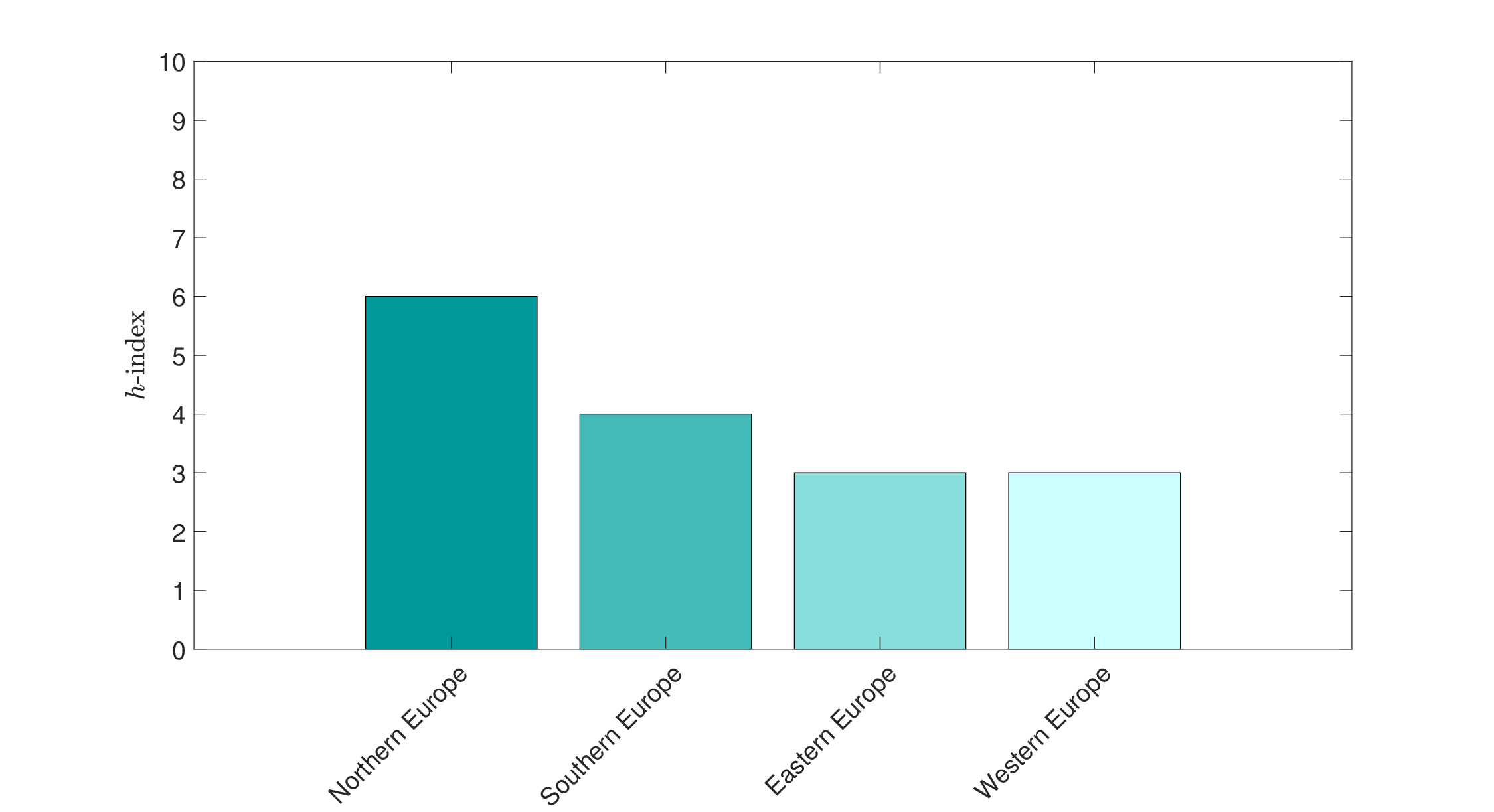}
     \includegraphics[width=0.48\textwidth, height=4.5cm]{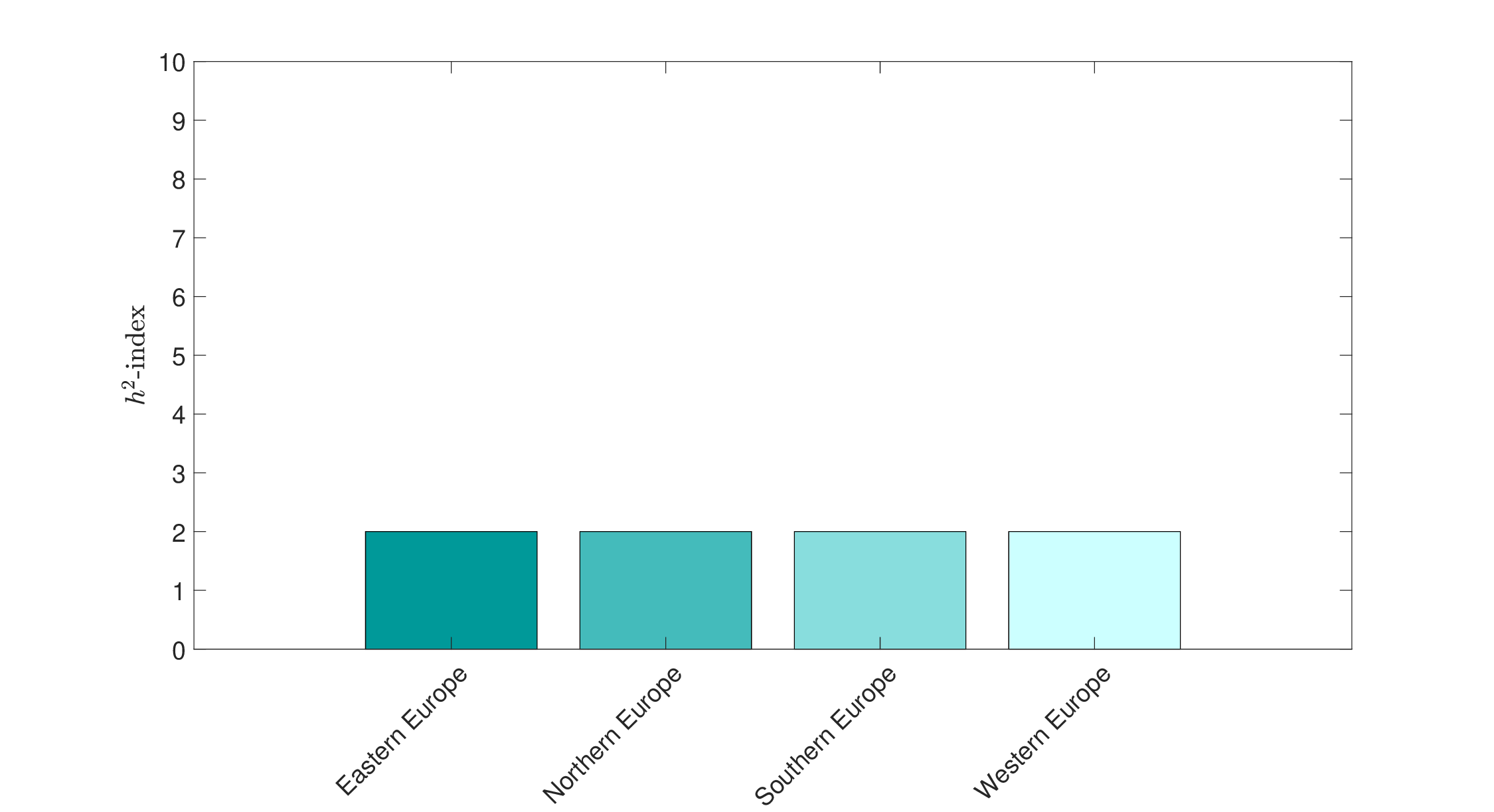}\\       \includegraphics[width=0.48\textwidth, height=4.5cm]{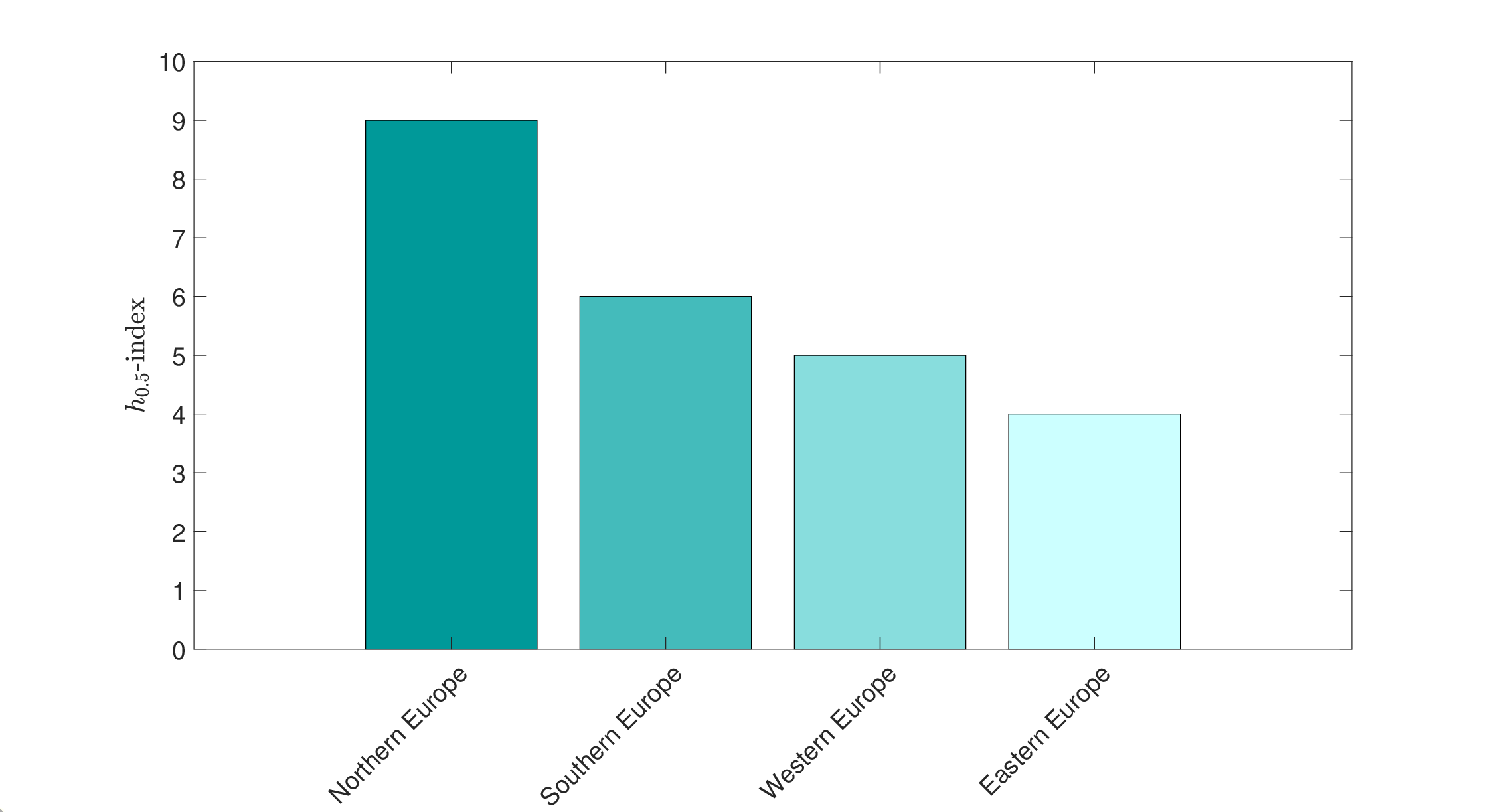}
     \includegraphics[width=0.48\textwidth, height=4.5cm]{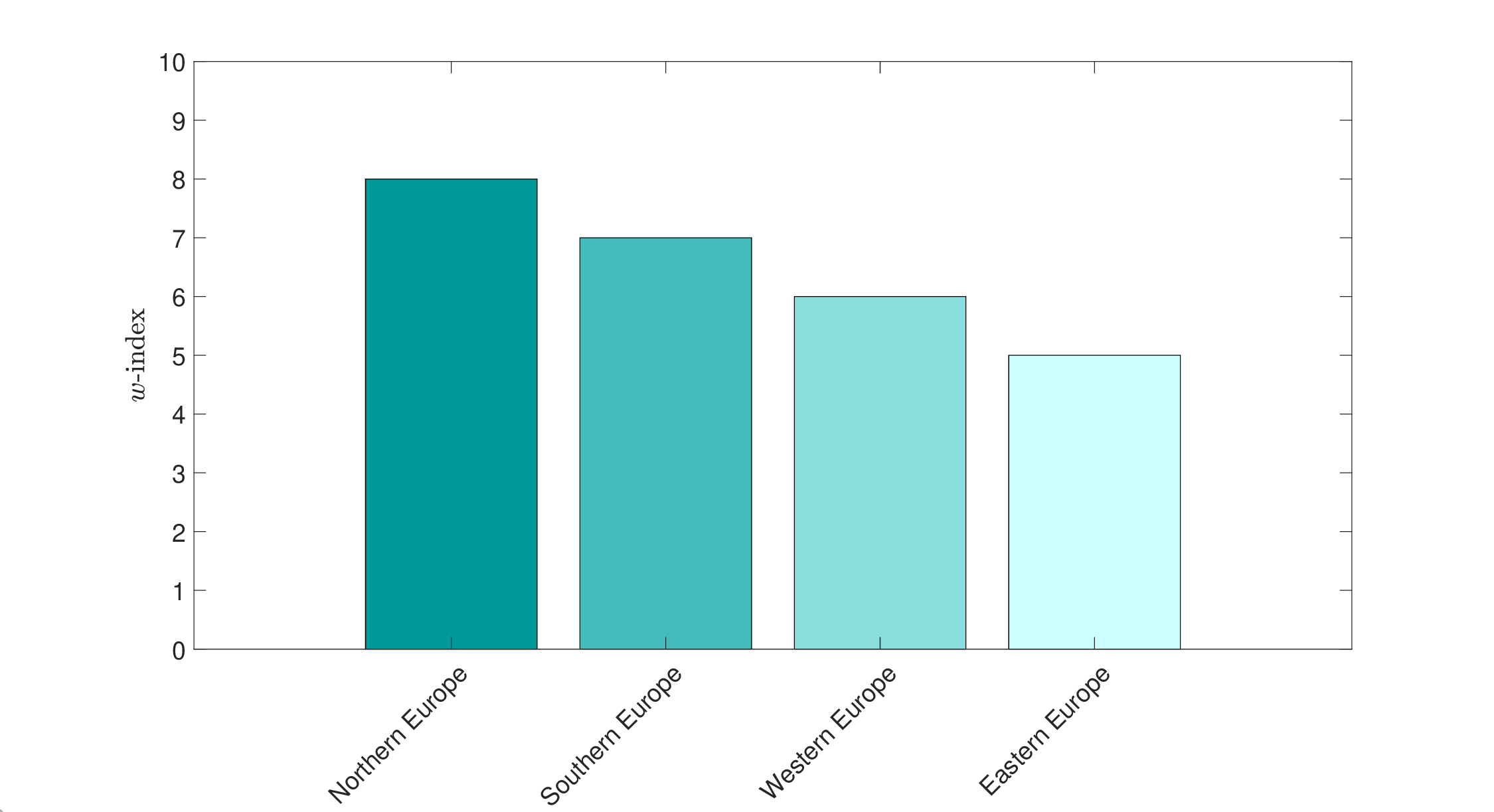}
     \caption{Ranking of European zones climate-related-losses based on bibliometric indices.}
	\label{fig:climate_certainty_h}
\end{figure}

\paragraph{Ranking using certainty equivalent} 
Finally, we implement the ranking metric $r^u$ introduced in Section~\ref{sub_sub:Exp_lossRM} by applying the certainty equivalent formula \eqref{def:sup-cert-eq} to the zonal smaller-than-mean losses $Y^z$ in equation \eqref{eq:Xzone}. For this purpose, we employ a piecewise-linear transformation function $u(y)$, tailored to reflect the socio-economic impact of different magnitudes of smaller-than-mean losses:
\[
u(y) = 
\begin{cases}
	y, & \text{if } y > \theta, \\
	\theta + (1+m)(y - \theta), & \text{if } y \leq \theta,
\end{cases}
\]
where $\theta$ is set at the 75th percentile of the smaller-than-mean losses. Under this specification, resilience levels falling below this threshold are penalized by a factor $m=0.1$, reflecting a decision-theoretic preference for regions that consistently exceed the benchmark by a significant margin (i.e. higher smaller-than-mean losses). This transformation allows us to recover a certainty equivalent value that represents a risk-adjusted level of resilience.

Figure~\ref{fig:climate_certainty} presents the ranking of the four zones based on these evaluations. As expected, Northern Europe emerges as the top-performing zone, exhibiting the highest certainty equivalent. This result indicates that Northern Europe not only achieves the largest average margin of climate-risk resilience above the European mean but also maintains this superiority most consistently when lower resilience outcomes are penalized. These findings reinforce the conclusions drawn from the other ranking metrics, providing a robust decision-theoretic validation of Northern Europe’s leading position in climate resilience.

\begin{figure}[H]
	\centering
	\includegraphics[width=0.8\textwidth]{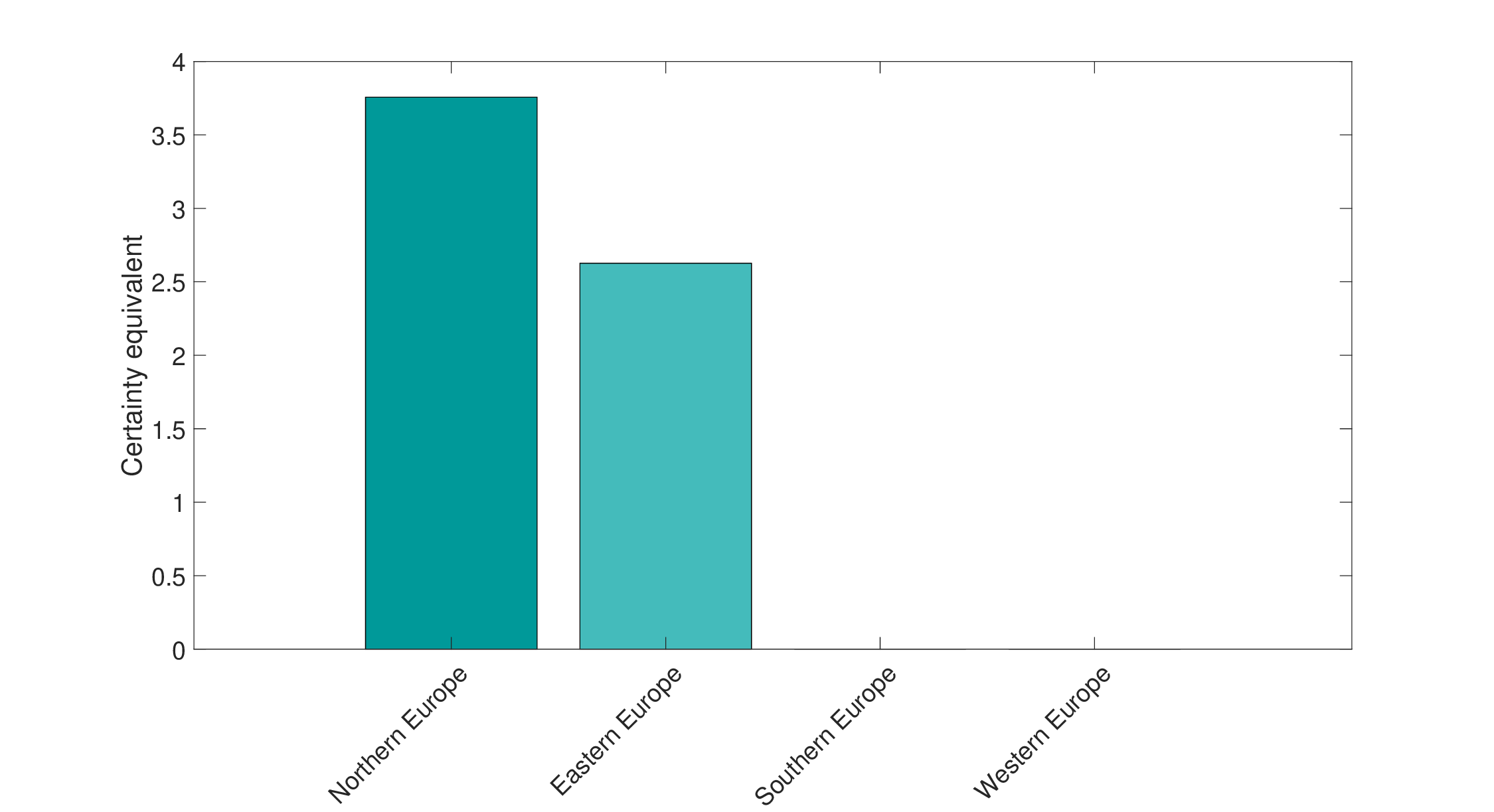}
	\caption{Ranking of zones smaller-than-mean losses based on certainty equivalent (values in thousands)} 	\label{fig:climate_certainty}
\end{figure}

Table \ref{tab:biblio_indices_clim} summarizes all the metrics used for ranking European zones based on climate risk exposure. 
\begin{table}[h!]
\centering
\small{
\begin{tabular}{lcccccccccc}
\toprule
{European } & RAROC & GLR & $\Omega$ & CE & $(0.8-0.9)$ &  $(0.55-0.65)$ & h & $h^2$ & $h_{0.5}$ & w\\
{ zones} &  & & &  &  $\Lambda$VaR &  $\Lambda$VaR & index & index & index & index   \\
\midrule\\

Eastern  & 0.33 & 3.05 & 4.053 & 2.62
& 3155.70 & 2429.70 & 3 & 2 & 4 & 5\\
Northern & 0.40 & 3.11 & 4.134 & 3.76 & 4837.05 & 4747.05 & 6 & 2 &9 & 8 \\
Southern & 0.00 & 0.00 & 0.448 & 0.00 & 3335.05 & 1319.05 & 4 & 2 & 6 & 7 \\
Western  & 0.00 & 0.00 & 0.216 & 0.00 & 2558.60 &  274.60 & 3 & 2 & 5 & 6 \\
\bottomrule
\end{tabular}
}
\caption{Summary of rankings metric values across all European zones. The values of the certainty equivalent (CE) are reported in thousands Euros}
\label{tab:biblio_indices_clim}
\end{table}

\subsection{Portfolio selection using ranking metrics}
In this section, we conduct performance‑based portfolio optimization\footnote{The portfolio optimization problem was solved using a genetic algorithm (GA) implemented in MATLAB's Global Optimization Toolbox. The optimization was performed using a population of 200 candidate solutions and a maximum of 500 generations, allowing for an extensive exploration of the feasible solution space.} using classical ratios (Section~\ref{sec:Rank_Metr_exam}) and the new $\Lambda$‑quantiles and bibliometric‑inspired metrics (Subsection~\ref{sub_biblio_index}). Each optimization problem maximizes a specific ranking metric: RAROC, GLR, Omega, $\Lambda$‑quantiles, the $h$‑index, $h^2$‑index, $h_\alpha$‑index, and $w$‑index. The RAROC's CVaR and the two-levels $\Lambda$‑quantiles are computed using a parametric Student’s‑t specification.
%For completeness, we also perform an in-sample and out-of-sample analysis to examine how the different portfolios are allocated and how they perform. This analysis is intended solely for demonstration purposes.

We aim to construct a portfolio of 14 assets, that are all constituents of the NASDAQ index. Short selling is not allowed. Daily returns are retrieved from Bloomberg for the period Fabruary 17, 2022, to Fabruary 02, 2023.
Descriptive statistics - exhibiting deviations from normality - are reported in Table~\ref{tab:stat} for the in-sample period.  
\begin{table}[h]
		\centering
	
		\label{tab:descriptive_stats}
        \begin{adjustbox}{width=0.7\textwidth}
		\begin{tabular}{lcccccc}
			\toprule
			\textbf{Ticker} & \textbf{annual $\mu$} & \textbf{annual $\sigma$} & \textbf{Skewness} & \textbf{Kurtosis} & \textbf{p-value} & \textbf{J-B Test} \\
			\midrule
OKE UN Equity  & 0.05610 & 0.54053 & -3.51744 & 55.02362 & 0.001 & 122639.60 \\
TRGP UN Equity & 0.16666 & 0.69805 & -5.73717 & 95.65334 & 0.001 & 387875.42 \\
HES UN Equity  & 0.28624 & 0.52940 & -1.33776 & 26.47235 & 0.001 & 24835.88 \\
PSX UN Equity  & 0.03209 & 0.43804 & -0.08203 & 10.18874 & 0.001 & 2300.87 \\
DG FP Equity   & 0.08901 & 0.32533 & -0.59715 & 19.94834 & 0.001 & 12845.92 \\
TTE FP Equity  & 0.04659 & 0.33086 & -0.70651 & 17.85473 & 0.001 & 9908.35 \\
DSY FP Equity  & 0.14483 & 0.30539 & 0.32109  & 9.75550  & 0.001 & 2049.19 \\
AI FP Equity   & 0.11312 & 0.21967 & -0.60074 & 10.99649 & 0.001 & 2909.74 \\
KER FP Equity  & 0.08767 & 0.32660 & -0.26381 & 7.13213  & 0.001 & 772.20  \\
RNO FP Equity  & -0.07849 & 0.49470 & -0.44762 & 10.76494 & 0.001 & 2718.76 \\
MSFT UQ Equity & 0.21853 & 0.31134 & -0.31669 & 10.87781 & 0.001 & 2779.51 \\
CRM UN Equity  & 0.05200 & 0.39157 & 0.23653  & 13.78702 & 0.001 & 5187.97 \\
INTC UQ Equity & -0.10226 & 0.38313 & -0.78455 & 15.38322 & 0.001 & 6933.38 \\
VZ UN Equity   & -0.07101 & 0.19648 & -0.12089 & 8.60786  & 0.001 & 1402.04 \\
			\bottomrule
		\end{tabular}
        \end{adjustbox}
        	\caption{Descriptive Statistics of portfolio components}
            \label{tab:stat}
	\end{table}

The in-sample data is used to obtain the optimal portfolios. In particular, we generate $1000$ random initial portfolio weight vectors, %using the Dirichlet distribution, 
ensuring that each vector lies in the unit simplex. For each initial guess $w_0$, we solve a separate portfolio optimization problem for each performance metric using a numerical global optimization routine. For each metric, the objective function is maximized over the feasible set using a derivative-free optimization algorithm. Among all solutions, we select the weight vector achieving the highest objective value; if multiple optimal solutions are found, their average is used.

Figure~\ref{fig:asset_allocation} displays the asset allocations for the optimized portfolios across the different ranking metrics. The stacked bar chart highlights how each metric leads to distinct portfolio compositions.
\begin{figure}[h]
    \centering
    \includegraphics[width=0.7\textwidth]{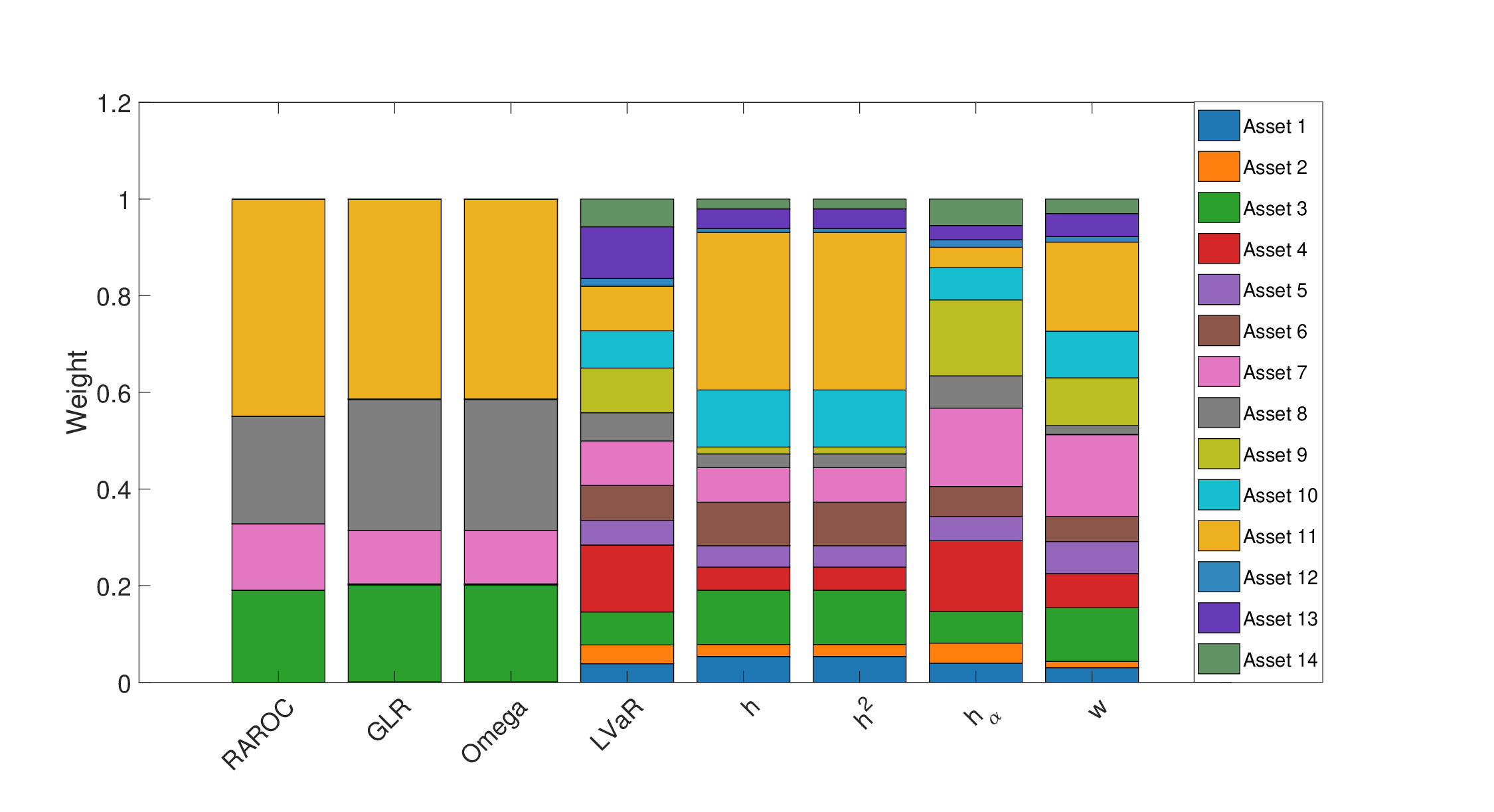}
    \caption{Portfolio allocation using different ranking metrics}
    \label{fig:asset_allocation}
\end{figure}

Figure~\ref{fig:asset_allocation} illustrates that the optimal portfolios derived by maximizing bibliometric indices tend to be more diversified than those obtained using traditional financial ratios. Each ranking metric guides portfolio construction toward specific objectives, with bibliometric-inspired measures aiming for a specific distribution of portfolio returns across assets. The 90\%-99\%-$\Lambda$ quantile optimization strategy builds a portfolio which maximize the return at the 90\%-99\% quantile level, i.e. the portfolio composition with the highest performance at 90\%-99\% confidence level. The $h$-index optimization strategy aims to construct a portfolio in which the largest possible number of assets $q$ each achieve returns of at least $q$ times the benchmark, capturing both the magnitude and breadth of performance across the asset set. The $h^2$-index, by contrast, favors portfolios where a small subset of top-performing assets concentrates the majority of the portfolio return, thus emphasizing strong performers. The $h_\alpha$ index with $\alpha = 0.5$ targets a middle ground by identifying the largest number of assets $q$ for which each asset achieves at least $0.5 \cdot q$ in average returns. This effectively enforces a minimum return floor across the portfolio, offering a balance between robustness and diversification. As expected, the asset allocation produced by the $h_\alpha$ index differs from those derived using the $h$ and $h^2$ indices due to its distinct return threshold logic. The $w$-index, on the other hand, aims to construct a portfolio with a linearly decreasing return distribution-starting from the highest return and tapering down across assets-thereby encouraging a smooth performance gradient across constituents. Finally, by modifying the performance curves $f_x$ in formula \eqref{eq:SRM}, one can design a variety of performance measures that shape the distribution of returns across portfolio assets in targeted ways. This flexibility allows for the construction of portfolios that align more closely with specific investment goals or risk preferences.

%While this behavior holds for the dataset analyzed here, a more comprehensive empirical evaluation across different market environments and datasets would be needed to assess the robustness of these findings. We leave this direction for future research.

%Once the optimal in-sample weights are computed, we assess their effectiveness using a rolling-window approach in a fixed out-of-sample evaluation. That is, we apply the in-sample optimal weights to a subsequent investment period-February 1, 2023 to January 31, 2024. Each strategy's portfolio is initialized with an investment of 100 euro, and its value evolves daily according to realized out-of-sample asset returns. \blue{[Dove mostriamo i risultati di questo?]}

%\blue{One could think about strategies maximixing RAROC while controlling for the $h$-index for instance.}

%\blue{[Lila metti una tabella che display performance indices and the other perf measures for each portfolio]}

\section{Conclusion}
This paper establishes a unified axiomatic foundation for ranking metrics, extending the classical theory of risk-adjusted performance measures and acceptability indices \citep{ChernyMadan2009, DrapeauKupper2013, Righi2024}. The framework replaces the traditional reward-per-unit-of-risk interpretation with a more general notion of performance rank, formalised through monotonicity and the newly introduced cash-quasiconcavity (or quasiconcavity at constants) as concave counterpart of the quasi-starshapedness by \cite{Han2024}. This property constitutes a minimal requirement ensuring that mixing a risky position with a deterministic payoff cannot worsen its ranking and captures the fundamental diversification intuition underlying most reward–risk measures.

We further elucidate its link with classical properties, showing that for maps normalised at zero, concavity -- and hence quasi-concavity -- at zero is equivalent to $[0,1]$-super-linearity, while cash-additivity bridges these local properties to cash-concavity and cash-quasiconcavity. Within this structure, any ranking metric admits dual representations via cash-convex acceptance sets or cash-quasiconvex risk measures, providing a broad framework encompassing traditional ratios (RAROC, GLR, Omega) as well as new, monetary and non-monetary performance measures. Beyond known measures, this framework introduces new families of performance metrics derived from expected-loss/ certainty-equivalent, and $\Lambda$-quantile risk measures, and, notably, a novel class of bibliometric-based ranking metrics inspired by the 
$h$-index and its extensions. These bibliometric metrics provide an entirely new approach to portfolio performance evaluation compared to  traditional ratio-based assessments by focusing on the distributional characteristics of portfolio constituents rather than their risk adjusted mean performance. Empirical analyses on portfolio performance and climate-risk resilience demonstrate the framework’s practical relevance and reveal that alternative ranking metrics can induce materially distinct rankings, implicating the importance of aligning ranking rules with specific performance objectives.

%Overall, the results highlight that ranking metrics constitute a flexible and unifying tool for comparative performance evaluation — capable of encompassing both monetary and non-monetary performance indicators within a single theoretical scheme. 
Future research could explore bibliometric implementations using different portfolio statistics and studying the relationship between 
$h$-index-based and quantile-based metrics, design portfolio optimisation strategies that maximise traditional ratios while controlling for bibliometric indices, and, extending the framework to dynamic, robust, or multivariate settings to further enhance its applicability to ESG-oriented decision environments.

\bibliographystyle{chicago}
\bibliography{MyrefHMPR}

@article{Artzner1999,
  title={Coherent measures of risk},
  author={Artzner, Philippe and Delbaen, Freddy and Eber, Jean-Marc and Heath, David},
  journal={Mathematical Finance},
  volume={9},
  number={3},
  pages={203--228},
  year={1999},
  publisher={Wiley Online Library}
}

@article{Bernardo2000,
  title={Gain, loss, and asset pricing},
  author={Bernardo, Antonio E and Ledoit, Olivier},
  journal={Journal of Political Economy},
  volume={108},
  number={1},
  pages={144--172},
  year={2000},
  publisher={The University of Chicago Press}
}

@article{Castagnoli2022,
  title={Star-shaped risk measures},
  author={Castagnoli, Erio and Cattelan, Giacomo and Maccheroni, Fabio and Tebaldi, Claudio and Wang, Ruodu},
  journal={Operations Research},
  volume={70},
  number={5},
  pages={2637--2654},
  year={2022},
  publisher={INFORMS}
}

@article{ChernyMadan2009,
  title={New measures for performance evaluation},
  author={Cherny, Alexander and Madan, Dilip},
  journal={The Review of Financial Studies},
  volume={22},
  number={7},
  pages={2571--2606},
  year={2009},
  publisher={Society for Financial Studies}
}

@article{DrapeauKupper2013,
  author    = {Drapeau, Samuel and Kupper, Michael},
  title     = {Risk preferences and their robust representation},
  journal   = {Mathematics of Operations Research},
  year      = {2013},
  volume    = {38},
  number    = {1},
  pages     = {28--62},
  doi       = {10.1287/moor.1120.0564}
}

@article{Frittelli2014,
  title={Risk measures on P(R) and value at risk with probability/loss function},
  author={Frittelli, Marco and Maggis, Marco and Peri, Ilaria},
  journal={Mathematical Finance},
  volume={24},
  number={3},
  pages={442--463},
  year={2014},
  publisher={Wiley Online Library}
}

@article{Frittelli2014SRM,
  title={Scientific research measures},
  author={Frittelli, Marco and Mancini, Loriano and Peri, Ilaria},
  journal={Journal of the Association for Information Science and Technology},
  volume={67},
  number={12},
  pages={3051--3063},
  year={2016},
  publisher={Wiley Online Library}
}

@article{FollmerSchied2002,
  title={Convex measures of risk and trading constraints},
  author={F{\"o}llmer, Hans and Schied, Alexander},
  journal={Finance and Stochastics},
  volume={6},
  pages={429--447},
  year={2002},
  publisher={Springer}
}

@misc{Han2024,
      title={Cash-subadditive risk measures without quasi-convexity}, 
      author={Xia Han and Qiuqi Wang and Ruodu Wang and Jianming Xia},
      year={2024},
      eprint={2110.12198},
      archivePrefix={arXiv},
      primaryClass={q-fin.RM},
      url={https://arxiv.org/abs/2110.12198}, 
}

@phdthesis{PeriThesis,
  author    = {Ilaria Peri},
  title     = {Quasi-convex Risk Measures and Acceptability Indices: Theory and Applications},
  school    = {Università degli Studi di Milano-Bicocca},
  year      = {2012},
  type      = {Ph.D. Thesis},
  address   = {Milan, Italy}
}

@article{Righi2024,
  title={Star-shaped acceptability indexes},
  author={Righi, Marcelo Brutti},
  journal={Insurance: Mathematics and Economics},
  volume={117},
  pages={170--181},
  year={2024},
  publisher={Elsevier}
}

@article{frittelli2002putting,
  title={Putting order in risk measures},
  author={Frittelli, Marco and Gianin, Emanuela Rosazza},
  journal={Journal of Banking \& Finance},
  volume={26},
  number={7},
  pages={1473--1486},
  year={2002},
  publisher={Elsevier}
}

@article{hirsch2005index,
  title={An index to quantify an individual's scientific research output},
  author={Hirsch, Jorge E},
  journal={Proceedings of the National academy of Sciences},
  volume={102},
  number={46},
  pages={16569--16572},
  year={2005},
  publisher={National Academy of Sciences}
}

@article{Kosmulski_ISSI2006,
  title={A new Hirsch-type index saves time and works equally well as the original h-index},
  author={Kosmulski, Marek and others},
  journal={ISSI newsletter},
  volume={2},
  number={3},
  pages={4--6},
  year={2006}
}

@article{Woeginger_MSS2008,
  title={An axiomatic characterization of the Hirsch-index},
  author={Woeginger, Gerhard J},
  journal={Mathematical Social Sciences},
  volume={56},
  number={2},
  pages={224--232},
  year={2008},
  publisher={Elsevier}
}

@article{Eck_Waltman_JI2008,
  title={Generalizing the h-and g-indices},
  author={van Eck, Nees Jan and Waltman, Ludo},
  journal={Journal of Informetrics},
  volume={2},
  number={4},
  pages={263--271},
  year={2008},
  publisher={Elsevier}
}

@article{frittelli2016scientific,
  title={Scientific research measures},
  author={Frittelli, Marco and Mancini, Loriano and Peri, Ilaria},
  journal={Journal of the Association for Information Science and Technology},
  volume={67},
  number={12},
  pages={3051--3063},
  year={2016},
  publisher={Wiley Online Library}
}

@article{frittelli2002R,
  title={Putting order in risk measures},
  author={Frittelli, Marco and Gianin, Emanuela Rosazza},
  journal={Journal of Banking \& Finance},
  volume={26},
  number={7},
  pages={1473--1486},
  year={2002},
  publisher={Elsevier}
}

@article{RosazzaSgarra2013,
  title={Acceptability indexes via-expectations: an application to liquidity risk},
  author={Rosazza Gianin, Emanuela and Sgarra, Carlo},
  journal={Mathematics and financial economics},
  volume={7},
  number={4},
  pages={457--475},
  year={2013},
  publisher={Springer}
}

@article{BelliniPeri2022,
  title={An axiomatization of $\Lambda$-quantiles},
  author={Bellini, Fabio and Peri, Ilaria},
  journal={SIAM Journal on Financial Mathematics},
  volume={13},
  number={1},
  pages={SC26--SC38},
  year={2022},
  publisher={SIAM}
}

@article{hansen2020starshaped,
  title={Starshaped sets},
  author={Hansen, Guillermo and Herburt, Irmina and Martini, Horst and Moszy{\'n}ska, Maria},
  journal={Aequationes mathematicae},
  volume={94},
  number={6},
  pages={1001--1092},
  year={2020},
  publisher={Springer}
}

@article{keating2002introduction,
  title={An introduction to omega},
  author={Keating, Con and Shadwick, William F},
  journal={AIMA newsletter},
  year={2002}
}

@article{zaik1996raroc,
  title={RAROC at Bank of America: from theory to practice},
  author={Zaik, Edward and Walter, John and Retting, Gabriela and James, Christopher},
  journal={Journal of applied corporate finance},
  volume={9},
  number={2},
  pages={83--93},
  year={1996},
  publisher={Wiley Online Library}
}

\end{document}